\documentclass[11pt,a4paper]{article}
\usepackage{amsmath,amsfonts,amssymb,amsthm,enumerate,graphicx,xcolor,url}
\usepackage{mathtools}
\usepackage{tikz,authblk}
\usepackage{bm}
\usepackage{multirow}
\usepackage{tabularx}
\usepackage{nameref}
\usepackage{graphicx} 
\usepackage[colorlinks=true, citecolor =magenta, linkcolor = red,pdfencoding=auto, psdextra]{hyperref}
\usepackage[left=2.00cm, right=2.00cm, top=2.00cm, bottom=2.00cm]{geometry}
\usepackage{orcidlink}
\usepackage{subfig}
\allowdisplaybreaks

\newtheorem{theorem} {{\textsf{Theorem}}}[section]
\newtheorem{proposition}[theorem]{{\textsf{Proposition}}}
\newtheorem{corollary}[theorem]{{\textsf{Corollary}}}

\newtheorem{remark}[theorem]{{\textsf{Remark}}}
\newtheorem{example}[theorem]{{\textsf{Example}}}
\newtheorem{lemma}[theorem]{{\textsf{Lemma}}}

\newcommand{\F}{\mathbb{F}}
	\date{}
\begin{document}
	\title{Optimal Pure Quantum $(r,\delta)$-LRCs from Euclidean and Hermitian Dual-Containing Cyclic Codes} 
\author{Astha Agrawal\orcidlink{0000-0002-4583-5613}, Shayan Srinivasa Garani\orcidlink{0000-0002-2459-1445}}
\markboth{}{qLRC} 
\maketitle
\vspace{-17mm}
	\begin{center}
	\noindent {\small Division of EECS, Indian Institute of Science, Bengaluru, 560012, India.}
	\end{center}
	\footnotetext[1]{{\em E-mail addresses:} \url{asthaagrawaliitd@gmail.com} (A. Agrawal)}
    \footnotetext[2]{{\em E-mail addresses:} \url{shayangs@iisc.ac.in} (S. S. Garani)}
\maketitle
\begin{abstract}
Locally recoverable codes (LRCs) combine global error correction with the ability to
repair a small number of erased coordinates by accessing only a limited number
of other coordinates. Motivated by their quantum counterparts, we construct
several families of optimal cyclic $(r,\delta)$-LRCs that
are Euclidean or Hermitian dual-containing.   In the Euclidean
case, we obtain four families of optimal dual-containing cyclic $(r,\delta)$-LRCs over $\mathbb{F}_q$ with a range of minimum distances extending beyond the local distance $\delta$. In the Hermitian case, we
derive analogous families over $\mathbb{F}_{q^2}$ that are Hermitian dual-containing, when
$(r+\delta-1)\mid(q^2-1)$.  In addition, we develop a distinct construction for the case $(r+\delta-1)\mid(q^2+1)$ using symmetric defining sets and odd $\delta$, which
yields optimal codes with minimum distances $\ell+2$, $\delta+2$,
$2\delta-2$, and $2\delta$. The Euclidean Calderbank-Shor-Steane (CSS) and Hermitian stabilizer
constructions then give corresponding quantum cyclic $(r,\delta)$-LRCs over~$\mathbb{F}_q$. Further, the resulting stabilizer codes are pure; they meet the relevant quantum Singleton-type bound, and are therefore
\textit{optimal}. We also provide a comparison with previous works, highlighting the parameter regimes and minimum distance ranges covered by our results that are not attained by existing constructions. Explicit examples illustrate the constructions and verify the
dual-containment and purity conditions.
\end{abstract}
Keywords:  Cyclic codes, Defining sets,  LRCs, Quantum LRCs.
\section{Introduction}\label{sec:introduction}
Locally recoverable codes (LRCs) were introduced to reduce the cost of repairing
failed storage nodes in distributed systems. In a classical code with locality
$r$, each erased coordinate can be reconstructed by accessing at most $r$ other
coordinates \cite{GopalanEtAl2012}. The more general notion of $(r,\delta)$-locality, introduced in~\cite{PrakashEtAl2012}, requires every coordinate to belong to a local code of length at most $r+\delta-1$ and minimum distance at least $\delta$. Consequently, any $\delta-1$ erasures within a local recovery group can be repaired without accessing the entire codeword. The central problem is to retain this local repair
capability without sacrificing global distance or rate. This trade-off is
quantified by the Singleton-type bound for $(r,\delta)$-LRCs \cite{PrakashEtAl2012}, and codes meeting
the bound with equality are called optimal. Numerous constructions of optimal LRCs have been developed using cyclic-code techniques \cite{ChenEtAl2018,FANG2020101650,TamoBarg2014,TamoEtAl2016}.

Cyclic codes are especially attractive in this setting. Their algebraic
structure permits compact implementation, and both locality and minimum
distance can be controlled through the placement of zeros of the generator
polynomial. The defining set approach to cyclic LRCs was developed in~\cite{TamoEtAl2016} and subsequently used in several constructions of optimal
cyclic LRCs
\cite{ChenEtAl2018,FANG2020101650}. 
In particular, Fang et al.~\cite{FANG2020101650} constructed optimal cyclic
$(r,\delta)$-LRCs of unbounded length with global minimum distance exceeding
the local distance $\delta$. However, their constructions are not designed
to satisfy dual-containment conditions, which are essential when 
classical codes are used to construct quantum CSS codes.

Quantum locally recoverable codes extend the notion of local erasure recovery
to quantum information. Their systematic study was initiated by Golowich and
Guruswami \cite{GolowichGuruswami2023}, who introduced quantum LRCs and
derived corresponding locality bounds and constructions. A quantum
$(r,\delta)$-LRC permits the recovery of up to $\delta-1$ erased qudits
within a recovery set containing at most $r+\delta-1$ qudits~\cite{GalindoEtAl2026}. More recently, several
constructions of quantum LRCs have been obtained from
dual-containing classical codes \cite{GalindoHernandoMatsumoto2026BCH,LiEtAl2025Hermitian,LuoEtAl2025,RajpurohitBhaintwal2026,SharmaRamkumarTamo2025,YangFuLu2026SmallFields}. In particular, Euclidean dual-containing
codes can be used through the CSS construction, whereas Hermitian
dual-containing codes over $\mathbb F_{q^2}$ give stabilizer codes over
$\mathbb F_q$. 

Recent works have considered cyclic constructions of quantum LRCs from several
different viewpoints. Luo et al.~\cite{LuoEtAl2025} obtained cyclic CSS
families for the case $\delta=2$, while Rajpurohit and Bhaintwal~\cite{RajpurohitBhaintwal2026} studied quantum $(r,\delta)$-LRCs arising from
Euclidean dual-containing cyclic codes. Galindo et al.~
\cite{GalindoHernandoMatsumoto2026BCH} employed BCH and homothetic-BCH codes
with Euclidean and Hermitian duality to construct  optimal pure quantum
$(r,\delta)$-LRCs. In these explicit BCH-based optimal families, the global
quantum minimum distance is essentially tied to the local distance
$\delta$. This motivates the search for cyclic quantum LRCs for which the
global minimum distance can be increased beyond $\delta$ while retaining
locality, dual containment, and Singleton-type optimality.

The simultaneous realization of these three properties is nontrivial.
A periodic defining set can guarantee $(r,\delta)$-locality, while additional
zeros can increase the global minimum distance. However, adding zeros may destroy dual containment. Our main approach is to
start with a periodic locality set and then carefully choose additional
cyclotomic cosets whose residues extend the required distance pattern while
preserving $Z\cap(-Z)=\varnothing$ in the Euclidean setting, and $Z\cap(-qZ)=\varnothing
$ in the Hermitian setting, where $Z$ is the defining set of a cyclic code. 
The Hermitian setting is particularly useful because it substantially enlarges
the available parameter range. When
$r+\delta-1\mid(q^2-1),
$
we obtain Hermitian analogues of the Euclidean constructions over
$\mathbb F_{q^2}$. Although the resulting classical  family parameters are analogous to Euclidean case, Hermitian dual containment
produces quantum stabilizer codes over $\mathbb F_q$ and allows parameter
regimes that are not accessible through the corresponding Euclidean
construction. Moreover, the case
$r+\delta-1\mid(q^2+1)$
exhibits a different defining set geometry.
This motivates a separate signed-residue construction and leads to further
families of Hermitian dual-containing cyclic LRCs. The principal contributions of this paper are summarized as follows.

\begin{itemize}
\item We construct four families of Euclidean dual-containing optimal cyclic
$(r,\delta)$-LRCs over $\mathbb{F}_q$. Starting from the periodic locality set,
we consider a block of consecutive zeros, a one-zero extension, a two-zero
extension, and a general extension within the prescribed distance
$\delta+2\leq d\leq 2\delta$. These constructions yield classical distances
$\ell+1$, $\delta+1$, $\delta+2$, and $d$, respectively.

\item For $r+\delta-1\mid(q^2-1)$, we develop Hermitian analogues of the
Euclidean families of  optimal cyclic $(r,\delta)$-LRCs over
$\mathbb{F}_{q^2}$. The condition for Hermitian dual containment is handled
directly. 

\item For $r+\delta-1\mid(q^2+1)$, we introduce a different signed-residue
construction for odd $n$ and odd~$\delta$. The identity
$q^2\equiv-1\pmod{r+\delta-1}$ naturally pairs opposite residues into
$q^2$-cyclotomic cosets. This leads to optimal Hermitian dual-containing codes
with minimum distances $\ell+2$, $\delta+2$, $2\delta-2$, and $2\delta$.

\item Applying the CSS and Hermitian stabilizer constructions to the above
dual-containing classical codes yield corresponding families of optimal pure quantum
cyclic $(r,\delta)$-LRCs over $\mathbb F_q$. 
\end{itemize}
The paper is organized as follows. In Section~\ref{sec:preliminaries}, we briefly review cyclic defining sets, Euclidean and Hermitian duality, classical and
quantum $(r,\delta)$-locality, and the relevant Singleton-type bound. In Section \ref{sec:euclidean-cyclic-lrc}, we propose four families of optimal pure quantum $(r,\delta)$-LRC over $\mathbb{F}_q$ based on CSS codes. Subsequently, we investigate a few families of optimal pure quantum cyclic $(r,\delta)$-LRCs based on Hermitian dual-containing constructions, first for
divisors of $q^2-1$ and then for divisors of $q^2+1$ in Section \ref{sec:hermitian-cyclic-lrc}. Each group of results is
accompanied by explicit examples and a summary table. In Section \ref{sec:comparison} we compare the proposed constructions with prior works.  Finally, Section~\ref{sec:conclusion} concludes the paper and
discusses several directions for future work.

\section{Preliminaries}\label{sec:preliminaries}

Throughout the paper, $q$ denotes a prime power and $\mathbb{F}_q$ is the
finite field with $q$ elements. For a positive integer $n$, write
$\mathbb{Z}_n=\mathbb{Z}/n\mathbb{Z}$. Unless stated otherwise, all sets, and congruences are considered modulo $n$, and we assume
$\gcd(n,q)=1$.

\subsection{Linear codes, duality, and cyclic defining sets}

An $[n,k,d]_q$ linear code $C$ is a $k$-dimensional subspace of
$\mathbb{F}_q^n$ with minimum Hamming distance $d$, where $d=\min\{\operatorname{wt}(\bm x): \bm{0}\neq \bm x\in C\}$. Its Euclidean dual is defined as
$$
C^{\perp}=\{(x_0,x_1,\dots,x_{n-1})\in\mathbb{F}_q^n:
\sum_{i=0}^{n-1}x_i c_i=0\text{ for every }(c_0,c_1,\dots,c_{n-1})\in C\}.
$$
The code $C$ is called Euclidean dual-containing if $C^{\perp}\subseteq C$. The Hermitian dual of a code $C$ over $\mathbb{F}_{q^2}$ is defined as 
$$
C^{\perp_H}=\{(x_0,x_1,\dots,x_{n-1})\in\mathbb{F}_{q^2}^n:
\sum_{i=0}^{n-1}x_i c_i^q=0\text{ for every }(c_0,c_1,\dots,c_{n-1})\in C\}.
$$
The code $C$ is Hermitian
dual-containing if $C^{\perp_H}\subseteq C$.

A linear code $C$ of length $n$ is cyclic  if, for every codeword
$(c_0, c_1,\dots,c_{n-1}) \in C$, its cyclic shift $(c_{n-1},c_0,\dots,c_{n-2})$ also belongs to $C$. Equivalently, after identifying a vector $(c_0, c_1,\dots,c_{n-1})$ with its associated polynomial $c_0+c_1x+\dots+c_{n-1}x^{n-1}$,
$C$ is an ideal of $\mathbb{F}_q[x]/\langle x^n-1\rangle$. Let $\alpha$ be a
primitive $n$th root of unity in a suitable extension field. The complete
defining set of a cyclic code $C$ is defined as
$$
Z(C)=\left\{i\in\mathbb{Z}_n:g(\alpha^i)=0\right\},
$$
where $g(x)$ is the generator polynomial of $C$.
The defining set $Z(C)$ is a union of $q$-cyclotomic cosets modulo $n$. More precisely, for $i\in\mathbb{Z}_n$, the $q$-cyclotomic coset containing $i$ is \[ C_i=\{iq^j:j\geq0\}\pmod{n}. \]  Thus, $Z(C)=\bigcup_{\ell=1}^{s} C_{i_\ell}$ for some representatives $i_1,\ldots,i_s$. For a cyclic code over $\mathbb{F}_{q^2}$, the defining set is instead a union of $q^2$-cyclotomic cosets modulo $n$. Moreover, $ \dim(C)=n-|Z(C)|.$ These standard properties of cyclic codes and their defining sets can be found in \cite{HuffmanPless2003}.

For $A\subseteq\mathbb{Z}_n$, define $
-A=\{-a\pmod n:a\in A\},\ -qA=\{-qa\pmod n:a\in A\}.$
The following defining set characterizations \cite{HuffmanPless2003,AlyKlappeneckerSarvepalli2007} will be used repeatedly  :
$$
C^{\perp}\subseteq C
\quad\Longleftrightarrow\quad
Z(C)\cap(-Z(C))=\varnothing,
$$
and, for a cyclic code over $\mathbb{F}_{q^2}$,
$$
C^{\perp_H}\subseteq C
\quad\Longleftrightarrow\quad
Z(C)\cap(-qZ(C))=\varnothing.
$$
We will also use the following BCH bound and its generalized form \cite{BoseRayChaudhuri1960,Hocquenghem1959}.
\begin{lemma}[BCH bound]
     Let $C$ be an
$[n,k,d]_q$ cyclic code with defining set $Z$. Let m be a positive
integer such that $\gcd(m,n)=1$ and let $u,b\geq0$ be integers.
If the set $\{u+m(b+i) : i\in[0,\lambda-2]\}$ is contained in $Z$, then $d\geq \lambda$.
\end{lemma}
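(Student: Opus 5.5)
The plan is to prove the generalized BCH bound in two stages. First reduce it to the classical consecutive-zeros case by a coordinate permutation that preserves Hamming weight. Then prove the consecutive case with a Vandermonde determinant argument.

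\textbf{Step 1: reduction to step $1$.} Since $\gcd(m,n)=1$, the element $\beta=\alpha^{m}$ is again a primitive $n$th root of unity. Take a nonzero codeword $c(x)=\sum_j c_jx^j\in C$ and suppose for contradiction that $\operatorname{wt}(c)=w\le \lambda-1$. Let its support be $\{j_1,\dots,j_w\}$. Every element of $Z$ is a root of $g$, and hence of $c$. So the hypothesis gives
$$\sum_{t=1}^{w} c_{j_t}\,\alpha^{j_t(u+mb)}\,\alpha^{j_t m i}=0 \quad\text{for } i=0,1,\dots,\lambda-2.$$

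\textbf{Step 2: Vandermonde argument.} Set $y_t=c_{j_t}\alpha^{j_t(u+mb)}$ and $x_t=\beta^{j_t}$. Each $y_t$ is nonzero, because $c_{j_t}\neq0$ and $\alpha^{j_t(u+mb)}$ is a root of unity. The $x_t$ are pairwise distinct, because $\beta$ has order exactly $n$ and the $j_t$ are distinct modulo $n$. The displayed equations say $\sum_t y_t x_t^{i}=0$ for $i=0,\dots,\lambda-2$. Since $w\le\lambda-1$, we may keep only the first $w$ of these equations, which gives $V\mathbf y=\mathbf 0$ with $V=(x_t^{i})_{0\le i\le w-1,\,1\le t\le w}$. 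The matrix $V$ is a Vandermonde matrix with determinant $\prod_{s<t}(x_t-x_s)\neq0$. Hence $\mathbf y=\mathbf 0$, contradicting $y_t\neq0$. Therefore every nonzero codeword has weight at least $\lambda$, that is, $d\ge\lambda$.

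\textbf{Main obstacle.} The only delicate point is the distinctness of the $x_t=\alpha^{mj_t}$. This is exactly where $\gcd(m,n)=1$ is used: $\alpha^{m}$ is then primitive of order $n$, so $mj_s\equiv mj_t\pmod n$ forces $j_s\equiv j_t$. Note also that the shift $u+mb$ is harmless: it only rescales the unknowns by nonzero constants. The edge case $\lambda\le1$ is trivial, since the hypothesis is then vacuous and the conclusion $d\ge\lambda$ holds automatically.
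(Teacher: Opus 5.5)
Your proof is correct and complete. With $\beta=\alpha^{m}$ of order exactly $n$, the $\lambda-1$ vanishing conditions become $\sum_t y_t x_t^{i}=0$ with every $y_t\neq 0$ and the $x_t=\beta^{j_t}$ pairwise distinct. The $w\times w$ Vandermonde subsystem, which only requires $w-1\le\lambda-2$, then forces $\mathbf y=\mathbf 0$.

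The paper gives no proof of this lemma; it quotes it from \cite{BoseRayChaudhuri1960,Hocquenghem1959}, and your argument is the standard one. One cosmetic point: the ``coordinate permutation'' announced in your plan is never actually used. The substitution $\beta=\alpha^{m}$ already does the reduction, since relative to the primitive root $\beta$ the defining set is $m^{-1}Z$, which contains the $\lambda-1$ consecutive integers $m^{-1}u+b+i$. You can therefore drop that sentence or rephrase it this way.
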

This observation, together with the defining sets of the Euclidean and Hermitian duals, yields the following useful lemma.
\begin{lemma}\label{dual distance-complement}
Let $C$ be a cyclic code of length $n$ with complete defining set
$Z\subseteq \mathbb Z_n$, and write $Z^c=\mathbb Z_n\setminus Z$.
If $C$ is a cyclic code over $\mathbb F_q$ (respectively, over $\mathbb{F}_{q^2}$) and $Z^c$ contains $t$ consecutive integers then $d(C^\perp)\ge t+1$ (respectively, $d(C^{\perp_H})\ge t+1$).
\end{lemma}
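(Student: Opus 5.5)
The plan is to compute the defining set of the dual code explicitly in terms of $Z^c$ and then apply the BCH bound (Lemma on the BCH bound, with $m=1$) to the dual code.

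\emph{Euclidean case.} For a cyclic code $C$ over $\mathbb F_q$ with generator polynomial $g(x)$, the dual $C^\perp$ is cyclic and generated by the reciprocal of the check polynomial $h(x)=(x^n-1)/g(x)$. The roots of $h$ are $\alpha^i$ with $i\in Z^c$, so the roots of its reciprocal are $\alpha^{-i}$. Hence the complete defining set of $C^\perp$ is
\[
Z(C^\perp)=-(Z^c)=\{-i \bmod n : i\in Z^c\}.
\]
This is also consistent with the stated characterization $C^\perp\subseteq C\iff Z\cap(-Z)=\varnothing$.

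Suppose $Z^c\supseteq\{b,b+1,\dots,b+t-1\}$. Then $Z(C^\perp)$ contains $\{-b-t+1,\dots,-b\}$, which is again a run of $t$ consecutive residues modulo $n$. Writing this run as $\{u+(0+i): i\in[0,t-1]\}$ with $u=n-b-t+1$ reduced mod $n$ (so $u\ge0$), $m=1$, and $\lambda=t+1$, the BCH bound gives $d(C^\perp)\ge t+1$.

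\emph{Hermitian case.} Let $C$ be over $\mathbb F_{q^2}$. The first step is the standard identity $C^{\perp_H}=(C^{(q)})^{\perp}$, where $C^{(q)}$ is obtained by raising every coordinate to the $q$-th power. Equivalently, $C^{\perp_H}=(C^\perp)^{(q)}$, since $x\cdot c^q=0$ if and only if $x^q\cdot c=0$ in characteristic $p$ with $x^{q^2}=x$. The conjugate code $C^{(q)}$ is cyclic with generator $g^{(q)}$, whose roots are $\alpha^{qi}$ for $i\in Z$, so $Z(C^{(q)})=qZ$. Combining this with the Euclidean computation gives
\[
Z(C^{\perp_H})=-q(Z^c).
\]
This matches the stated criterion $Z\cap(-qZ)=\varnothing$.

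The image of the run $\{b,\dots,b+t-1\}$ is $\{-qb-qj : 0\le j\le t-1\}$, an arithmetic progression with difference $m=-q$, or $n-q$ after reducing mod $n$. Because $\gcd(q,n)=1$, we have $\gcd(m,n)=1$, which is exactly the hypothesis the generalized BCH bound requires. Taking $u=0$, $b'$ chosen so that $mb'\equiv -qb-q(t-1)$, and running $i$ upward gives $t$ terms, hence $d(C^{\perp_H})\ge t+1$. Alternatively, one can simply note that $(\cdot)^{(q)}$ preserves Hamming weight, so $d(C^{\perp_H})=d(C^\perp)$, and reduce directly to the Euclidean argument applied over $\mathbb F_{q^2}$.

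The main point needing care is the correct identification of the dual defining sets, especially the sign and the factor $q$ in the Hermitian case. The weight-preservation shortcut avoids the factor $q$ entirely, so I would present that route: it leaves only the Euclidean computation $Z(C^\perp)=-Z^c$, after which the BCH bound with $m=1$ finishes the proof.
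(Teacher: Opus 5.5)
Your proof is correct. The Euclidean half is exactly the paper's argument: $Z(C^\perp)=-Z^c$ contains a run of $t$ consecutive zeros, and the BCH bound finishes.

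For the Hermitian half, the paper takes the route you sketch first. It uses $Z(C^{\perp_H})=-qZ^c$ and applies the generalized BCH bound to the progression $-qa,-q(a+1),\ldots,-q(a+t-1)$ with difference $-q$, invoking $\gcd(q,n)=1$.

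Your preferred route is genuinely different. You use $C^{\perp_H}=(C^\perp)^{(q)}$ and the fact that coordinatewise Frobenius preserves Hamming weight. This gives $d(C^{\perp_H})=d(C^\perp)$, and the Euclidean computation over $\mathbb{F}_{q^2}$ with the ordinary BCH bound ($m=1$) then finishes.

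What your route buys:
\begin{itemize}
\item No generalized BCH bound and no factor-$q$ bookkeeping are needed.
\item It shows the Hermitian clause is really a corollary of the Euclidean one applied over $\mathbb{F}_{q^2}$.
\item You actually justify both dual defining-set formulas (via the reciprocal of the check polynomial and the conjugate code $C^{(q)}$), which the paper only asserts.
\end{itemize}
The paper's version has the advantage of staying inside the $-qZ$ language it uses for all its Hermitian dual-containment checks.

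One small slip in your first Hermitian route: with $m=-q$ you need $mb'\equiv -qb$ (i.e.\ $b'=b$), not $mb'\equiv -qb-q(t-1)$. As written, increasing $i$ walks off the end of the progression; your choice would be right for $m=q$. Since you present the weight-preservation route, this does not affect the proof.
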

\begin{proof}
For the Euclidean dual, the defining set is $Z(C^\perp)=-Z^c.$ Hence, if $    a,a+1,\ldots,a+t-1\in Z^c,$ then $Z(C^\perp)$ contains $-a,-a-1,\ldots,-a-(t-1),$ which are $t$ consecutive defining zeros. The BCH bound gives
$d(C^\perp)\ge t+1$. For the Hermitian dual, we have $Z(C^{\perp_H})=-qZ^c.$ Thus $Z(C^{\perp_H})$ contains $-qa,-q(a+1),\ldots,-q(a+t-1),$ which form an arithmetic progression of length $t$ and common difference $-q$. Since $\gcd(q,n)=1$, the  BCH bound yields $d(C^{\perp_H})\ge t+1.$
\end{proof}
More generally, the same conclusions hold if the corresponding elements
of $Z^c$ form an arithmetic progression with common difference $b$
satisfying $\gcd(b,n)=1$.

\subsection{Classical $(r,\delta)$-locally recoverable codes}

Let $C$ be an $[n,k,d]_q$ linear code. For $J\subseteq\{1,\ldots,n\}$, let
$C|_J$ denote the puncturing of $C$ to the coordinates in $J$, i.e., $C|_J=\{ (c_j)_{j\in J} : (c_0,c_1,\ldots,c_{n-1})\in C \}$. The code $C$
has  $(r,\delta)$-locality if, for every coordinate $i$, there exists
a set $J_i$ containing $i$ such that
$$
|J_i|\leq r+\delta-1
\qquad\text{and}\qquad
d(C|_{J_i})\geq\delta.
$$
Thus, any set of at most $\delta-1$ erasures inside $J_i$ can be recovered
using only the remaining coordinates in that local set. The case $\delta=2$ reduces to the usual notion of locality $r$. The parameters of an $[n,k,d]$  linear code with $(r,\delta)$-locality satisfy the
Singleton-type bound \cite{PrakashEtAl2012}
\begin{equation}
d\leq n-k+1-
\left(\left\lceil\frac{k}{r}\right\rceil-1\right)(\delta-1).
\label{singleton_lrc}
\end{equation}
An $(r,\delta)$-LRC attaining equality in
\eqref{singleton_lrc} is called \textit{optimal}.
Define the locality set
\[
L=\bigcup_{i=1}^{\delta-1}L_{\ell_i}=\bigcup_{i=1}^{\delta-1}
\left\{m(r+\delta-1)+\ell_i:0\leq m\leq u-1\right\}.
\]
We recall the following two defining set criteria that will be used throughout the paper.

\begin{proposition}\cite[Proposition 2]{TianZhu2022}\label{(r,delta)locality}
Suppose that $\gcd(n,q)=1$ and $(r+\delta-1)\mid n$. Let $C$ be a cyclic code of length $n$ with complete defining set $Z$. Let $\ell_1<\ell_2<\cdots<\ell_{\delta-1}$ be an arithmetic progression with common difference $b$ satisfying $\gcd(b,n)=1.$  If $L\subseteq Z$
then $C$ has $(r,\delta)$-locality.
\end{proposition}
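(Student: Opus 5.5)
The plan is to build, for every coordinate, an explicit local repair set: the coset of a subgroup of index $u=n/(r+\delta-1)$. Set $s=r+\delta-1$ and $\beta=\alpha^{s}$, so $\beta$ is a primitive $u$th root of unity. For each $i\in\mathbb{Z}_n$ let
\[
J_i=\{\,i+ju \pmod n : 0\le j\le s-1\,\},
\]
so $|J_i|=s=r+\delta-1$. It then suffices to show $d(C|_{J_i})\ge\delta$. Since $C$ is cyclic, $C|_{J_i}$ is a coordinate-shifted copy of $C|_{J_0}$. So the argument reduces to the single set $J_0=\{0,u,2u,\dots,(s-1)u\}$.

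The key step is to produce enough parity checks for $C|_{J_0}$. Every codeword $c(x)=\sum_t c_t x^t$ satisfies $c(\alpha^{z})=0$ for all $z\in L\subseteq Z$. Take the $u$ zeros in a fixed class $L_{\ell_i}=\{\ell_i+ms\}$ and sum against the characters of $\mathbb{Z}_u$:
\[
\sum_{m=0}^{u-1} c(\alpha^{\ell_i+ms})\,\beta^{-mk}=0 .
\]
Expanding $c$ and using orthogonality of $\beta$-characters, $\sum_m \beta^{m(t-k)}=u\,[t\equiv k \bmod u]$, gives
\[
\sum_{j=0}^{s-1} c_{k+ju}\,\alpha^{\ell_i(k+ju)}=0
\]
for every $k$. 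Since $\gcd(u,q)=1$, $u\ne 0$ in $\mathbb{F}_q$, so the factor $u$ can be divided out. Taking $k=0$ and writing $\gamma=\alpha^{u}$, a primitive $s$th root of unity, the restricted word $(c_{ju})_{j}$ satisfies
\[
\sum_j c_{ju}\gamma^{\ell_i j}=0, \qquad i=1,\dots,\delta-1 .
\]

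Next I will read this restriction as a cyclic code of length $s$ with zeros $\gamma^{\ell_1},\dots,\gamma^{\ell_{\delta-1}}$. The exponents $\ell_1,\dots,\ell_{\delta-1}$ form an arithmetic progression with difference $b$. From $\gcd(b,n)=1$ and $s\mid n$ we get $\gcd(b,s)=1$. The generalized BCH bound stated earlier, applied in length $s$ with $\lambda-2=\delta-2$, then gives minimum distance at least $\delta$. The restricted code is contained in this code, so $d(C|_{J_0})\ge\delta$. A zero word is not an issue, because the convention for distance of a subcode applies.

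The main obstacle is the character-sum step. One must check that the sum picks out exactly the coordinates $\equiv k \pmod u$ with the right twist $\alpha^{\ell_i t}$, and that the restricted code is genuinely defined over a field containing $\gamma$. The codeword entries lie in $\mathbb{F}_q$, while the check vectors lie in an extension. This is harmless, since the BCH bound applies to subfield-subcodes.

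Finally, for general $J_i$ the same computation with $k=i\bmod u$ gives checks $\sum_j c_{k+ju}\alpha^{\ell_i ju}\cdot\alpha^{\ell_i k}=0$. The nonzero common factor $\alpha^{\ell_i k}$ can be dropped, which yields the same distance bound.
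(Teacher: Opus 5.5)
Your proof is correct. The paper gives no proof of this proposition; it is quoted from Tian--Zhu. Your argument is the standard one behind it:
\begin{itemize}
\item The repair group of coordinate $k$ is the coset $k+u\mathbb{Z}_n$, of size $s=r+\delta-1$.
\item The $u$ conditions $c(\alpha^{\ell_i+ms})=0$, $0\le m\le u-1$, form a length-$u$ discrete Fourier transform of the quantities $\sum_{j} c_{k+ju}\alpha^{\ell_i(k+ju)}$. Your character sum inverts it, which is legitimate because $u$ is invertible in $\mathbb{F}_q$, so each of these quantities vanishes.
\item The resulting checks $\sum_j c_{k+ju}\gamma^{\ell_i j}=0$ have $\gamma=\alpha^u$ of order $s$ and $\gcd(b,s)=1$. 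The BCH (equivalently, Vandermonde) argument in length $s$ then gives weight at least $\delta$.
\end{itemize}

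The other common presentation uses the factorization $\prod_{m=0}^{u-1}(x-\alpha^{\ell_i+ms})=x^u-\alpha^{u\ell_i}$. The code with zero set exactly $L$ is then generated by $h(x^u)$, where $h(y)=\prod_{i}(y-\gamma^{\ell_i})$. That code is visibly the $u$-fold interleaving of the length-$s$ cyclic code generated by $h$, and $C$ sits inside it. This version shows the local structure at a glance. However, $h$ has coefficients in $\mathbb{F}_q(\gamma)$ unless $L$ is closed under multiplication by $q$, so one must argue through a subfield subcode. Your version uses only the vanishing of codewords at $\alpha^z$ for $z\in L$. It therefore matches the bare hypothesis $L\subseteq Z$ with no closure assumption on $L$.

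One cosmetic point: in your last paragraph, $i$ denotes both the coordinate and the index of $\ell_i$. Rename one of them.
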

\begin{lemma}\cite[Lemma 2]{TianZhu2022}\label{distance}
Let $C$ be a cyclic code as defined in Proposition~\ref{(r,delta)locality}, and let $d'\leq 2\delta$. Suppose that there exists a set $D=\{t_1,t_2,\ldots,t_{d'-1}\}\subseteq Z$ such that their residues modulo $r+\delta-1$ form an
arithmetic progression with common difference $a$,  satisfying $\gcd(a,r+\delta-1)=1$. Then the minimum distance of $C$ satisfies $d(C)\geq d'$.
\end{lemma}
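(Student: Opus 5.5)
The plan is to combine the locality set $L$ with the BCH bound applied residue-wise. Write $n=u(r+\delta-1)$ and $N=r+\delta-1$. Put $Z\supseteq D$, and let $\mathcal{R}=\{t \bmod N: t\in D\}$ be the set of residues, which is an arithmetic progression of length $d'-1$ with difference $a$ coprime to $N$. The key observation is that $L$ already contains complete residue classes modulo $N$: the set $L_{\ell_i}$ is the full class $\{x\in\mathbb{Z}_n: x\equiv \ell_i \pmod N\}$. Let $c\in C$ be a nonzero codeword with polynomial $c(x)$. I would decompose $c$ according to the cosets of the subgroup of $\mathbb{Z}_n$ of index $N$, i.e.\ split the coordinates $\{0,\dots,n-1\}$ into the $N$ classes $j\pmod N$ (equivalently, the classes $j+N\mathbb{Z}$ of positions). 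These are exactly the recovery sets $J_i$ from Proposition~\ref{(r,delta)locality}, each of size $u$... wait, the recovery sets have size $N$, so I should use the other decomposition: positions $\{i, i+u, \dots, i+(N-1)u\}$, of which there are $u$, each of size $N$.

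For each class $S_i=\{i+ju: 0\le j\le N-1\}$, the restriction $c|_{S_i}$ behaves like a codeword of length $N$ whose "spectrum" at residue $s$ is $\sum_j c_{i+ju}\beta^{sj}$, where $\beta=\alpha^u$ is a primitive $N$th root of unity. The vanishing of $c$ at the full class $\{x\equiv \ell_k \pmod N\}$ (all $u$ of its elements) forces, after a standard discrete Fourier inversion over the $u$ lifts, that each local word $c|_{S_i}$ has zero spectrum at $\ell_1,\dots,\ell_{\delta-1}$. By the BCH bound in length $N$, this shows that every nonzero local word has weight $\ge\delta$. This is precisely the locality argument, and I will reuse it.

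Now I would split into two cases. If $c$ is nonzero on at least two classes $S_i$, then $\operatorname{wt}(c)\ge 2\delta\ge d'$ and we are done. Otherwise $c$ is supported on a single class $S_i$, so $c(x)=x^i f(x^u)$ with $f$ of degree $<N$. Then $c(\alpha^t)=\alpha^{it}f(\beta^{t})$ depends only on $t\bmod N$, through $f(\beta^{t \bmod N})$. Hence $c(\alpha^t)=0$ for $t\in D$ gives $f(\beta^s)=0$ for all $s\in\mathcal{R}$, which is an arithmetic progression $\{s_0+ka\}_{k=0}^{d'-2}$ with $\gcd(a,N)=1$. The BCH bound (the first Lemma, applied to the length-$N$ cyclic code containing $f$, with $m=a$) then gives $\operatorname{wt}(f)\ge d'$, and therefore $\operatorname{wt}(c)=\operatorname{wt}(f)\ge d'$.

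The main obstacle is the first case, namely justifying cleanly that vanishing on the full residue classes in $L$ transfers to each local restriction. I would argue this by writing $c(x)=\sum_{i=0}^{u-1}x^i f_i(x^u)$. Evaluating at the $u$ points $\alpha^{\ell_k+mN}$, $m=0,\dots,u-1$, gives a Vandermonde (DFT of size $u$ in the root $\alpha^N$) system in the values $\alpha^{i\ell_k}f_i(\beta^{\ell_k})$. Since the system is invertible, it forces $f_i(\beta^{\ell_k})=0$ for all $i$ and $k$, and then the BCH bound with difference $b$ gives weight $\ge\delta$ for each nonzero $f_i$.
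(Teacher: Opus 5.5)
Your argument is correct and complete. The paper gives no proof of its own for this lemma; it imports it from \cite[Lemma 2]{TianZhu2022}. Your route is the standard argument for this kind of statement in the cyclic-LRC literature:
\begin{enumerate}
\item Split the positions into the $u$ local groups $S_i=\{i+ju:0\le j\le N-1\}$, each of size $N=r+\delta-1$.
\item Because the classes $L_{\ell_k}$ are full residue classes modulo $N$, the size-$u$ Vandermonde inversion in $\gamma=\alpha^{N}$ gives $f_i(\beta^{\ell_k})=0$ for all $i,k$. The BCH bound with step $b$ then shows that every nonzero local word has weight at least $\delta$.
\item If two local words are nonzero, then $\operatorname{wt}(c)\ge 2\delta\ge d'$; this is exactly where the hypothesis $d'\le 2\delta$ enters.
\item Otherwise $c(x)=x^if(x^u)$, so $c(\alpha^t)=\alpha^{it}f(\beta^{t})$. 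The zeros indexed by $D$ collapse to the residue progression $\mathcal{R}$, and the length-$N$ BCH bound with step $a$ gives $\operatorname{wt}(c)=\operatorname{wt}(f)\ge d'$.
\end{enumerate}

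A few points to tidy up in a write-up:
\begin{itemize}
\item Drop the false start about the classes modulo $N$. Those classes have size $u$ and are not the repair groups.
\item State the inherited facts you use implicitly: $\gcd(b,N)=1$ and $\gcd(N,q)=1$, which follow from $\gcd(b,n)=1$, $\gcd(n,q)=1$ and $N\mid n$.
\item Justify applying the BCH lemma to the single vector $f$. Its coefficients lie in the base field, so $f$ belongs to the length-$N$ cyclic code whose defining set is the cyclotomic closure of $\mathcal{R}$. Alternatively, run the Vandermonde argument directly on $f$.
\item If $d'-1\ge N$, the progression $\mathcal{R}$ exhausts $\mathbb{Z}_N$ and forces $f=0$. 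The single-group case then cannot occur, and the conclusion still holds.
\end{itemize}
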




\subsection{Quantum $(r,\delta)$-locally recoverable codes}

A $q$-ary quantum code with parameters
$\bigl[\!\bigl[n,\kappa,d_\mathcal{Q}\bigr]\!\bigr]_q$
is a $q^{\kappa}$-dimensional subspace of
$(\mathbb{C}^{q})^{\otimes n}$ that detects every error of weight less than
$d_\mathcal{Q}$. It is a quantum $(r,\delta)$-LRC if, for every
coordinate $1\leq i\leq n$, there exists a set $J_i$ containing $i$, with
$|J_i|\leq r+\delta-1$, such that erasures on any $\delta-1$ coordinates of $J_i$ can be reversed by a recovery operation supported on $J_i$
\cite{GalindoEtAl2026}. We use the following two standard stabilizer constructions
\cite{CalderbankShor1996,Steane1996,AshikhminKnill2001} throughout the paper.

\begin{itemize}
\item If $C$ is an $[n,k,d]_q$ Euclidean dual-containing code, then the CSS construction gives a stabilizer code $\mathcal{Q}(C)$ with parameters
$
\bigl[\!\bigl[n,2k-n,d_\mathcal{Q}\bigr]\!\bigr]_q,
\text{ where }
d_\mathcal{Q}=\min\{\operatorname{wt}(\bm{c}):\bm{c}\in C\setminus C^{\perp}\}.
$

\item If $C$ is an $[n,k,d]_{q^2}$ Hermitian dual-containing code, then the Hermitian construction gives a stabilizer code $\mathcal{Q}(C)$ with parameters
$
\bigl[\!\bigl[n,2k-n,d_\mathcal{Q}\bigr]\!\bigr]_q,
\text{ where }
d_\mathcal{Q}=\min\{\operatorname{wt}(\bm{c}):\bm{c}\in C\setminus C^{\perp_H}\}.
$
\end{itemize}
A quantum code $\mathcal{Q}(C)$ is said to be \textit{pure} if its minimum distance $d_{\mathcal{Q}}$ is equal to
the minimum distance of the code $C$.

A Euclidean dual-containing code over $\mathbb{F}_q$, or a Hermitian dual-containing code over $\mathbb{F}_{q^2}$, gives rise to a quantum stabilizer code over $\mathbb{F}_q$. Under a mild condition on the minimum distance of the corresponding dual code, the classical and quantum notions of $(r,\delta)$-local recoverability are equivalent. We recall the following results from \cite{GalindoEtAl2026}, which forms the principal bridge between the classical constructions developed in this paper and the resulting quantum locally recoverable codes. 
\begin{theorem}\cite[Theorem 29]{GalindoEtAl2026}\label{classical-quantum-locality}
Let $C$ be  a linear code of length $n$ over $\mathbb{F}_{q^2}$ (respectively, over $\mathbb{F}_q$) satisfying $ C^{\perp_H}\subseteq C$ ( respectively, $C^{\perp}\subseteq C)$. Let $\mathcal{Q}(C)$ denote the associated quantum stabilizer code over $\mathbb{F}_q$. Assume that $ \delta\leq d\bigl(C^{\perp_H}\bigr) $ ( respectively, $\delta\leq d\bigl(C^{\perp}\bigr). $ Then, $\mathcal{Q}(C)$ is a quantum $(r,\delta)$-LRC if and only if $C$ is a classical $(r,\delta)$-LRC. \end{theorem}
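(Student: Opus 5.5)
The statement is \cite[Theorem 29]{GalindoEtAl2026}. I would prove it by turning the quantum erasure-recovery condition on a set $J_i$ into a statement about the classical codes $C$ and $C^{\perp_H}$ (respectively $C^\perp$) restricted to $J_i$. The general stabilizer-code fact I would rely on is the Knill--Laflamme condition for erasures: an erasure on a set $E$ of coordinates is correctable iff no element of $C\setminus C^{\perp_H}$ is supported inside $E$. Equivalently, every codeword of $C$ supported in $E$ lies in the stabilizer code $C^{\perp_H}$. A recovery operation supported on $J_i$ exists for all erasures $E\subseteq J_i$ with $|E|\le\delta-1$ iff the same condition holds for the code punctured outside the complement, i.e.\ relative to $J_i$.

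First I would derive the quantum local condition. Recovery using only qudits in $J_i$ is possible iff the logical information that is erased cannot be distinguished by operators outside $J_i$. By the cleaning lemma and standard local-correctability arguments, this is equivalent to: for every $E\subseteq J_i$ with $|E|\le \delta-1$, any $x\in C$ with $\operatorname{supp}(x)\subseteq E$ equals, on $J_i$, an element of $C^{\perp_H}$ restricted to $J_i$. I would state this as: the dimension of $C^{\perp_H}$ shortened on $J_i^c$, compared with the shortening on $J_i\setminus E$, behaves as for the classical code. The cleanest formulation is via dimensions. The classical condition $d(C|_{J_i})\ge\delta$ says that $C|_{J_i}$ loses no dimension when puncturing any $\delta-1$ further coordinates of $J_i$. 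By duality this is equivalent to a statement about the shortened code $(C^{\perp_H})_{J_i}$: no nonzero vector of the dual of $C|_{J_i}$ has weight $\le \delta-1$. That dual equals $C^{\perp_H}$ shortened to $J_i$.

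Second, I would compare the two conditions. For the quantum condition the relevant objects are $C^{\perp_H}$ shortened on $J_i$ versus $C$ shortened on $J_i$. The forbidden vectors are those lying in the shortening of $C$ but not of $C^{\perp_H}$, with support in some $E\subseteq J_i$, $|E|\le\delta-1$. The hypothesis $\delta\le d(C^{\perp_H})$ enters exactly here. Any vector of weight $\le\delta-1$ cannot lie in $C^{\perp_H}$ unless it is zero. So a low-weight vector counts as a bad vector in the quantum sense iff it is a nonzero low-weight vector in the classical sense. This makes the two conditions coincide for each $J_i$, which yields both implications simultaneously. The Euclidean case is identical, with $C^\perp$ in place of $C^{\perp_H}$ and the CSS code.

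The main obstacle is the first step: carefully establishing that local quantum recoverability on $J_i$ is equivalent to the shortened/punctured linear-algebra condition. This needs an argument with the stabilizer formalism, namely the symplectic weight and the Hermitian-to-symplectic correspondence, restricted to a subsystem. I would attempt it by decomposing the Pauli group on $J_i$ versus $J_i^c$ and applying the cleaning lemma, but I expect bookkeeping of degenerate operators to be delicate. Degenerate operators are exactly what the hypothesis on $d(C^{\perp_H})$ removes.
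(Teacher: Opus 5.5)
The paper does not prove this statement; it is imported from \cite[Theorem 29]{GalindoEtAl2026}. Judged on its own, your proposal has a genuine gap. It sits in the step you flag as the obstacle, and the reformulations you substitute for that step are wrong.

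\textbf{The quantum criterion.} Correctability of $E$ does not give recovery \emph{supported on} $J_i$. For that, $E$ must be decoupled from $J_i^c$ together with a reference system $R$ purifying the code. Write $C(A)$ for the codewords of $C$ supported in $A$. For stabilizer codes the decoupling condition becomes
\[
C(E\cup J_i^c)=C^{\perp_H}(E)+C(J_i^c).
\]
The relevant vectors are therefore the $x\in C$ with $\operatorname{supp}(x)\cap J_i\subseteq E$, and these may be nonzero on $J_i^c$. Modulo $C(J_i^c)$ they are exactly the vectors of the \emph{punctured} code $C|_{J_i}$ supported in $E$. The correct criterion is: every $y\in C|_{J_i}$ with $\operatorname{supp}(y)\subseteq E$ lies in $(C|_{J_i})^{\perp_H}$, which is $C^{\perp_H}$ shortened on $J_i$. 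Since $(C|_{J_i})^{\perp_H}\subseteq C|_{J_i}$, this is the erasure condition for the local stabilizer code defined by $C|_{J_i}$.

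Your first paragraph gestures at this. Your later condition (``$x\in C$ with $\operatorname{supp}(x)\subseteq E$'') and your second step (``$C$ shortened on $J_i$'') use the shortened code of $C$ instead. Under $d(C^{\perp_H})\ge\delta$ that condition only asks that no nonzero codeword of $C$ of weight $\le\delta-1$ be supported inside $J_i$. This holds for every $J_i$ once $d(C)\ge\delta$.

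Take $\delta=2$ and $J_i=\{i\}$. Your forbidden set is empty. But $d(C^{\perp_H})\ge2$ forces $C|_{\{i\}}=\mathbb{F}_{q^2}$ and $(C|_{\{i\}})^{\perp_H}=0$. So $d(C|_{\{i\}})=1$, and qudit $i$ is maximally entangled with the other qudits and $R$. It cannot be restored by acting on qudit $i$ alone.

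\textbf{The classical side.} $d(C|_{J_i})\ge\delta$ is not equivalent to ``$(C|_{J_i})^{\perp_H}$ has no nonzero vector of weight $\le\delta-1$''. That is a dual-distance statement. Under your hypothesis it holds for every $J_i$, because shortening cannot lower minimum distance, and the same $J_i=\{i\}$ example refutes it. No duality is needed: $d(C|_{J_i})\ge\delta$ just says no nonzero $y\in C|_{J_i}$ has weight $\le\delta-1$.

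\textbf{How to close the argument.} With the correct criterion the comparison is one line. Nonzero vectors of $(C|_{J_i})^{\perp_H}$ are restrictions of vectors of $C^{\perp_H}$ supported in $J_i$, so they have weight $\ge\delta$. Hence the quantum condition for all $|E|\le\delta-1$ is exactly $d(C|_{J_i})\ge\delta$, with the same $J_i$.

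What you still have to prove is the criterion itself, and the cleaning lemma is not the right tool.
\begin{itemize}
\item \emph{Sufficiency:} measure the stabilizers supported in $J_i$. Two error patterns on $E$ with the same local syndrome differ by a vector of $\bigl((C|_{J_i})^{\perp_H}\bigr)^{\perp_H}=C|_{J_i}$ supported in $E$, using that symplectic and Hermitian duals agree for $\mathbb{F}_{q^2}$-linear codes. By the criterion that difference is a stabilizer.
\item \emph{Necessity:} data processing gives $I(E:RJ_i^c)=0$. For stabilizer states this means $|S_{E\cup Y}|=|S_E|\,|S_Y|$ with $Y=R\cup J_i^c$, where $S_A$ is the set of stabilizer elements of the purified state supported in $A$. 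That is the displayed identity above.
\end{itemize}
The CSS case is the same argument with $C^{\perp}$ in each of the $X$ and $Z$ components.
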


\begin{remark}
For all dual-containing constructions considered in this paper, the dual distance requirement $( \delta\leq d(C^{\perp_H}) \ \text{or}\ \delta\leq d(C^{\perp}) )$ follows directly. Since 
$
C^{\perp}\subseteq C,\  d(C^{\perp})\geq d(C)
\ \text{or }\
C^{\perp_H}\subseteq C,\  d(C^{\perp_H})\geq d(C)
$
together with $d(C)\geq\delta$. As all of our constructions satisfy \(d(C)\geq\delta\), therefore, we do not verify this condition separately in each construction.
\end{remark}
We next recall the Singleton-type bound for quantum $(r,\delta)$-LRCs
obtained from dual-containing classical codes \cite{GalindoEtAl2026}.

\begin{theorem}\cite[Theorem 31]{GalindoEtAl2026}
\label{thm:singleton_qlrc}
Let $C$ satisfy the hypotheses of
Theorem~\ref{classical-quantum-locality} and suppose that
$
\dim(C)=\frac{n+\kappa}{2}.
$ If $C$ is a classical $(r,\delta)$-LRC, then the associated quantum $(r,\delta)$-LRC  has parameters
$
\bigl[\!\bigl[n,\kappa,d_{\mathcal Q}\geq d(C)\bigr]\!\bigr]_q
$
and satisfies
$$
\kappa+2d(C)
+2\left(
\left\lceil\frac{n+\kappa}{2r}\right\rceil-1
\right)(\delta-1)
\leq n+2.
$$
\end{theorem}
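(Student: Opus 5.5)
The plan is to deduce the inequality from the classical Singleton-type bound \eqref{singleton_lrc}, applied to $C$ itself, after substituting $k=\dim(C)=\frac{n+\kappa}{2}$. The locality statement and the bound $d_{\mathcal Q}\ge d(C)$ are settled first, so that the remaining work is a short algebraic rewriting.

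\textbf{Locality and parameters.} By hypothesis $C$ is dual-containing (Euclidean or Hermitian) and $\delta\le d(C^{\perp})$ (respectively $\delta\le d(C^{\perp_H})$). Since $C$ is a classical $(r,\delta)$-LRC, Theorem~\ref{classical-quantum-locality} shows that $\mathcal Q(C)$ is a quantum $(r,\delta)$-LRC. The CSS construction (respectively the Hermitian construction) gives $\mathcal Q(C)$ the parameters $[[n,2k-n,d_{\mathcal Q}]]_q$ with $k=\dim C$. Hence $\kappa=2k-n$, which is exactly the assumption $k=\frac{n+\kappa}{2}$. Moreover, $d_{\mathcal Q}$ is a minimum of weights over $C\setminus C^{\perp}$ (respectively $C\setminus C^{\perp_H}$), and this set is contained in $C\setminus\{0\}$. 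Therefore $d_{\mathcal Q}\ge d(C)$.

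\textbf{The bound.} Since $C$ is an $[n,k,d(C)]$ code with $(r,\delta)$-locality, \eqref{singleton_lrc} gives
\[
d(C)\le n-k+1-\left(\left\lceil\frac{k}{r}\right\rceil-1\right)(\delta-1).
\]
Substituting $k=\frac{n+\kappa}{2}$ gives $n-k=\frac{n-\kappa}{2}$ and $\frac{k}{r}=\frac{n+\kappa}{2r}$. Multiplying the inequality by $2$ and moving terms to the left side yields
\[
\kappa+2d(C)+2\left(\left\lceil\frac{n+\kappa}{2r}\right\rceil-1\right)(\delta-1)\le n+2,
\]
which is the stated inequality.

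\textbf{Main point of care.} The argument has no real obstacle; the only thing to watch is that the bound is stated in terms of $d(C)$ and not $d_{\mathcal Q}$. For that reason no purity assumption is needed, and the ceiling term can be carried over unchanged because $k/r=(n+\kappa)/(2r)$ holds exactly.
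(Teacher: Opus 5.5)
Your argument is correct. The paper gives no proof of this statement; it quotes it from \cite{GalindoEtAl2026}, so there is no in-paper argument to compare against.

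Your derivation is the natural self-contained one. It also matches the paper's remark after the theorem, that the bound is phrased via $d(C)$ and becomes \eqref{singleton_qlrc} only under purity. The steps are sound:
\begin{itemize}
\item Locality comes from Theorem~\ref{classical-quantum-locality}.
\item $d_{\mathcal Q}\ge d(C)$ follows from $C\setminus C^{\perp}\subseteq C\setminus\{0\}$.
\item The inequality is \eqref{singleton_lrc} for $C$ with $k=\frac{n+\kappa}{2}$ substituted, multiplied by $2$ and rearranged.
\end{itemize}
This also works in the Hermitian case over $\mathbb F_{q^2}$, since \eqref{singleton_lrc} does not depend on the field size.

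Two small remarks. First, the hypothesis $\delta\le d(C^{\perp})$ (respectively $\delta\le d(C^{\perp_H})$) is used only to transfer locality, not for the inequality itself. Second, if $\kappa=0$, then $C=C^{\perp}$ (respectively $C=C^{\perp_H}$) and $C\setminus C^{\perp}$ is empty. In that case $d_{\mathcal Q}\ge d(C)$ holds only through the standard convention $d_{\mathcal Q}:=d(C)$, so it is worth saying so.
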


This  bound is expressed in terms of the minimum distance of the
classical code $C$. If $\mathcal{Q}(C)$ is pure, then
$d_{\mathcal Q}=d(C)$, and the bound becomes
\begin{equation}
 \kappa+2d_{\mathcal Q}
+2\left(
\left\lceil\frac{n+\kappa}{2r}\right\rceil-1
\right)(\delta-1)
\leq n+2.  
\label{singleton_qlrc}
\end{equation}

A pure quantum $(r,\delta)$-LRC is called \emph{optimal} if it attains the
above Singleton-type bound \eqref{singleton_qlrc} with equality.
In particular, if the underlying classical $(r,\delta)$-LRC is optimal and
$d(C^{\perp})>d(C)$, or respectively
$d(C^{\perp_H})>d(C)$, then the associated quantum code is pure, has
$d_Q=d(C)$, and attains~\eqref{singleton_qlrc} with equality.

\section{Euclidean dual-containing cyclic $(r,\delta)$-LRCs}
\label{sec:euclidean-cyclic-lrc}
In this section, we construct four main families of Euclidean dual-containing optimal cyclic $(r,\delta)$-LRCs. As a consequence, these constructions yield four corresponding families of pure quantum $(r,\delta)$-LRCs that are optimal with respect to the Singleton-type bound \eqref{singleton_qlrc}. Throughout this section, let  $n=u(r+\delta-1)$ be a positive integer  where $u\geq 2$, $r\geq 1$, $\delta\geq 2$ and $\gcd(n,q)=1$. In this section, we consider the locality set  $$L=\bigcup_{i=1}^{\delta-1}L_i=\bigcup_{i=1}^{\delta-1}\{m(r+\delta-1)+i:0\leq m\leq u-1\}.$$ The residues modulo $(r+\delta-1)$ gives $1,2,\dots,\delta-1$, which is in arithmetic progression of length $\delta-1$ and common difference $1$. Hence, by Proposition \ref{(r,delta)locality}, $L$ defines a valid locality set with locality $(r,\delta)$.

The next elementary lemma collects the two facts about $L$ that are used in every construction of this section.
\begin{lemma}\label{Lqcyclotomic}
Suppose that $(r+\delta-1)\mid(q-1)$. Then, $L$ is a union of $q$-cyclotomic cosets modulo $n$. Moreover, if $r>\delta-1$, then  $L\cap(-L)=\varnothing.$
\end{lemma}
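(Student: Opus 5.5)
The whole argument runs through residues modulo $N:=r+\delta-1$. Since $N\mid n$, reduction modulo $n$ followed by reduction modulo $N$ is well defined. By its definition, $L$ is exactly the set of $x\in\mathbb{Z}_n$ whose residue modulo $N$ lies in $\{1,2,\dots,\delta-1\}$. Indeed, every $x\in\mathbb{Z}_n$ can be written uniquely as $x=mN+j$ with $0\le m\le u-1$ and $0\le j\le N-1$, and $x\in L$ if and only if $1\le j\le \delta-1$. I would record this description first, since both claims reduce to it.

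For the first claim, take $x\in L$ and write $x\equiv i\pmod N$ with $1\le i\le\delta-1$. Because $N\mid(q-1)$, we have $q\equiv1\pmod N$. Hence $xq\equiv x\equiv i\pmod N$, and this congruence survives reduction modulo $n$ because $N\mid n$. So $xq \bmod n$ again has residue $i$ modulo $N$, which puts it in $L$. Thus $qL\subseteq L$, so $L$ is closed under multiplication by $q$ and is therefore a union of $q$-cyclotomic cosets $C_x$. In fact each $L_i$ is separately such a union.

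For the second claim, suppose $x\in L\cap(-L)$. Then $x\equiv i\pmod N$ and $-x\equiv j\pmod N$ for some $i,j\in\{1,\dots,\delta-1\}$. Adding gives $N\mid i+j$. But $2\le i+j\le 2\delta-2$, and the hypothesis $r>\delta-1$ gives $N=r+\delta-1>2\delta-2$, so $0<i+j<N$. This is a contradiction, hence $L\cap(-L)=\varnothing$.

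The only point needing care is that residues modulo $N$ are well defined on $\mathbb{Z}_n$, which uses $N\mid n$. Beyond that, both steps are direct, and I expect no real obstacle.
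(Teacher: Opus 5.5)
Your proposal is correct and follows essentially the same route as the paper's proof. Both use $q\equiv 1\pmod{r+\delta-1}$ to show that multiplication by $q$ preserves the residue $i$, and both reduce a hypothetical element of $L\cap(-L)$ modulo $r+\delta-1$ to reach the impossible congruence $i_1+i_2\equiv 0$ with $2\le i_1+i_2\le 2\delta-2<r+\delta-1$.
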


\begin{proof}
Since $(r+\delta-1)\mid(q-1)$,  there exists an integer $a$ such that $q=1+a(r+\delta-1)$ then for every $0\leq m\leq u-1$ and
$1\leq i\leq\delta-1$, we have
\[
q(m(r+\delta-1)+i)\equiv m'(r+\delta-1)+i\pmod{n}.
\]
where $0\leq m'\leq u-1$. Thus, each $L_i$ is a union of $q$-cyclotomic cosets modulo $n$.
Consequently, $L$ is also a union of $q$-cyclotomic cosets modulo $n.$
Suppose that $L\cap(-L)\neq\varnothing$. Then there exist
$0\leq m_1,m_2\leq u-1$ and $1\leq i_1,i_2\leq\delta-1$ such that
\[
m_1(r+\delta-1)+i_1\equiv -m_2(r+\delta-1)-i_2\pmod{n}.
\]
Reducing the above congruence modulo $(r+\delta-1)$ gives
\[
i_1+i_2\equiv0\pmod{(r+\delta-1)}.
\]
However, $2\leq i_1+i_2\leq 2(\delta-1)<r+\delta-1,
$
where the last inequality follows from $r>\delta-1$. This is a
contradiction. Therefore, $L\cap(-L)=\varnothing.$
\end{proof}
\begin{remark}
\label{rem:symmetry}
For arbitrary subsets $A,B\subseteq\mathbb{Z}_n$,
$$
A\cap(-B)=\varnothing
\quad\Longleftrightarrow\quad
B\cap(-A)=\varnothing.
$$
Thus, in every dual-containment argument below, only one of the two symmetric
cross-intersections needs to be checked.
\end{remark}
\begin{theorem}\label{thm:basic construction}
Let $n=u(r+\delta-1)$ be a positive integer such that 
$n\mid(q-1)$ and $\delta\leq \ell<r.$ Then, there exists a dual-containing optimal cyclic $(r,\delta)$-LRC with parameters $\bigl[u(r+\delta-1),\, ur-\ell+(\delta-1),\, \ell+1\bigr]_q.$ Moreover, there exists an optimal pure  quantum cyclic $(r,\delta)$-LRC with parameters $\bigl[\!\bigl[
    u(r+\delta-1),\,
    u(r-\delta+1)-2\ell+2(\delta-1),\,
     \ell+1
    \bigr]\!\bigr]_q.$
\end{theorem}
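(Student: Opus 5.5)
The plan is to give an explicit defining set and then check, in order, cyclotomic closure, dual containment, locality, distance, optimality and purity. Since $n\mid(q-1)$, every $q$-cyclotomic coset modulo $n$ is a singleton ($qi\equiv i \pmod n$), so any subset of $\mathbb{Z}_n$ is a valid defining set. I take
\[
Z=L\cup\{\delta,\delta+1,\ldots,\ell\},
\]
with the extra elements read as integers in $[0,n)$. Residue considerations will show they are disjoint from $L$. Then $|Z|=u(\delta-1)+\ell-\delta+1$, and hence $k=n-|Z|=ur-\ell+\delta-1$, as claimed. Since $L\subseteq Z$, Proposition \ref{(r,delta)locality} gives $(r,\delta)$-locality. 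Since $\{1,2,\ldots,\ell\}\subseteq Z$, the BCH bound gives $d\ge \ell+1$.

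\textbf{Dual containment.} I must show $Z\cap(-Z)=\varnothing$, splitting $Z$ into $L$ and $E=\{\delta,\ldots,\ell\}$ and using Remark \ref{rem:symmetry}.
\begin{itemize}
\item $L\cap(-L)=\varnothing$ follows from Lemma \ref{Lqcyclotomic}, because $r>\ell\ge\delta>\delta-1$.
\item For $j\in E$, the element $-j\equiv n-j$ has residue $r+\delta-1-j$ modulo $r+\delta-1$. Since $j\le \ell<r$, this residue is at least $\delta$. It therefore avoids the residues $1,\ldots,\delta-1$ of $L$, so $L\cap(-E)=\varnothing$.
\item For $E\cap(-E)$: $n-j\ge n-\ell>\ell$ because $n\ge 2(r+\delta-1)>2\ell$, so $E\cap(-E)=\varnothing$.
\end{itemize}

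\textbf{Optimality.} I compute $\lceil k/r\rceil$. We have $k=ur-(\ell-\delta+1)$ with $0<\ell-\delta+1<r$, so $\lceil k/r\rceil=u$. The right-hand side of \eqref{singleton_lrc} is then
\[
u(\delta-1)+\ell-\delta+2-(u-1)(\delta-1)=\ell+1.
\]
Combined with $d\ge\ell+1$, this forces $d=\ell+1$ and equality in the bound, so the classical code is optimal.

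\textbf{Quantum part and purity.} The CSS construction gives $\kappa=2k-n=u(r-\delta+1)-2\ell+2(\delta-1)$. Theorem \ref{classical-quantum-locality} transfers locality to the quantum code, using $d(C^\perp)\ge d(C)\ge\delta$. For purity and optimality of the quantum code it suffices, by the discussion after \eqref{singleton_qlrc}, to show $d(C^\perp)>\ell+1$. I expect this to be the main subtle point: the obvious run $\ell+1,\ldots,r+\delta-1$ in $Z^c$ has length $r+\delta-1-\ell$, which need not exceed $\ell$. Instead I use the cyclic run $n-r+1,\ldots,n-1,0$ in $Z^c$. These elements have residues $\delta,\ldots,r+\delta-2,0$ modulo $r+\delta-1$, so none lies in $L$. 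They are also all greater than $\ell$ or equal to $0$, so none lies in $E$. This run gives $r$ consecutive integers in $Z^c$, so Lemma \ref{dual distance-complement} yields $d(C^\perp)\ge r+1>\ell+1$. Hence the quantum code is pure with $d_{\mathcal Q}=\ell+1$ and meets \eqref{singleton_qlrc} with equality.
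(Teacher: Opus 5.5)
Your proof is correct and follows essentially the same route as the paper: the same defining set $Z=L\cup\{1,\ldots,\ell\}$, the same residue-based checks for $Z\cap(-Z)=\varnothing$, and the BCH bound combined with the Singleton-type bound (you also make the computation $\lceil k/r\rceil=u$ explicit, which the paper leaves implicit). Purity is likewise obtained from a run of $r$ consecutive elements of $Z^c$, and your wrap-around run $n-r+1,\ldots,n-1,0$ is exactly the paper's run $m(r+\delta-1)+\delta,\ldots,(m+1)(r+\delta-1)$ with $m=u-1$.
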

\begin{proof}
Let $ D=\{1,2,\ldots,\ell\},$ where $\delta\leq \ell<r$  and set $Z=L\cup D$. Since
$n\mid(q-1)$, every $q$-cyclotomic coset modulo $n$ is a singleton. Hence, $Z$ is the complete defining set of a cyclic code
$C$ over $\mathbb{F}_q$ and  by Proposition \ref{(r,delta)locality}, $C$ has $(r,\delta)$ locality. We first show that $C$ is dual-containing. It suffices to prove that $Z\cap(-Z)=\varnothing$. Note that $$-L=\bigcup_{i=1}^{\delta-1}\{m(r+\delta-1)-i:0\leq m\leq u-1\}\ \text{ and }\ -D=\{n-j:1\leq j\leq\ell\},$$ where all elements are considered
modulo $n$. By Lemma \ref{Lqcyclotomic}, we have $L\cap-L=\varnothing$, as $\delta\leq \ell<r$.  Next, suppose that $x\in L\cap(-D)$. Then, for some
$0\leq m_1\leq u-1$, $1\leq i_1\leq\delta-1$, and
$1\leq j_1\leq\ell$,
\[
m_1(r+\delta-1)+i_1\equiv-j_1\pmod{u(r+\delta-1)}.
\]
Consequently, $i_1+j_1\equiv0\pmod{r+\delta-1}$. On the other hand,
$2\leq i_1+j_1\leq\ell+\delta-1<r+\delta-1$, since $\ell<r$.
Thus, $L\cap(-D)=\varnothing$. Finally, suppose that $x\in D\cap(-D)$. Then, for some
$1\leq \ell_1,\ell_2\leq\ell$,
$$\ell_1\equiv-\ell_2\pmod{u(r+\delta-1)},$$ and hence
$\ell_1+\ell_2\equiv0\pmod{u(r+\delta-1)}$. Since
$2\leq\ell_1+\ell_2\leq2\ell<2r<n$, this is impossible. Therefore,
$D\cap(-D)=\varnothing$. Moreover,
$D\cap(-L)=\varnothing$ follows equivalently from
$L\cap(-D)=\varnothing$. Hence,
$Z\cap(-Z)=\varnothing$, and therefore
$C^{\perp}\subseteq C$. We now determine the parameters of $C$. Since
$D$ already contains the elements $1,\ldots,\delta-1$ corresponding to
the $m=0$ part of $L$, i.e., $|L\cap D|=\delta-1$, we have
$|Z|=|L|+|D|-|L\cap D|=(u-1)(\delta-1)+\ell$. Therefore,
\[
\dim(C)
=n-|Z|
=u(r+\delta-1)-(u-1)(\delta-1)-\ell
=ur-\ell+(\delta-1).
\]
Furthermore, $Z$ contains the $\ell$ consecutive integers
$1,2,\ldots,\ell$. Hence, the BCH bound gives
$d(C)\geq\ell+1$. On the other hand, applying
bound \eqref{singleton_lrc} yields $d(C)\leq\ell+1$. Thus,
$d(C)=\ell+1$. Consequently, $C$ is a dual-containing optimal cyclic
$(r,\delta)$-LRC with parameters
\[
\bigl[u(r+\delta-1),\,ur-\ell+(\delta-1),\,\ell+1\bigr]_q.
\]
Hence, Theorem \ref{classical-quantum-locality} ensures there exists a quantum $(r,\delta)$-LRC. We finally show that the resulting quantum code is pure. Since
$D=\{1,2,\ldots,\ell\}$ affects only the first block and $u\geq2$,
for any $1\leq m\leq u-1$, the $r$ consecutive integers
\[
    m(r+\delta-1)+\delta,\,
    m(r+\delta-1)+\delta+1,\ldots,
    (m+1)(r+\delta-1)
\]
belong to $Z^c$. Hence, by Lemma~\ref{dual distance-complement}, $d(C^\perp)\ge r+1.$
Since $\ell<r$ and $d(C)=\ell+1$, we obtain $d(C^\perp)\ge r+1>\ell+1=d(C).$
Therefore, the quantum code is pure, and its minimum distance is $d_\mathcal{Q}=d(C)=\ell+1$. Consequently, by CSS construction, there exists a pure quantum cyclic $(r,\delta)$-LRC with
parameters
\[
\bigl[\!\bigl[
u(r+\delta-1),\,
u(r-\delta+1)-2\ell+2(\delta-1),\,
\ell+1
\bigr]\!\bigr]_q
\]
and is optimal w.r.t the quantum Singleton-type bound
\eqref{singleton_qlrc}.
 \end{proof}
\begin{example}
Let $q=29$, $r=5$, $\delta=3$, $\ell=4$, and $u=2$. Then $n=u(r+\delta-1)=2(5+3-1)=14,$ and $14\mid 28=q-1,\ \delta\leq \ell<r$ Thus, all the hypotheses of Theorem~\ref{thm:basic construction} are satisfied. More explicitly, the defining set of the corresponding cyclic code is $Z=L\cup D,$ where
$$
L=\bigcup_{i=1}^{2}
\{7m+i:0\leq m\leq 1\}
=\{1,2,8,9\} \text{ and } 
D=\{1,2,3,4\}.
$$
Consequently, $Z=\{1,2,3,4,8,9\}.$ Moreover, $
-Z=\{5,6,10,11,12,13\}\pmod {14},$ and hence $Z\cap(-Z)=\varnothing.$ Therefore, the resulting cyclic code $C$ is Euclidean dual-containing. Using Magma \cite{magma}, we verified $d(C)=5$ and $d(C^{\perp})=8\neq d(C)$. Hence, there exists a dual-containing optimal
cyclic $(5,3)$-LRC over $\mathbb{F}_{29}$ with parameters
$[14,8,5]_{29}.$
and the resulting quantum code is pure. Moreover, there exists an optimal quantum cyclic
$(5,3)$-LRC over $\mathbb{F}_{29}$ with parameters
$
\left[\!\left[14,2,5\right]\!\right]_{29}.
$
\end{example}
The next result no longer requires $n|(q-1)$. Instead, we  construct a dual-containing cyclic $(r,\delta)$-LRC by adding one suitably chosen defining zero to the locality defining set.
\begin{theorem}\label{thm:one zero extension}
Let $n=u(r+\delta-1)$ be a positive integer. Assume that $(r+\delta-1)\mid(q-1)$ and  $2\leq\delta< r$. Let $\eta=\gcd(u,q-1)$. Suppose that at least one of the following conditions holds: 
\begin{enumerate}
    \item $\eta>2$;
    \item $\eta\leq 2$, $r>\delta+1$,and $\eta\mid\dfrac{\delta(q-1)}{r+\delta-1}$.
\end{enumerate}
Then, there exists a dual-containing
optimal cyclic $(r,\delta)$-LRC with parameters
$\bigl[u(r+\delta-1),\,ur-1,\,\delta+1\bigr]_q.$
Moreover, there exists an optimal pure quantum cyclic
$(r,\delta)$-LRC with parameters
$\bigl[\!\bigl[u(r+\delta-1),\,u(r-\delta+1)-2,\,
\delta+1\bigr]\!\bigr]_q.$
\end{theorem}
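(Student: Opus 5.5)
The plan is to follow the template of Theorem~\ref{thm:basic construction}, but with a single extra defining zero. Take $Z=L\cup\{x\}$ with $x\notin L$ chosen so that $\{x\}$ is itself a $q$-cyclotomic coset modulo $n$. Then $|Z|=u(\delta-1)+1$ and $\dim(C)=n-|Z|=ur-1$. By Lemma~\ref{Lqcyclotomic}, $L$ is a union of $q$-cyclotomic cosets, so $Z$ is a complete defining set. The condition on $x$ is $(q-1)x\equiv 0\pmod n$. Writing $q-1=a(r+\delta-1)$, this is equivalent to $ax\equiv 0\pmod u$. Locality then follows from Proposition~\ref{(r,delta)locality}.

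\textbf{Choice of $x$.} The residue of $x$ modulo $r+\delta-1$ should extend the progression $1,\dots,\delta-1$ to length $\delta$. Then Lemma~\ref{distance}, applied with $D=\{1,\dots,\delta-1,x\}$ and $d'=\delta+1\le 2\delta$, gives $d(C)\ge\delta+1$.
\begin{enumerate}
\item Case $\eta>2$: take $x=n/\eta=(u/\eta)(r+\delta-1)$. Its residue is $0$, and $0,1,\dots,\delta-1$ is a progression with difference $1$. It is fixed by multiplication by $q$ because $\eta\mid q-1$.
\item Case $\eta\le 2$: take $x\equiv\delta\pmod{r+\delta-1}$, i.e.\ $x=\delta+m(r+\delta-1)$. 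Here one must solve $a(\delta+m(r+\delta-1))\equiv0\pmod u$ for $m$. I expect the hypothesis $\eta\mid \delta a=\delta(q-1)/(r+\delta-1)$ to be exactly what makes this linear congruence in $m$ solvable. This is a gcd bookkeeping step: reduce modulo $u/\gcd(u,a)$ and use that $r+\delta-1$ divides $q-1$.
\end{enumerate}
This solvability computation is the step I expect to be the main obstacle, since the interplay between $\eta=\gcd(u,q-1)$ and $\gcd(u,a)$ must be handled carefully. If needed, I would allow $x$ to be any fixed point with residue $\delta$, not necessarily of the form above.

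\textbf{Dual containment.} We need $Z\cap(-Z)=\varnothing$. Lemma~\ref{Lqcyclotomic} gives $L\cap(-L)=\varnothing$ since $r>\delta-1$. By Remark~\ref{rem:symmetry}, it remains to check $x\notin -L$ and $x\neq -x$.
\begin{enumerate}
\item Case $\eta>2$: the residue of $-x$ is $0\notin\{1,\dots,\delta-1\}$, so $x\notin -L$. Also $2x=2n/\eta\not\equiv0\pmod n$ because $\eta>2$.
\item Case $\eta\le2$: the residue of $-x$ is $-\delta$, and $x\in-L$ would require $\delta+i\equiv 0\pmod{r+\delta-1}$ for some $1\le i\le\delta-1$. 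This is impossible since $\delta+1\le\delta+i\le 2\delta-1<r+\delta-1$. Similarly $x=-x$ forces $2\delta\equiv0\pmod{r+\delta-1}$, which is excluded by $2\delta<r+\delta-1$, i.e.\ by the hypothesis $r>\delta+1$.
\end{enumerate}

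\textbf{Optimality and purity.} For $k=ur-1$ we have $\lceil k/r\rceil=u$. The bound \eqref{singleton_lrc} then gives
\[
d\le u(\delta-1)+2-(u-1)(\delta-1)=\delta+1,
\]
so $d(C)=\delta+1$ and $C$ is optimal. Because $u\ge2$, there is a block $m$ not containing $x$. In that block the $r$ integers $m(r+\delta-1)+\delta,\dots,(m+1)(r+\delta-1)$ lie in $Z^c$. By Lemma~\ref{dual distance-complement}, $d(C^\perp)\ge r+1>\delta+1=d(C)$, using $\delta<r$. Hence the CSS code is pure. Theorem~\ref{classical-quantum-locality} transfers locality to the quantum code, which has parameters $[\![u(r+\delta-1),\,2(ur-1)-n,\,\delta+1]\!]_q$, and it meets \eqref{singleton_qlrc} with equality.
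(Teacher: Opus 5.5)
Your proposal is correct and follows the paper's proof essentially step for step. You use the same singleton zero, $x=(u/\eta)(r+\delta-1)$ when $\eta>2$ and $x\equiv\delta\pmod{r+\delta-1}$ otherwise, the same residue arguments for $Z\cap(-Z)=\varnothing$, the Singleton bound for optimality, and a zero-free gap of $r$ consecutive integers in $Z^c$ for purity; in Case~2 your dual-distance bound $d(C^\perp)\ge r+1$ is even slightly sharper than the paper's $d(C^\perp)\ge r$. The solvability step you flagged as the main obstacle is immediate: since $a(r+\delta-1)=q-1$, the congruence $a(\delta+m(r+\delta-1))\equiv 0\pmod u$ is just $(q-1)m\equiv -a\delta\pmod u$, which is solvable exactly when $\eta=\gcd(u,q-1)$ divides $a\delta=\delta(q-1)/(r+\delta-1)$, as in the paper.
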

\begin{proof}
We consider the two cases separately.

\medskip
\noindent\textbf{Case 1.} Suppose that $\eta=\gcd(u,q-1)>2$, and set $x=u/\eta$. Clearly,
$1\leq x\leq u-1$. Since $\eta\mid(q-1)$, we have
$$(q-1)x(r+\delta-1)\equiv0\pmod{u(r+\delta-1)}.$$ Thus,
$q x(r+\delta-1)\equiv x(r+\delta-1)\pmod{n}$; hence, the
$q$-cyclotomic coset containing $x(r+\delta-1)$ is a singleton. Define
$$D=\{x(r+\delta-1)\}.$$ Hence, $Z=L\cup D$ is the complete defining set of a cyclic code $C$ over $\mathbb{F}_q$ and by Proposition \ref{(r,delta)locality}, $C$ has $(r,\delta)$ locality. Now, we will show that $C$ is dual-containing. Since \(\eta>2\), we have \(\eta\nmid 2\); hence, 
\[
2x(r+\delta-1)
=2\frac{u}{\eta}(r+\delta-1)
\not\equiv 0
\pmod{u(r+\delta-1)}.
\]
Thus, $x(r+\delta-1)
\not\equiv -x(r+\delta-1)
\pmod{u(r+\delta-1)},$ which implies $D\cap(-D)=\varnothing.$ Next, we show that $L\cap(-D)=\varnothing$. Suppose, to the contrary,
that some element belongs to $L\cap(-D)$. Then, for some
$0\leq m\leq u-1$ and $1\leq i\leq\delta-1$, we have 
$$m(r+\delta-1)+i\equiv-x(r+\delta-1)\pmod{u(r+\delta-1)}.$$
Reducing this congruence modulo $(r+\delta-1)$ gives
$$i\equiv0\pmod{r+\delta-1},$$ which is impossible since
$1\leq i\leq\delta-1<r+\delta-1$. Hence,
$L\cap(-D)=\varnothing$. Equivalently, $(-L)\cap D=\varnothing$. Together with $L\cap(-L)=\varnothing$, the preceding observations yield
$Z\cap(-Z)=\varnothing$. Hence,
$C^{\perp}\subseteq C$. Also,  the complete defining set $Z$
contains the $\delta$ consecutive integers
$x(r+\delta-1),x(r+\delta-1)+1,\ldots,
x(r+\delta-1)+\delta-1$.
Therefore, the BCH bound gives $d(C)\geq\delta+1$. It remains to determine the dual distance. Since $\eta>2$, we have
$u\geq3$.  Therefore, there exists an integer $0\leq m\leq u-1$ such that $m+1\neq x$ for which
\[
    m(r+\delta-1)+\delta,\ \,
    m(r+\delta-1)+(\delta+1),\ldots,
    (m+1)(r+\delta-1)
\]
are all contained in $Z^c$. This is a set of $r$ consecutive integers.
Hence, by Lemma~\ref{dual distance-complement}, $d(C^\perp)\ge r+1.$ Since $r>\delta$,
$d(C^\perp)\ge r+1>\delta+1.$

\medskip
\noindent\textbf{Case 2.} Suppose that $\eta\leq 2$, $r>\delta+1$, and
$\eta\mid\dfrac{\delta(q-1)}{r+\delta-1}$. We first show that there exists an integer $x$ such that
$D=\{x(r+\delta-1)+\delta\}$ is a singleton $q$-cyclotomic coset
modulo $n=u(r+\delta-1)$. For this, it is enough to find $x$ satisfying
$$(q-1)(x(r+\delta-1)+\delta)\equiv0\pmod{u(r+\delta-1)}.$$
Since $(r+\delta-1)\mid(q-1)$, this congruence is equivalent to
$$(q-1)x\equiv-\dfrac{\delta(q-1)}{r+\delta-1}\pmod{u}.$$
By the solvability criterion for linear congruences, such an $x$ exists
if and only if
$\eta\mid\dfrac{\delta(q-1)}{r+\delta-1}$. It follows directly from the assumption. Hence, there
exists an integer $x$ such that $D$ is a singleton $q$-cyclotomic coset
modulo $n$. We claim that $D\cap(-D)=\varnothing$. Otherwise,
$$x(r+\delta-1)+\delta\equiv
-\bigl(x(r+\delta-1)+\delta\bigr)\pmod{n},$$ which, upon reduction
modulo $(r+\delta-1)$, yields
$2\delta\equiv0\pmod{r+\delta-1}$. However, $r>\delta+1$ implies
$0<2\delta<r+\delta-1$, a contradiction. Hence,
$D\cap(-D)=\varnothing$. We now show that $L\cap(-D)=\varnothing$. Suppose, for some $0\leq m\leq u-1$ and
$1\leq i\leq\delta-1$,
$$m(r+\delta-1)+i
\equiv-\bigl(x(r+\delta-1)+\delta\bigr)\pmod{n}.$$
Reducing modulo $(r+\delta-1)$ yields
$i+\delta\equiv0\pmod{r+\delta-1}$.
On the other hand,
$0<i+\delta\leq2\delta-1<r+\delta-1$, where the last inequality follows
from $r>\delta+1$. This is a contradiction. Hence,
$L\cap(-D)=\varnothing$. Equivalently,
$(-L)\cap D=\varnothing$. Let $C$ be the cyclic code with complete defining set $Z=L\cup D$.
Together with $L\cap(-L)=\varnothing$, the above observations imply
$Z\cap(-Z)=\varnothing$. Therefore,
$C^{\perp}\subseteq C$. Moreover, by the construction of $L$, the code
$C$ has $(r,\delta)$-locality. Since $x(r+\delta-1)+\delta\in D$ and
$x(r+\delta-1)+i\in L$ for $1\leq i\leq\delta-1$, the defining
set $Z$ contains the $\delta$ consecutive integers.
Hence, the BCH bound gives $d(C)\geq\delta+1$. The additional defining zero set $D$ has residue $\delta$ modulo $r+\delta-1$. Hence, for every unaffected block, the integers
\[
    m(r+\delta-1)+(\delta+1),m(r+\delta-1)+(\delta+2),\ldots,
    (m+1)(r+\delta-1)
\]
belong to $Z^c$. There are $r-1$ consecutive integers in this set, and
therefore $d(C^\perp)\ge r.$ Since $r>\delta+1$, $d(C^\perp)\ge r>\delta+1.$

\medskip
 In both cases, $D\cap L=\varnothing$, and hence
$|Z|=u(\delta-1)+1$. Thus,
$$\dim(C)=n-|Z|=u(r+\delta-1)-u(\delta-1)-1=ur-1.$$
By bound~\eqref{singleton_lrc}, $d(C)\leq\delta+1$.
Consequently, $d(C)=\delta+1$, and $C$ is an
optimal dual-containing cyclic $(r,\delta)$-LRC with parameters
$\bigl[u(r+\delta-1),\,ur-1,\,\delta+1\bigr]_q.$
Therefore, by Theorem~\ref{classical-quantum-locality}, $\mathcal{Q}(C)$ is a quantum $(r,\delta)$-LRC. Since, in both cases, $d(C^\perp)>d(C)=\delta+1$, the code $\mathcal{Q}(C)$ is pure with $d_{\mathcal Q}=\delta+1$. Hence, it has parameters
$
\bigl[\!\bigl[u(r+\delta-1),u(r-\delta+1)-2,\delta+1\bigr]\!\bigr]_q.
$
Moreover, it attains the Singleton-type bound~\eqref{singleton_qlrc}, and is therefore optimal.
\end{proof}
\begin{example} Let $q=13$, $u=4$, $r=8$, and $\delta=5$. Then $r+\delta-1=12\mid12=q-1$ and $\delta\leq r$. Moreover,
$\eta=\gcd(u,q-1)=\gcd(4,12)=4>2.$ Thus, all the hypotheses of Case~(1) of
Theorem~\ref{thm:one zero extension} are satisfied. Notice that
$n=u(r+\delta-1)=4\cdot12=48.$ The locality set is
\begin{align*}
L
&=\bigcup_{i=1}^{4}\{12m+i:0\leq m\leq3\}=\{1,2,3,4,13,14,15,16,25,26,27,28,37,38,39,40\}.
\end{align*}
Since $\eta=4$, we take $x=u/\eta=1$. Thus, $D=\{x(r+\delta-1)\}=\{12\}.$ Therefore, the complete defining set of a cyclic code $C$ over $\mathbb{F}_{13}$ is $$Z=L\cup D=\{1,2,3,4,12,13,14,15,16,25,26,27,28,37,38,39,40\}.$$
The negative of $Z$ modulo $48$ is $$-Z=\{8,9,10,11,20,21,22,23,32,33,34,35,36,44,45,46,47\}.$$ Hence, $Z\cap-Z=\varnothing$. Also, we verified using Magma \cite{magma}, $d(C)=6$ and $d(C^{\perp})=9\neq d(C)$. Therefore, $C$ is a dual-containing optimal cyclic $(8,5)$-LRC
 with parameters $[48,31,6]_{13},$
and the resulting quantum code $\mathcal{Q}(C)$ is pure. 
Moreover, $\mathcal{Q}(C)$ is an optimal pure quantum cyclic $(8,4)$-LRC with parameters
$\bigl[\!\bigl[48,14,6\bigr]\!\bigr]_{13}.$
\end{example}

When $\delta$ is even, the two cases of the preceding theorem together cover all possible values of $\eta=\gcd(u,q-1)$. If $\eta>2$, then the first case applies directly. Now suppose that $\eta\leq 2$ i.e., $\eta\in \{1,2\}$. Since $r+\delta-1\mid(q-1)$ and $\delta$ is even, it follows that $
\frac{\delta(q-1)}{r+\delta-1}
$
is an even integer. Consequently,
$$
\eta\mid \frac{\delta(q-1)}{r+\delta-1}.
$$
Furthermore, if $r>\delta+1$, the remaining condition in the second case is also satisfied whenever $\eta\leq 2$. Thus, for even $\delta$ no additional restriction on $\gcd(u,q-1)$ is needed. This yields the following corollary. 

\begin{corollary}
Let $n=u(r+\delta-1)$ be a positive integer, where $\delta$ is even. Suppose that
$
r+\delta-1\mid(q-1)
$
and $r>\delta+1$. Then, there exists a dual-containing optimal cyclic $(r,\delta)$-LRC with parameters
$
\bigl[u(r+\delta-1),\,ur-1,\,\delta+1\bigr]_q.
$
Moreover, there exists an optimal pure quantum cyclic $(r,\delta)$-LRC with parameters
$
\bigl[\!\bigl[
u(r+\delta-1),\,
u(r-\delta+1)-2,\,
\delta+1
\bigr]\!\bigr]_q.
$
\end{corollary}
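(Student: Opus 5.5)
The corollary follows from Theorem~\ref{thm:one zero extension} once we check that its hypotheses hold for every admissible $u$. The plan is to verify that, for even $\delta$ with $r>\delta+1$ and $(r+\delta-1)\mid(q-1)$, at least one of its two conditions is always satisfied. The standing hypotheses of that theorem, namely $(r+\delta-1)\mid(q-1)$ and $2\leq\delta<r$, follow at once, since $r>\delta+1$ implies $\delta<r$ and $\delta$ even implies $\delta\geq 2$.

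Set $\eta=\gcd(u,q-1)$ and split into two cases. If $\eta>2$, condition~(1) of Theorem~\ref{thm:one zero extension} holds and we are done. Otherwise $\eta\in\{1,2\}$. Because $(r+\delta-1)\mid(q-1)$, the quantity $(q-1)/(r+\delta-1)$ is an integer, and since $\delta$ is even the product $\delta(q-1)/(r+\delta-1)$ is an even integer. Hence both $1$ and $2$ divide it, so $\eta\mid \delta(q-1)/(r+\delta-1)$. Combined with the assumption $r>\delta+1$, condition~(2) holds.

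In either case, Theorem~\ref{thm:one zero extension} supplies a Euclidean dual-containing optimal cyclic $(r,\delta)$-LRC with parameters $[u(r+\delta-1),ur-1,\delta+1]_q$. It also supplies the associated pure optimal quantum code $[\![u(r+\delta-1),u(r-\delta+1)-2,\delta+1]\!]_q$ via the CSS construction. No new dual-distance or purity computation is needed, because these were already established inside the theorem in both cases.

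The only point needing care, and the nearest thing to an obstacle, is the divisibility step. It uses integrality of $(q-1)/(r+\delta-1)$ to conclude that $\delta(q-1)/(r+\delta-1)$ is even, rather than merely that $\delta(q-1)$ is even. Everything else is direct invocation.
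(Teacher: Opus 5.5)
Your proposal is correct and follows the paper's own justification (the paragraph preceding the corollary) essentially step for step. You split on $\eta=\gcd(u,q-1)$ and invoke Case~1 of Theorem~\ref{thm:one zero extension} when $\eta>2$. When $\eta\in\{1,2\}$, you note that $\delta(q-1)/(r+\delta-1)$ is an even integer, because $\delta$ is even and $(q-1)/(r+\delta-1)$ is an integer, so $\eta$ divides it and Case~2 applies under $r>\delta+1$.
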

\begin{corollary} Let $n=u(r+1)$ be a positive integer and $(r+1)\mid(q-1)$ with $r>3$. Then, there exists a dual-containing optimal cyclic $(r,2)$-LRC with parameters $\bigl[u(r+1),\,ur-1,\,3\bigr]_q$. Moreover, there exists an optimal pure quantum cyclic $(r,2)$-LRC with parameters $\bigl[\!\bigl[u(r+1),\,u(r-1)-2,\, 3\bigr]\!\bigr]_q$. \end{corollary}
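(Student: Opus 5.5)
This corollary is the specialization $\delta=2$ of the preceding corollary for even $\delta$, so I would derive it directly from Theorem~\ref{thm:one zero extension}. With $\delta=2$ we have $r+\delta-1=r+1$, so the hypothesis $(r+1)\mid(q-1)$ is exactly $(r+\delta-1)\mid(q-1)$. The hypothesis $r>3=\delta+1$ gives both $2\leq\delta<r$ and the condition $r>\delta+1$ needed in Case~2 of that theorem.

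The plan is to check that one of the two cases of Theorem~\ref{thm:one zero extension} always applies, with $\eta=\gcd(u,q-1)$.
\begin{itemize}
\item If $\eta>2$, Case~1 applies with no further conditions.
\item If $\eta\in\{1,2\}$, Case~2 additionally requires $\eta\mid \frac{\delta(q-1)}{r+\delta-1}=\frac{2(q-1)}{r+1}$. This number is an integer, because $(r+1)\mid(q-1)$, and it is even. Hence it is divisible by both $1$ and $2$, and the divisibility holds automatically.
\end{itemize}
So for every $u\geq 2$ the hypotheses of the theorem are met.

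Substituting $\delta=2$ into the conclusions of Theorem~\ref{thm:one zero extension} gives the classical parameters
\[
\bigl[u(r+1),\,ur-1,\,3\bigr]_q
\]
and the quantum parameters
\[
\bigl[\!\bigl[u(r+1),\,u(r-1)-2,\,3\bigr]\!\bigr]_q .
\]
Dual containment, optimality with respect to \eqref{singleton_lrc} and \eqref{singleton_qlrc}, and purity are all inherited from the theorem. In particular, purity comes from $d(C^\perp)\geq r>3$ in Case~2 and from $d(C^\perp)\geq r+1>3$ in Case~1.

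There is no real obstacle here. The only point that needs care is the parity check in the $\eta\leq 2$ case, namely that $\frac{2(q-1)}{r+1}$ is an even integer. Alternatively, one can simply invoke the preceding corollary with $\delta=2$, which is even, and $r>\delta+1=3$.
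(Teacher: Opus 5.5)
Your proposal is correct and follows the paper's route: the corollary is the $\delta=2$ case of the even-$\delta$ corollary, which comes from Theorem~\ref{thm:one zero extension}. That corollary rests on the fact that $\frac{2(q-1)}{r+1}$ is an even integer, so Case~2's divisibility condition holds automatically when $\eta\le 2$, and $r>3$ supplies both $\delta<r$ and $r>\delta+1$.
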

\begin{remark}
The case $\delta=2$ was previously considered in \cite[Theorem 15]{LuoEtAl2025}, where
additional conditions $u+2<r$ and $\gcd(u,q-1)|\frac{2(q-1)}{r+1}$ were  imposed. The above corollary shows that those
additional conditions are unnecessary, thereby providing a more general
result for $\delta=2$.
\end{remark}

We first add two defining zeros.  This gives distance $\delta+2$ under weaker
coprimality assumptions than the general distance construction that follows.
\begin{theorem}\label{thm:two-zero extension}
Let  $n=u(r+\delta-1)$ be a positive integer and $(r+\delta-1)\mid(q-1)$. Let $\eta=\gcd(u,q-1)$. Assume that at least one of the following conditions holds. 
\begin{enumerate}
   \item $\eta>2$, $r>(\delta+1)$ and $\gcd(u,r+\delta-1)\mid\delta.$
   \item $\eta\leq2$, $r>(\delta+3)$, $\gcd(u,r+\delta-1)\mid(\delta+1)$,  and
    $\eta\mid\dfrac{\delta(q-1)}{r+\delta-1}$.
\end{enumerate}
Then, there exists a dual-containing optimal cyclic
$(r,\delta)$-LRC with parameters
$\bigl[u(r+\delta-1),\,ur-2,\,\delta+2\bigr]_q.$
Moreover, there exists an optimal pure quantum cyclic $(r,\delta)$-LRC with parameters
$\bigl[\!\bigl[u(r+\delta-1),\,u(r-\delta+1)-4,\,
\delta+2\bigr]\!\bigr]_q.$
\end{theorem}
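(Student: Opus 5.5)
The plan follows the pattern of Theorem~\ref{thm:one zero extension}. We take $Z=L\cup D$, where $D$ consists of two singleton $q$-cyclotomic cosets disjoint from $L$. The two added zeros, together with the residues $1,\dots,\delta-1$ already in $L$, should give $\delta+1$ elements of $Z$ whose residues modulo $r+\delta-1$ are consecutive. Lemma~\ref{distance} with $d'=\delta+2\le 2\delta$ then gives $d(C)\ge\delta+2$. Since both added cosets are singletons, $|Z|=u(\delta-1)+2$ and $\dim C=ur-2$. Bound~\eqref{singleton_lrc} then forces $d(C)\le\delta+2$, so $d(C)=\delta+2$. Locality comes from Proposition~\ref{(r,delta)locality}. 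In each case it remains to check $Z\cap(-Z)=\varnothing$ and $d(C^\perp)>\delta+2$, and then to invoke Theorem~\ref{classical-quantum-locality} and the CSS construction.

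\textbf{Case 1} ($\eta>2$). Take $D=\{x(r+\delta-1),\,y(r+\delta-1)+\delta\}$ with $x=u/\eta$, exactly as in Case~1 of Theorem~\ref{thm:one zero extension}; this supplies residue $0$. For the second zero, $y$ must satisfy $(q-1)y\equiv-\delta(q-1)/(r+\delta-1)\pmod u$. This is solvable iff $\eta\mid\delta(q-1)/(r+\delta-1)$. I expect to derive this divisibility from $\gcd(u,r+\delta-1)\mid\delta$ by writing $q-1=a(r+\delta-1)$ and comparing prime powers dividing $\eta$. This arithmetic step is the one I am least sure of, and I expect it to be the main obstacle. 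If it fails I would instead search for another representative with residue $\delta$ whose orbit under multiplication by $q$ stays of size one.

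The residues $0,1,\dots,\delta$ then form an arithmetic progression of length $\delta+1$, which is what Lemma~\ref{distance} needs. For dual containment, reduce every potential coincidence modulo $r+\delta-1$. Residue pairs $(i,j)$ with $i,j\in\{1,\dots,\delta\}$ have sum at most $2\delta<r+\delta-1$, which uses $r>\delta+1$. A residue $0$ against a residue $i\in\{1,\dots,\delta\}$ gives $i\equiv0$, which is impossible. The residue-$0$ zero against its own negative is excluded by $\eta\nmid2$, as before.

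For purity, $u\ge3$ lets us pick a block $m$ with $m\ne y$ and $m+1\not\equiv x$. Then $m(r+\delta-1)+\delta+1,\dots,(m+1)(r+\delta-1)$ are $r$ consecutive elements of $Z^c$. Lemma~\ref{dual distance-complement} then gives $d(C^\perp)\ge r+1>\delta+2$.

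\textbf{Case 2} ($\eta\le2$). Here a residue-$0$ singleton may coincide with its negative, so instead take $D=\{y(r+\delta-1)+\delta,\,z(r+\delta-1)+\delta+1\}$. The residues $1,\dots,\delta+1$ are again consecutive. The element with residue $\delta$ is a singleton by the hypothesis $\eta\mid\delta(q-1)/(r+\delta-1)$. For the element with residue $\delta+1$, the congruence $(q-1)z\equiv-(\delta+1)(q-1)/(r+\delta-1)\pmod u$ must be solvable, and I would deduce this using $\gcd(u,r+\delta-1)\mid(\delta+1)$ in the same way as in Case~1. All residue sums lie in $[2,2\delta+2]$, and $2\delta+2<r+\delta-1$ because $r>\delta+3$, so $Z\cap(-Z)=\varnothing$. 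In an unaffected block, the residues $\delta+2,\dots,r+\delta-1$ together with the next block's residue $0$ give $r-1$ consecutive elements of $Z^c$. Hence $d(C^\perp)\ge r>\delta+3>\delta+2$, which gives purity and optimality with respect to~\eqref{singleton_qlrc}. The quantum dimension is $2(ur-2)-n=u(r-\delta+1)-4$.
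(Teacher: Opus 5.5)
Your proposal is correct and follows the paper's proof closely. It uses the same shape $Z=L\cup D_1\cup D_2$, with residues $0,1,\dots,\delta$ in Case~1 and $1,\dots,\delta+1$ in Case~2. It also uses the same residue-sum arguments for $Z\cap(-Z)=\varnothing$, Lemma~\ref{distance} with bound~\eqref{singleton_lrc} to get $d(C)=\delta+2$, and a run of consecutive non-zeros with Lemma~\ref{dual distance-complement} for purity.

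The one real difference is how you show that the added zero of residue $\delta$ (Case~1) or $\delta+1$ (Case~2) is a singleton coset. The paper avoids your linear congruence entirely. From $au+b(r+\delta-1)=\delta$ (resp.\ $\delta+1$) it takes the zero $au$. This zero is fixed by $q$ because $(r+\delta-1)\mid(q-1)$ gives $(q-1)au\equiv0\pmod{u(r+\delta-1)}$.

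The step you flagged as uncertain does go through. Put $c=(q-1)/(r+\delta-1)$. Then $\eta=\gcd(u,c(r+\delta-1))$ divides $\gcd(u,c)\gcd(u,r+\delta-1)$, which divides $c\delta$ (and likewise with $\delta+1$ in Case~2). More directly, $au$ reduced modulo $n$ has the form $y(r+\delta-1)+\delta$, and this $y$ solves your congruence. So the two constructions coincide; the paper's choice just makes the coset check immediate.

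There is one counting slip to repair. In Case~1, the integers $m(r+\delta-1)+\delta+1,\dots,(m+1)(r+\delta-1)$ number only $r-1$. That gives $d(C^\perp)\ge r\ge\delta+2$, which is not the strict inequality purity needs. Since you chose $m\ne y$, the element $m(r+\delta-1)+\delta$ also lies in $Z^c$. Start the run there to get $r$ elements and $d(C^\perp)\ge r+1>\delta+2$, as in the paper.

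In Case~2, the run from $m(r+\delta-1)+\delta+2$ to $(m+1)(r+\delta-1)$ has $r-2$ elements, not $r-1$. This gives $d(C^\perp)\ge r-1$, which still exceeds $\delta+2$ because $r>\delta+3$. That is exactly the paper's bound, so the conclusion stands.
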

\begin{proof}
 We now consider the two cases separately.

\medskip
\noindent \textbf{Case 1:} Suppose that $\eta>2$, $r>\delta+1$, and $\gcd(u,r+\delta-1)\mid\delta$. Then, there exist integers $a$ and $b$
such that $$au+b(r+\delta-1)=\delta.$$
Set $D_2=\{au\}$. We first observe that $D_2$ is a singleton
$q$-cyclotomic coset modulo $n$. Indeed, since
$(r+\delta-1)\mid(q-1)$, we have
$$(q-1)au\equiv0\pmod{u(r+\delta-1)}.$$
Let $x=u/\eta$ and
$D_1=\{x(r+\delta-1)\}$. As shown in Case 1 of Theorem \ref{thm:one zero extension}, $D_1$ is a singleton
$q$-cyclotomic coset modulo $n$, and
$(L\cup D_1)\cap\bigl(-(L\cup D_1)\bigr)=\varnothing$. Let $Z=L\cup D_1\cup D_2$ be the complete defining set of a cyclic code $C$ over $\mathbb{F}_q$. Then by Proposition \ref{(r,delta)locality}, $C$ has $(r, \delta)$ locality. It remains to verify that the addition of $D_2$ preserves the
dual-containing property. We first show that
$L\cap(-D_2)=\varnothing$. Suppose, to the contrary, that for some
$0\leq m\leq u-1$ and $1\leq i\leq\delta-1$,
$$m(r+\delta-1)+i\equiv-au\pmod{n}$$
Using $au=\delta-b(r+\delta-1)$, we obtain
$m(r+\delta-1)+i
\equiv-\delta+b(r+\delta-1)\pmod{n}$.
Reducing modulo $(r+\delta-1)$ gives
$$i+\delta\equiv0\pmod{r+\delta-1}$$
However,
$0<i+\delta\leq2\delta-1<r+\delta-1$,
where the last inequality follows from $r>\delta+1$.   Hence, $L\cap(-D_2)=\varnothing$, and equivalently
$(-L)\cap D_2=\varnothing$. Next, we show that $D_2\cap(-D_2)=\varnothing$. Otherwise,
$au\equiv-au\pmod{n}$; hence, 
$2au\equiv0\pmod{u(r+\delta-1)}$. Putting $au=\delta-b(r+\delta-1)$ and reducing modulo
$(r+\delta-1)$ gives
$$2\delta\equiv0\pmod{r+\delta-1}.$$
This contradicts
$0<2\delta<r+\delta-1$. Thus,
$D_2\cap(-D_2)=\varnothing$. Finally, suppose that $D_1\cap(-D_2)\neq\varnothing$. Then
$$x(r+\delta-1)\equiv-au\pmod{n}.$$
Reducing modulo $(r+\delta-1)$ and using
$au\equiv\delta\pmod{r+\delta-1}$ gives
$\delta\equiv0\pmod{r+\delta-1}$, which is impossible since
$0<\delta<r+\delta-1$. Hence,
$D_1\cap(-D_2)=\varnothing$. Equivalently,
$D_2\cap(-D_1)=\varnothing$. Therefore, for
$Z=L\cup D_1\cup D_2$, we have
$Z\cap(-Z)=\varnothing$. Hence, the cyclic code $C$ with
defining set $Z$ satisfies
$C^{\perp}\subseteq C$. Moreover, the defining set $Z$ contains $x(r+\delta-1),\,x(r+\delta-1)+1,\ldots, x(r+\delta-1)+(\delta-1)$ and $au$. Therefore, by Lemma \ref{distance}, we obtain $d(C)\geq\delta+2$. Since $\eta>2$, we have
$u\geq3$. The two additional defining zeros $D_1$ and $D_2$ can affect at most two of the $u$ gaps determined by the locality blocks. Consequently, there exists a gap containing the $r$ consecutive integers $$m(r+\delta-1)+\delta,\ m(r+\delta-1)+(\delta+1),\ldots,(m+1)(r+\delta-1)$$ entirely in $Z^c$. Therefore, $d(C^\perp)\ge r+1.$ Since $r>\delta+1$, $d(C^\perp)\ge r+1>\delta+2.$

\medskip
\noindent \textbf{Case 2:} Suppose that $\eta\leq2$, $r>(\delta+3)$,
$\gcd(u,r+\delta-1)\mid(\delta+1)$, and
$\eta\mid\dfrac{\delta(q-1)}{r+\delta-1}$.
Then the congruence
$$(q-1)x\equiv-\dfrac{\delta(q-1)}{r+\delta-1}\pmod{u}$$
has a solution $x$. Set
$D_1=\{x(r+\delta-1)+\delta\}$.
As established in Case 2 of Theorem \ref{thm:one zero extension}, $D_1$ is a singleton
$q$-cyclotomic coset modulo $n$ and
$(L\cup D_1)\cap\bigl(-(L\cup D_1)\bigr)=\varnothing$. Since $\gcd(u,r+\delta-1)\mid(\delta+1)$, there exist integers
$a_1$ and $b_1$ such that
$$a_1u+b_1(r+\delta-1)=\delta+1.$$
Let $D_2=\{a_1u\}$. Since $r+\delta-1\mid(q-1)$, we have
$(q-1)a_1u\equiv0\pmod{u(r+\delta-1)}$.
Thus, $D_2$ is also a singleton $q$-cyclotomic coset modulo $n$. Let $Z=L\cup D_1\cup D_2$ be the complete defining set of a cyclic code $C$ over $\mathbb{F}_q$. Then, by Proposition \ref{(r,delta)locality}, $C$ has $(r, \delta)$ locality. We next show that $L\cap(-D_2)=\varnothing$. Suppose, to the contrary,
that for some $0\leq m\leq u-1$ and $1\leq i\leq\delta-1$,
$$m(r+\delta-1)+i\equiv-a_1u\pmod{u(r+\delta-1)}$$
Reduction modulo $r+\delta-1$ gives
$i+\delta+1\equiv0\pmod{r+\delta-1}$.
However,
$0<i+\delta+1\leq2\delta<r+\delta-1$,
where the last inequality follows from $r>\delta+3$.
Hence, $L\cap(-D_2)=\varnothing$, and equivalently
$D_2\cap(-L)=\varnothing$. We also claim that $D_2\cap(-D_2)=\varnothing$. It can be easily seen  that it is true as $r>\delta+3.$ It remains to verify that $D_1\cap(-D_2)=\varnothing$. Suppose,
to the contrary, that
$$x(r+\delta-1)+\delta\equiv-a_1u\pmod{n}.$$
Reducing modulo $r+\delta-1$ and using
$a_1u\equiv\delta+1\pmod{r+\delta-1}$ gives
$2\delta+1\equiv0\pmod{r+\delta-1}$.
Since
$0<2\delta+1<r+\delta-1$
under the corresponding parameter restriction, this is impossible.
Therefore,
$D_1\cap(-D_2)=\varnothing$, and equivalently
$D_2\cap(-D_1)=\varnothing$. Consequently, for $Z=L\cup D_1\cup D_2$, we have
$Z\cap(-Z)=\varnothing$. Hence, the cyclic code $C$ with
defining set $Z$ satisfies
$C^{\perp}\subseteq C$. Similarly, in this case, the defining set $Z$ contains $x(r+\delta-1)+1,\ldots, x(r+\delta-1)+(\delta-1),\, x(r+\delta-1)+\delta$ and $a_1u$. Hence, an application of Lemma \ref{distance} yields $d(C)\geq\delta+2$. The additional defining zeros $D_1$ and $D_2$ have residues $\delta$ and
$\delta+1$ modulo $r+\delta-1$. Thus, the $r-2$ consecutive integers
\[
    m(r+\delta-1)+\delta+2,\ldots,
    (m+1)(r+\delta-1)
\]
belong to $Z^c$ for a suitable block. Hence, $d(C^\perp)\ge r-1.$ Since $r>\delta+3$, we obtain $ d(C^\perp)\ge r-1>\delta+2.$

\medskip
Furthermore, in both cases Case(1) and Case(2), $D_1$ and $D_2$ are distinct, and neither belongs to $L$.
Hence,
$|Z|=u(\delta-1)+2$,
and consequently $\dim(C)
=n-|Z|
=u(r+\delta-1)-u(\delta-1)-2
=ur-2.$
By bound~\eqref{singleton_lrc},
$d(C)\leq\delta+2$.
Thus, $d(C)=\delta+2$, and $C$ is an optimal
dual-containing cyclic $(r,\delta)$-LRC with
parameters
$$\bigl[u(r+\delta-1),\,ur-2,\,\delta+2\bigr]_q.$$
Also, in both cases $d(C^\perp)>d(C)=\delta+2$. Consequently, the
CSS construction gives a pure quantum cyclic code $\mathcal{Q}(C)$. Moreover, by Theorem \ref{classical-quantum-locality}, $\mathcal{Q}(C)$ is a pure quantum cyclic $(r,\delta)$-LRC with
parameters
\[
\bigl[\!\bigl[
u(r+\delta-1),\,
u(r-\delta+1)-4,\,
\delta+2
\bigr]\!\bigr]_q.
\]
The code $\mathcal{Q}(C)$ attains the quantum Singleton-type bound
\eqref{singleton_qlrc}, and is therefore optimal.

\end{proof}

\begin{example} Let $q=19,\ \delta=3,\ r=7,\ u=8.$ Then, $r+\delta-1=9\mid 18=q-1$ and 
$
\eta=\gcd(u,q-1)=\gcd(8,18)=2\leq2.
$
Moreover, $r=7>\delta+3=6,$ and $\gcd(u,r+\delta-1)=\gcd(8,9)=1\mid(\delta+1).$ Finally,
$$
\frac{\delta(q-1)}{r+\delta-1}
=\frac{3\cdot18}{9}=6;
$$
hence, $\eta\mid6.$ Therefore, all the hypotheses of Case (2) of
Theorem~\ref{thm:two-zero extension} are satisfied. Since $n=u(r+\delta-1)=8\cdot9=72$, we have $$L=\bigcup_{i=1}^{2}\{9m+i:0\leq m\leq7\}=\{1,2,10,11,19,20,28,29,37,38,46,47,55,56,64,65\}.$$
In the present case, this congruence becomes $18x+6\equiv0\pmod8.$ We may take $x=1$; hence, 
$
D_1=\{x(r+\delta-1)+\delta\}=\{12\}.
$
Furthermore, $5u+4\cdot(r+\delta-1)=4=\delta+1,$ so that we may take
$D_2=\{5u\}=\{40\}.$ Therefore, the complete defining set is
$$
Z=L\cup D_1\cup D_2
=\{1,2,10,11,12,19,20,28,29,37,38,40,46,47,55,56,64,65\}.
$$ 
The negative of $Z$ modulo $72$ is
\begin{align*}
-Z
=\{7,8,16,17,25,26,32,34,35,43,44,52,53,60,61,62,70,71\}.
\end{align*}
It follows that $Z\cap(-Z)=\varnothing.$ 
Since $|Z|=16+2=18.$ Consequently, $\dim(C)=n-|Z|=72-18=54.$ Using Magma \cite{magma} we verified that $d(C)=5$ and $d(C^{\perp})=8$; hence,  $C$ is a dual-containing optimal cyclic $(7,3)$-LRC over $\mathbb{F}_{19}$ with parameters
$[72,54,5]_{19}.$ Moreover, there exists an optimal pure quantum cyclic
$(7,3)$-LRC with parameters $\bigl[\!\bigl[72,36,5\bigr]\!\bigr]_{19}.$
\end{example}

\begin{theorem}\label{thm:general construction}
 Let $n=u(r+\delta-1)$ be a positive integer such that $(r+\delta-1)\mid(q-1)$. Let $ \gcd(u,r+\delta-1)=1$ and  \(d\) be an integer satisfying $\delta+2\le d\le 2\delta$. If $r>2d-\delta-1.$ Then, there exists a dual-containing optimal cyclic $(r,\delta)$-LRC over \(\mathbb F_q\) with parameters $\bigl[u(r+\delta-1),\,ur-d+\delta,\,d\bigr]_q.$ Moreover, there exists an optimal pure quantum cyclic \((r,\delta)\)-LRC with parameters 
\[
\bigl[\!\bigl[
u(r+\delta-1),\,
u(r-\delta+1)-2(d-\delta),\,
d
\bigr]\!\bigr]_q.
\]
\end{theorem}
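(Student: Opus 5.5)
Set $s=d-\delta$, so $2\le s\le\delta$. The plan is to add $s$ singleton cosets to $L$ whose residues modulo $r+\delta-1$ are $\delta,\delta+1,\ldots,d-1$. Since $\gcd(u,r+\delta-1)=1$, for each $j\in\{0,1,\ldots,s-1\}$ there are integers $a_j,b_j$ with
\[
a_ju+b_j(r+\delta-1)=\delta+j .
\]
Put $D=\{a_ju: 0\le j\le s-1\}$ and $Z=L\cup D$. Because $(r+\delta-1)\mid(q-1)$, we have $(q-1)a_ju\equiv 0\pmod{u(r+\delta-1)}$, so each $\{a_ju\}$ is a singleton $q$-cyclotomic coset. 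This is the same mechanism as $D_2$ in Theorem~\ref{thm:two-zero extension}. By Lemma~\ref{Lqcyclotomic}, $Z$ is therefore a complete defining set of a cyclic code $C$ over $\mathbb F_q$, and Proposition~\ref{(r,delta)locality} gives $(r,\delta)$-locality.

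The residues of $D$ are distinct and lie in $\{\delta,\ldots,d-1\}$. This range is disjoint from $\{1,\ldots,\delta-1\}$ and from $0$, because $d-1<r+\delta-1$. Hence the elements of $D$ are pairwise distinct and $D\cap L=\varnothing$. It follows that $|Z|=u(\delta-1)+s$ and
\[
\dim C=u(r+\delta-1)-u(\delta-1)-(d-\delta)=ur-d+\delta .
\]

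For dual containment I will check $Z\cap(-Z)=\varnothing$ by reducing modulo $r+\delta-1$, using Remark~\ref{rem:symmetry} to halve the work. Since $r>2d-\delta-1\ge\delta+3>\delta-1$, Lemma~\ref{Lqcyclotomic} gives $L\cap(-L)=\varnothing$. The other cases reduce to residue sums that would have to vanish:
\begin{itemize}
\item $L\cap(-D)\ne\varnothing$ would force $i+\delta+j\equiv0$, with $0<i+\delta+j\le 2\delta-1+s-1=d+\delta-2$;
\item $D\cap(-D)\ne\varnothing$ would force $(\delta+j)+(\delta+j')\equiv0$, with $0<(\delta+j)+(\delta+j')\le 2d-2$.
\end{itemize}
Both upper bounds are below $r+\delta-1$ precisely because $r>2d-\delta-1$, i.e.\ $r+\delta-1>2d-2\ge d+\delta-2$. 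So no such sum vanishes, and $C^\perp\subseteq C$.

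For the minimum distance, fix any block $m$. The set
\[
\{m(r+\delta-1)+i:1\le i\le\delta-1\}\cup D
\]
consists of $\delta-1+s=d-1$ elements of $Z$ whose residues are $1,2,\ldots,d-1$, an arithmetic progression with difference $1$. Since $d\le2\delta$, Lemma~\ref{distance} gives $d(C)\ge d$. The bound \eqref{singleton_lrc} with $k=ur-s$ gives $\lceil k/r\rceil=u$ (as $1\le s<r$), so $d(C)\le n-k+1-(u-1)(\delta-1)=d$. Hence $d(C)=d$ and $C$ is optimal.

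For purity, note that the $s\le\delta$ elements of $D$ have residues in $\{\delta,\ldots,d-1\}$. The positions with residues $d,d+1,\ldots,r+\delta-1$ (residue $r+\delta-1$ meaning $0$) are never in $Z$, in any block. Concatenated with the next block's start, these give a run of $r+\delta-d$ consecutive integers in $Z^c$. Lemma~\ref{dual distance-complement} then gives $d(C^\perp)\ge r+\delta-d+1$, and $r>2d-\delta-1$ yields $r+\delta-d+1>d$. So $\mathcal Q(C)$ is pure. Theorems~\ref{classical-quantum-locality} and~\ref{thm:singleton_qlrc} together with the CSS construction then give the stated quantum parameters, with $2k-n=u(r-\delta+1)-2(d-\delta)$, meeting \eqref{singleton_qlrc} with equality.

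I expect the main obstacle to be the choice of $D$: it must simultaneously give singleton cosets, extend the residue progression, and avoid negatives. Taking elements of the form $a_ju$ handles all three at once. The only delicate points are the inequality bookkeeping showing that $r>2d-\delta-1$ suffices for every residue sum, and the purity run-length estimate.
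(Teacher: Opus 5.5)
Your proposal is correct and follows the paper's proof essentially step for step. The paper fixes one Bézout pair with $au\equiv 1 \pmod{r+\delta-1}$ and sets $D=\{auj:\delta\le j\le d-1\}$. Since $\gcd(u,r+\delta-1)=1$, the Chinese Remainder Theorem shows that your elements $a_ju$ are exactly these same elements of $\mathbb{Z}_n$, and your dual-containment, dimension, distance (via Lemma~\ref{distance}) and purity arguments then coincide with the paper's.
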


\begin{proof}
Since \(\gcd(u,r+\delta-1)=1\), there exist integers \(a,b\) such that $au+b(r+\delta-1)=1.$ Define
$$
D=\{auj:\delta\le j\le d-1\}
$$
and let $Z=L\cup D.$ We first observe that every element of \(D\) is a singleton
\(q\)-cyclotomic coset modulo \(n\). Indeed, since \((r+\delta-1)\mid(q-1)\),
\[
(q-1)auj\equiv 0 \pmod{u(r+\delta-1)};
\]
hence, $q(auj)\equiv auj\pmod n.$ Therefore, \(Z\) is the complete defining set of a cyclic code \(C\) over
\(\mathbb F_q\). Since \(L\subseteq Z\), Proposition~\ref{(r,delta)locality}
implies that \(C\) has \((r,\delta)\)-locality. We next prove that \(C\) is dual-containing. We already have \(L\cap(-L)=\varnothing\), as $r>2d-\delta-1$ implies $r>\delta-1$.
Suppose that \(L\cap(-D)\neq\varnothing\). Then, for some
\(0\le m\le u-1\), \(1\le i\le\delta-1\), and
\(\delta\le j\le d-1\),
\[
m(r+\delta-1)+i\equiv-auj\pmod {u(r+\delta-1)}.
\]
Reducing modulo \((r+\delta-1)\), and using \(au\equiv1\pmod {(r+\delta-1)}\), gives
\[
i+j\equiv0\pmod{(r+\delta-1)}.
\]
However, $0<i+j\le d+\delta-2<r+\delta-1,$ because \(r>2d-\delta-1\) implies \(r>d-1\). This is impossible.
Thus, $L\cap(-D)=\varnothing.$ It remains to show that \(D\cap(-D)=\varnothing\). Suppose otherwise.
Then, for some \(\delta\le j_1,j_2\le d-1\),
\[
auj_1\equiv-auj_2\pmod n.
\]
Reducing modulo \((r+\delta-1)\) gives
$j_1+j_2\equiv0\pmod{(r+\delta-1)}.$ On the other hand,
$
0<j_1+j_2\le2(d-1)<r+\delta-1,
$
where the last inequality is exactly equivalent to $r>2d-\delta-1.$
This contradiction shows that $D\cap(-D)=\varnothing.$ Therefore,
$Z\cap(-Z)=\varnothing,$ and hence $C^\perp\subseteq C.$ We now determine the parameters of \(C\). Observe that \(D\cap L=\varnothing\). Moreover, the elements of \(D\) are
distinct modulo \(n\). Therefore, $|Z|=u(\delta-1)+(d-\delta).$
Consequently, $\dim(C)
=n-|Z|=ur-d+\delta.$ The defining set $Z$ contains $$A=\{1,2,\dots,\delta-1, au\delta,au(\delta+1),\dots,au(d-1)\}=\{1,2,\dots,d-1\}\pmod{(r+\delta-1)}.$$ Thus, by Lemma~\ref{distance},
$d(C)\ge d.$ On the other hand, the Singleton-type bound for \((r,\delta)\)-LRCs gives $d(C)\leq d$. Hence $d(C)=d,$
and \(C\) is an optimal dual-containing cyclic
\((r,\delta)\)-LRC with parameters
\[
\bigl[u(r+\delta-1),\,ur-d+\delta,\,d\bigr]_q.
\]
It remains to establish purity. Since the elements of \(D\) have
residues  $\delta,\delta+1,\ldots,d-1$, while the elements of \(L\) have residues
\(1,\ldots,\delta-1\). Then, for every \(0\le m\le u-1\), the
\(r+\delta-d\) consecutive integers
\[
m(r+\delta-1)+d,\,
m(r+\delta-1)+d+1,\ldots,\,
(m+1)(r+\delta-1)
\]
belong to \(Z^c\). Hence, by Lemma~\ref{dual distance-complement},
$d(C^\perp)\ge r+\delta-d+1.$ Since $r>2d-\delta-1,$ we obtain
 $d(C^\perp)>d=d(C).$ The CSS construction yields a pure quantum cyclic code $\mathcal{Q}(C)$. Moreover, by Theorem \ref{classical-quantum-locality},$\mathcal{Q}(C)$ is a pure quantum cyclic \((r,\delta)\)-LRC with
parameters
\[
\bigl[\!\bigl[
u(r+\delta-1),\,
u(r-\delta+1)-2(d-\delta),\,
d
\bigr]\!\bigr]_q.
\]
Since the underlying classical code is optimal and the resulting
quantum code is pure, it attains the quantum Singleton-type bound
and is therefore optimal.
\end{proof}

\begin{example} Let $q=17$, $\delta=4$, $d=7$, $r=13$, and $u=3$. Then
$r+\delta-1=16\mid16=q-1$ and $\gcd(u,r+\delta-1)=\gcd(3,16)=1.$ Moreover,
$
\delta+2\leq d\leq2\delta
$
and
$
r(=13)>2d-\delta-1=9
$
Therefore, all the hypotheses of Theorem \ref{thm:general construction} are satisfied. In this case,
$
n=u(r+\delta-1)=48.
$
The locality set is
\begin{align*}
L
=\bigcup_{i=1}^{3}\{16m+i:0\leq m\leq2\}
=\{1,2,3,17,18,19,33,34,35\}\pmod{48}.
\end{align*}
Furthermore, $11u-2(r+\delta-1)=11\cdot 3-2\cdot 16=1.$
Thus, we may take $a=11$. We obtain
\begin{align*}
D_2=\{ 33j:4\leq j\leq 6\}=\{36,21,6\}\pmod{48}.
\end{align*}
Therefore, the complete defining set is
$Z=L\cup D=\{1,2,3,6,17,18,19,21,33,34,35,36\} \pmod{48}.
$ The negative of $Z$ modulo $48$ is
\begin{align*}
-Z
=\{12,13,14,15,27,29,30,31,42,45,46,47\}\pmod{48}.
\end{align*}
Observe that $Z\cap(-Z)=\varnothing.$ Therefore, the corresponding cyclic code $C$ is dual-containing optimal $(13,4)$-LRC over $\mathbb{F}_{17}$ with parameters
$[48,36,7]_{17}.$  Using Magma \cite{magma}, we calculate $d(C^{\perp})=14\neq d(C)$. Therefore, there exists an optimal pure quantum cyclic $(13,4)$-LRC with parameters
$\bigl[\!\bigl[48,24,7\bigr]\!\bigr]_{17}.
$
\end{example}

All results of this section are summarized in Table \ref{tab:euclidean_constructions}.

\begin{table*}[ht]
\centering
\caption{Dual-containing optimal cyclic $(r,\delta)$-LRCs and the corresponding optimal pure
quantum $(r,\delta)$-LRCs. Throughout the table $Z=L\cup D$, where $L=\cup_{i=1}^{\delta-1}\{m(r+\delta-1)+i: 0\leq m\leq u-1\}$.}
\label{tab:euclidean_constructions}
\renewcommand{\arraystretch}{1.35}
\setlength{\tabcolsep}{4pt}
\scriptsize
\begin{tabular}{|c|p{3.1cm}|p{3.8cm}|p{3.4cm}|p{4.5cm}|}
\hline
\textbf{Result}
&
\textbf{Conditions}
&
\textbf{Distance Defining Set}
&
\textbf{Optimal Classical LRC}
&
\textbf{Optimal Quantum LRC}
\\
\hline

Thm.\ref{thm:basic construction}
&
$n\mid(q-1)$, $\delta\leq\ell<r$
&
$D=\{1,2,\ldots,\ell\}$
&
$\bigl[n,\,ur-\ell+\delta-1,\,\ell+1\bigr]_q$
&
$\bigl[\!\bigl[n,\,
u(r-\delta+1)-2\ell+2(\delta-1),\ell+1
\bigr]\!\bigr]_q$
\\
\hline

\multirow{2}{*}{Thm.\ref{thm:one zero extension}}
&
$(r+\delta-1)\mid(q-1)$, 

$2\leq\delta< r,$

$\eta>2$
&
$D=\{\frac{u}{\eta}(r+\delta-1)\}$,

$\eta=\gcd(u,q-1)$
&
\multirow{2}{*}{$\bigl[n,\,ur-1,\,\delta+1\bigr]_q$}
&
\multirow{2}{*}{$\bigl[\!\bigl[n,\,
u(r-\delta+1)-2,\delta+1
\bigr]\!\bigr]_q$}
\\
\cline{2-3}
&
$(r+\delta-1)\mid(q-1)$,

$\delta+1<r$,\quad 
$\eta\leq2$, 

$\eta\mid
\dfrac{\delta(q-1)}{r+\delta-1}$
&
$D=\{\frac{u}{\eta}(r+\delta-1)+\delta\}$

$\eta=\gcd(u,q-1)$
& &
\\
\hline

\multirow{2}{*}{Thm.\ref{thm:two-zero extension}}
&
$(r+\delta-1)\mid(q-1),$

 $\delta+1<r$,  \quad $\eta>2$,

$\gcd(u,r+\delta-1)\mid\delta$

&

$D_1=\{\frac{u}{\eta}(r+\delta-1)\}$,

$D_2=\{au\}$,

$au\equiv \delta\pmod{r+\delta-1}$.

&
\multirow{2}{*}{$\bigl[n,\,ur-2,\,\delta+2\bigr]_q$}
&
\multirow{2}{*}{$\bigl[\!\bigl[n,\,
u(r-\delta+1)-4,\delta+2
\bigr]\!\bigr]_q$}
\\
\cline{2-3}
&
$(r+\delta-1)\mid(q-1),$

$\delta+3<r$,\quad $\eta\leq2$,

$\gcd(u,r+\delta-1)\mid(\delta+1)$,

$\eta\mid
\dfrac{\delta(q-1)}{r+\delta-1}$
&

$D_1=\{\frac{u}{\eta}(r+\delta-1)+\delta\}$,

$D_2=\{a_1u\}$,

$a_1u\equiv \delta+1\pmod{r+\delta-1}$
& 
&
\\
\hline

Thm.\ref{thm:general construction}
&
$(r+\delta-1)\mid(q-1)$,

$\gcd(u,r+\delta-1)=1$,

$\delta+2\leq d\leq2\delta$,

$r>2d-\delta-1$,

&

$D_2=\{auj:\delta\leq j\leq d-1\}$,

$au\equiv1\pmod{r+\delta-1}$.

&
$\bigl[n,\,ur-d+\delta,\,d\bigr]_q$
&
$\bigl[\!\bigl[n,\,
u(r-\delta+1)-2(d-\delta), d
\bigr]\!\bigr]_q$
\\
\hline
\end{tabular}
\end{table*}

\section{Hermitian dual-containing cyclic $(r,\delta)$-LRCs}\label{sec:hermitian-cyclic-lrc}
This section presents several families of Hermitian dual-containing optimal
cyclic $(r,\delta)$-LRCs over $\F_{q^2}$ and the
associated quantum LRCs over $\F_q$.  The constructions
are divided according to whether $(r+\delta-1)$ divides $q^2-1$ or $q^2+1$.
 Throughout this section, $q$ is a prime power and
$
n=u(r+\delta-1),
$ is positive integer
where $u\geq2$, $r\geq 2$, $\delta\geq2$, and $\gcd(n,q)=1$.   The following observation removes one of the two symmetric cross-intersection
checks from every proof.

\begin{lemma}
\label{lem:hermitian-cross-symmetry}
Let $A,B\subseteq\mathbb{Z}_n$ be unions of $q^2$-cyclotomic cosets.  Then
$$
A\cap(-qB)=\varnothing
\quad\Longleftrightarrow\quad
B\cap(-qA)=\varnothing.
$$
\end{lemma}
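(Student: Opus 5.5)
By symmetry it suffices to prove one implication, say that $A\cap(-qB)=\varnothing$ implies $B\cap(-qA)=\varnothing$; the converse then follows by exchanging the roles of $A$ and $B$. I would argue by contraposition. Suppose $x\in B\cap(-qA)$. Then $x\in B$, and $x\equiv -qa \pmod n$ for some $a\in A$. From this congruence I want to produce an element of $A\cap(-qB)$.

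The natural candidate is $a$ itself, so the task is to show $a\in -qB$. Multiplying $x\equiv -qa$ by $-q$ gives
\[
-qx\equiv q^2a \pmod n .
\]
Since $A$ is a union of $q^2$-cyclotomic cosets, $q^2a\in A$. Since $B$ is also a union of $q^2$-cyclotomic cosets, $x\in B$ gives $q^2x\in B$, and this is the element I will use. Indeed,
\[
-q\cdot(q^2x)=q^2(-qx)\equiv q^4 a \pmod n .
\]
So $q^4a\in -qB$. Because $A$ is closed under multiplication by $q^2$, we also have $q^4a\in A$. Hence $q^4a\in A\cap(-qB)$, which contradicts the hypothesis.

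Alternatively, one can avoid the extra power of $q$. The element $-qx\equiv q^2a$ lies in $-qB$, because $x\in B$, and it lies in $A$, because $a\in A$. So $q^2a\in A\cap(-qB)$ directly. This shorter version is the one I would present.

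The only subtle point is keeping track of which closure property is used where. The argument relies only on closure of $A$ and $B$ under multiplication by $q^2$ modulo $n$, which is exactly the hypothesis that both are unions of $q^2$-cyclotomic cosets. The identity $(-q)(-q)=q^2$ then converts the condition $x\in -qA$ into the condition $q^2a\in -qB$. I would remark that the condition $\gcd(n,q)=1$ is not even needed for this step, although it holds throughout the section.
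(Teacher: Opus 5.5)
Your proposal is correct and is essentially the paper's argument: both multiply the congruence by $-q$ to get an element congruent to $q^2$ times a member of one set, then use closure under multiplication by $q^2$, with the converse obtained by exchanging $A$ and $B$. The only difference is cosmetic: the paper starts from an element of $A\cap(-qB)$ and uses closure of $B$, while your shorter version proves the mirror direction using closure of $A$.
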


\begin{proof}
Suppose that $a\in A\cap(-qB)$.  Then $a\equiv-qb\pmod n$ for some
$b\in B$.  Multiplying by $-q$ gives
$$
-qa\equiv q^2b\pmod n.
$$
Since $B$ is $q^2$-cyclotomic closed, $q^2b\in B$, and hence
$B\cap(-qA)\neq\varnothing$.  The converse follows by interchanging $A$
and $B$.
\end{proof}
Thus, in every Hermitian dual-containment argument below, only one of the two symmetric cross-intersections needs to be checked.
\subsection{Hermitian dual-containing optimal cyclic $(r,\delta)$-LRCs when $(r+\delta-1)|(q^2-1)$}
In this subsection, we construct Hermitian analogues of the Euclidean dual-containing cyclic $(r,\delta)$-LRC families presented in Section \ref{sec:euclidean-cyclic-lrc}. The classical codes are defined over $\mathbb{F}_{q^2}$ and satisfy Hermitian dual containment. Although the classical parameter forms are analogous to those in the Euclidean case, the Hermitian framework offers important advantages for the associated quantum constructions. The Hermitian construction yields stabilizer codes over $\mathbb{F}_q$ and allows additional admissible parameter regimes. Moreover, the arithmetic conditions governing Hermitian dual containment differ from their Euclidean counterparts. Consequently, this framework enlarges the range of quantum cyclic $(r,\delta)$-LRCs  over $\mathbb{F}_q$. In this subsection, the locality set is
\begin{equation}
L=\bigcup_{i=1}^{\delta-1}
\{m(r+\delta-1)+i:0\leq m\leq u-1\}.
\label{eq:standard-hermitian-L}
\end{equation}

\begin{lemma}\label{Lhermitiandual}
Suppose that $(r+\delta-1)\mid(q^2-1)$.   Then, $L$ is a union of $q^2$-cyclotomic cosets modulo $n$. Moreover, if $r>q(\delta-1)$, then  $L\cap(-qL)=\varnothing.$
\end{lemma}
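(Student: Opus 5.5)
We follow the pattern of Lemma~\ref{Lqcyclotomic}, with $q^2$ in place of $q$ and $-q$ in place of $-1$.

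\emph{Cyclotomic closure.} Since $(r+\delta-1)\mid(q^2-1)$, write $q^2=1+c(r+\delta-1)$ for some integer $c$. For $0\le m\le u-1$ and $1\le i\le\delta-1$ we then have
\[
q^2\bigl(m(r+\delta-1)+i\bigr)\equiv \bigl(mq^2+ci\bigr)(r+\delta-1)+i \pmod n .
\]
After reducing modulo $n=u(r+\delta-1)$, this is an element $m'(r+\delta-1)+i$ with $0\le m'\le u-1$. So multiplication by $q^2$ preserves the residue class modulo $r+\delta-1$, and each $L_i$ (hence $L$) is a union of $q^2$-cyclotomic cosets modulo $n$. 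This uses only that $L_i$ is exactly the full residue class of $i$ modulo $r+\delta-1$ inside $\mathbb{Z}_n$.

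\emph{Disjointness $L\cap(-qL)=\varnothing$.} Suppose $m_1(r+\delta-1)+i_1\equiv -q\bigl(m_2(r+\delta-1)+i_2\bigr)\pmod n$. Reducing modulo $r+\delta-1$ gives
\[
i_1+qi_2\equiv 0\pmod{r+\delta-1}.
\]
With $1\le i_1,i_2\le\delta-1$, the sum satisfies $0<i_1+qi_2\le (q+1)(\delta-1)$.

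The obstacle is that the hypothesis $r>q(\delta-1)$ only gives $q(\delta-1)<r$, i.e.\ $(q+1)(\delta-1)<r+\delta-1+\bigl(q(\delta-1)-r\bigr)<r+\delta-1$. Concretely,
\[
(q+1)(\delta-1)=q(\delta-1)+(\delta-1)\le (r-1)+(\delta-1)=r+\delta-2<r+\delta-1,
\]
so the strict bound does hold. Thus $0<i_1+qi_2<r+\delta-1$, the congruence is impossible, and the intersection is empty.

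The only delicate point is this boundary arithmetic: $r>q(\delta-1)$ must be read as $r\ge q(\delta-1)+1$, which gives exactly the room needed. No appeal to Lemma~\ref{lem:hermitian-cross-symmetry} is required, since $L$ is intersected with $-qL$ itself.
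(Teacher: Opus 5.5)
Your proof is correct and follows the paper's argument essentially verbatim: multiplication by $q^2$ fixes residues modulo $r+\delta-1$, which gives the cyclotomic closure, and a collision forces $i_1+qi_2\equiv 0\pmod{r+\delta-1}$ with $0<i_1+qi_2\le(q+1)(\delta-1)\le r+\delta-2$, which is impossible. (In your first displayed chain the initial strict inequality should be an equality, $(q+1)(\delta-1)=r+\delta-1+\bigl(q(\delta-1)-r\bigr)$; your subsequent ``concretely'' line already gives the correct bound.)
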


\begin{proof}
Since  $(r+\delta-1)\mid(q^2-1)$, for every $0\leq m\leq u-1$ and
$1\leq i\leq\delta-1$, we have
\[
q^2(m(r+\delta-1)+i)\equiv m'(r+\delta-1)+i\pmod{n}.
\]
where $0\leq m'\leq u-1$. Thus $L$ is a union of $q^2$-cyclotomic cosets modulo $n$. 
Suppose that $L\cap(-qL)\neq\varnothing$, then 
\[
m_1(r+\delta-1)+i
\equiv
-q\bigl(m_2(r+\delta-1)+j\bigr)
\pmod n,
\]
for some $0\le m_1,m_2\le u-1,$ and $
1\le i,j\le\delta-1.
$ Reducing modulo $(r+\delta-1)$ gives
$i+qj\equiv0
\pmod{r+\delta-1}.
$ Since $1\le i+qj
\le (q+1)(\delta-1)
<(r+\delta-1),
$ by the assumption $q(\delta-1)<r,$ it follows that $L\cap(-qL)=\varnothing.$
\end{proof}
\begin{remark}
\label{rem:delta-two-L}
For $\delta=2$, the exact condition is
$
L\cap(-qL)=\varnothing
$  if and only if  
$r+1\nmid(q+1).
$ Thus, in this special case, the condition $(r+1)\nmid(q+1) $ is strictly weaker than the sufficient condition $r>q$ obtained by setting $\delta=2$ in the preceding lemma.
\end{remark}
\begin{theorem}\label{thm:q^2-1 basic construction}
Let $n=u(r+\delta-1)$ be a positive integer satisfying $n\mid(q^2-1)$. Assume that for an integer $\ell\geq(\delta-1)$, we have $(q-1)(\delta-1)+\ell<r
\quad\text{and}\quad(q+1)\ell<n.$ Then, there exists a Hermitian dual-containing optimal cyclic $(r,\delta)$-LRC over $\mathbb{F}_{q^2}$ with parameters
$[u(r+\delta-1),\,ur-\ell+(\delta-1),\,\ell+1]_{q^2}.$
Consequently, there exists an optimal pure quantum $(r,\delta)$-LRC over $\mathbb{F}_q$ with parameters
$[[\,u(r+\delta-1),\,
u(r-\delta+1)-2\ell+2(\delta-1),\  \ell+1\,]]_q.
$
\end{theorem}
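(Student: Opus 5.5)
We mirror the proof of Theorem~\ref{thm:basic construction}, replacing Euclidean duality by the Hermitian criterion $Z\cap(-qZ)=\varnothing$. Set $D=\{1,2,\ldots,\ell\}$ and $Z=L\cup D$. Since $n\mid(q^2-1)$, every $q^2$-cyclotomic coset modulo $n$ is a singleton. Hence $Z$ is a complete defining set of a cyclic code $C$ over $\mathbb{F}_{q^2}$. Because $L\subseteq Z$, Proposition~\ref{(r,delta)locality} gives $(r,\delta)$-locality.

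\textbf{Cardinality.} Since $\ell\geq\delta-1$, we have $L\cap D=\{1,\ldots,\delta-1\}$. Hence $|Z|=(u-1)(\delta-1)+\ell$ and $\dim C=ur-\ell+(\delta-1)$.

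\textbf{Distance.} $Z$ contains the $\ell$ consecutive integers $1,\ldots,\ell$, so the BCH bound gives $d(C)\geq\ell+1$. Bound~\eqref{singleton_lrc} with $k=ur-\ell+\delta-1$ gives $\lceil k/r\rceil=u$. This needs $\ell-\delta+1<r$, which holds since $(q-1)(\delta-1)+\ell<r$. So $d(C)\leq\ell+1$, equality holds, and $C$ is optimal.

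\textbf{Hermitian dual containment.} By Lemma~\ref{lem:hermitian-cross-symmetry}, we must check $L\cap(-qL)$, $D\cap(-qD)$, and one of the two cross terms, say $L\cap(-qD)$.
\begin{itemize}
\item \emph{$L\cap(-qL)$.} A common element forces $i+qj\equiv0\pmod{r+\delta-1}$ with $1\le i,j\le\delta-1$. But $2\le i+qj\le(q+1)(\delta-1)<r+\delta-1$, because $q(\delta-1)\le(q-1)(\delta-1)+\ell<r$. So this set is empty.
\item \emph{$L\cap(-qD)$.} A common element forces $i+qj\equiv0\pmod{r+\delta-1}$ with $1\le i\le\delta-1$ and $1\le j\le\ell$. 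Here $i+qj$ can exceed $r+\delta-1$, so a size argument alone does not work.
\item \emph{$D\cap(-qD)$.} A common element means $i+qj\equiv0\pmod n$ with $1\le i,j\le\ell$. Since $0<i+qj\le(q+1)\ell<n$, this is impossible, which is exactly the role of the hypothesis $(q+1)\ell<n$.
\end{itemize}

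\textbf{Main obstacle: the cross term.} I would first try multiplying the congruence $i+qj\equiv0\pmod{r+\delta-1}$ by $q$. Using $q^2\equiv1\pmod{r+\delta-1}$ (valid since $r+\delta-1\mid n\mid q^2-1$), this gives $j+qi\equiv0\pmod{r+\delta-1}$, where now $qi$ is small. Then $j+qi\le\ell+q(\delta-1)=(q-1)(\delta-1)+\ell+(\delta-1)<r+\delta-1$, and $j+qi>0$, a contradiction. This is presumably exactly why the hypothesis is stated as $(q-1)(\delta-1)+\ell<r$.

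\textbf{Purity and the quantum code.} For each block $m\ge1$, the $r$ integers $m(r+\delta-1)+\delta,\ldots,(m+1)(r+\delta-1)$ lie in $Z^c$. Lemma~\ref{dual distance-complement} then gives $d(C^{\perp_H})\ge r+1>\ell+1$. So the Hermitian construction yields a pure code with $d_{\mathcal Q}=\ell+1$. Theorem~\ref{classical-quantum-locality} gives $(r,\delta)$-locality, and $\kappa=2k-n=u(r-\delta+1)-2\ell+2(\delta-1)$. Optimality with respect to~\eqref{singleton_qlrc} follows from classical optimality and purity, as remarked after Theorem~\ref{thm:singleton_qlrc}.
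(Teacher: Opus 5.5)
Your proposal is correct and follows essentially the same route as the paper: the same $Z=L\cup D$, the same intersection checks for $Z\cap(-qZ)=\varnothing$, the same BCH plus Singleton-type argument (your check that $\lceil k/r\rceil=u$ is a welcome detail the paper leaves implicit), and the same purity argument via $r$ consecutive elements of $Z^c$. The only cosmetic difference is in the cross term: the paper checks $D\cap(-qL)$ directly, where $t+qi\le \ell+q(\delta-1)<r+\delta-1$ is already small, and then invokes Lemma~\ref{lem:hermitian-cross-symmetry}; your multiplication by $q$ using $q^2\equiv1 \pmod{r+\delta-1}$ is exactly that symmetry argument carried out at the residue level.
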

\begin{proof}
Let $D=\{1,2,\ldots,\ell\}$ and $C$ be the cyclic code over $\mathbb{F}_{q^2}$ with defining set $Z=L\cup D$. Since $n\mid(q^2-1)$, every $q^2$-cyclotomic coset modulo $n$ is a singleton. Hence, the  complete defining set of $C$ is precisely $Z$. Also, the defining set contains the locality set $L$, it follows from Proposition \ref{(r,delta)locality} that $C$ is an $(r,\delta)$-LRC.  It remains to prove that $C$ is Hermitian dual-containing. By the characterization of Hermitian dual-containing cyclic codes, it suffices to show that $Z\cap(-qZ)=\varnothing$, We do this by considering the following four possible intersections:
\[
L\cap(-qL),\qquad
L\cap(-qD),\qquad
D\cap(-qL),\qquad
D\cap(-qD).
\] By Lemma \ref{Lhermitiandual}, and under the assumptions $\ell\geq(\delta-1),\ \  (q-1)(\delta-1)+\ell<r$ , we have $L\cap (-qL)=\varnothing $.  Next, suppose that $D\cap-qL\neq \varnothing$ then
\[
t
\equiv
-q\bigl(m(r+\delta-1)+i\bigr)
\pmod n,
\]
for some $t\in D$ and $1\le i\le\delta-1,\ \ 0\leq m\leq u-1$. Reducing modulo $(r+\delta-1)$ yields
\[
t+qi\equiv0
\pmod{r+\delta-1}.
\]
Since $1\le t\le\ell,\ 1\le i\le\delta-1,$ and by the assumption $(q-1)(\delta-1)+\ell<r$, we have $D\cap(-qL)=\varnothing.$
Equivalently, $L\cap(-qD)=\varnothing.$ Finally, suppose that $D\cap(-qD)\neq\varnothing$, then
\[
t_1\equiv-qt_2
\pmod n
\]
for some $t_1,t_2\in D$. Since $1\le t_1,t_2\le\ell,$ we obtain $
0<t_1+qt_2\le(q+1)\ell<n,$ which contradicts the congruence modulo $n$. Hence, $D\cap(-qD)=\varnothing.$ Therefore,
$Z\cap(-qZ)=\varnothing,$ and consequently $C^{\perp_H}\subseteq C.$ Moreover, the parameters of $C$ can be determined as in Theorem \ref{thm:basic construction} as defining sets are same. Hence, $C$ is a Hermitian dual-containing optimal cyclic $(r,\delta)$-LRC with parameters
\[
[u(r+\delta-1),\,ur-\ell+(\delta-1),\,\ell+1]_{q^2}.
\]
Moreover, as explained in Theorem \ref{thm:basic construction},
$Z^c$ contains $r$ consecutive integers. Hence, by Lemma~\ref{dual distance-complement}, $d(C^{\perp_H})\ge r+1.$
The hypothesis $(q-1)(\delta-1)+\ell<r$ implies in particular that $r>\ell$. Therefore, $d(C^{\perp_H})\ge r+1>\ell+1=d(C).$ Thus, the associated Hermitian stabilizer code is pure and has minimum
distance exactly $\ell+1$. Consequently, by Theorem \ref{classical-quantum-locality}, there exists an optimal pure quantum $(r,\delta)$-LRC
with parameters $
\bigl[\!\bigl[
u(r+\delta-1),\,
u(r-\delta+1)-2\ell+2(\delta-1),\,
\ell+1
\bigr]\!\bigr]_q.
$
\end{proof}
\begin{example}
Let $q=7$, $u=2$, $r=22$, $\delta=3$, and $\ell=5$. Then
$n=u(r+\delta-1)=48.$ Since
$q^2-1=48,$ we have $n\mid(q^2-1)$. Moreover, $\ell=5\geq2=\delta-1$, and $17=(q-1)(\delta-1)+\ell<r(=22).$
The second inequality is also satisfied because $(q+1)\ell=(7+1)5=40<48=n.$
Thus, all the hypotheses of the theorem are satisfied. The locality set is
\begin{align*}
L=\bigcup_{i=1}^{2}\{24m+i:0\leq m\leq1\}=\{1,2,25,26\}\pmod{48}.
\end{align*}
Let $D=\{1,2,\ldots,\ell\}=\{1,2,3,4,5\}.$ Therefore, the complete defining set of a cyclic code $C$ over $\mathbb{F}_{7^2}$ is
$
Z=L\cup D=\{1,2,3,4,5,25,26\}\pmod{48}.
$
We now verify the Hermitian dual-containing condition. We have 
$$
-qZ=\{10,13,17,20,27,34,41\}\pmod{48}.
$$
Consequently, $Z\cap(-qZ)=\varnothing.$
Therefore, $C$ is Hermitian dual-containing. Using Magma \cite{magma}, we get $d(C)=6$ and $d(C^{\perp_H})=23$. Consequently, $C$ is a Hermitian dual-containing optimal
cyclic $(22,3)$-LRC over $\mathbb{F}_{49}$ with parameters
$[48,41,6]_{49}.$ Since $d(C^{\perp_H})\neq d(C)$, the resulting quantum code $\mathcal{Q}(C)$ by Hermitian construction is pure and optimal $(22,3)$-LRC over $\mathbb{F}_{7}$ with parameters
$
\bigl[\!\bigl[48,34,6\bigr]\!\bigr]_{7}.
$
\end{example}

\begin{corollary}
Let $n=q^2-1=u(r+1)$ and  $r>2q-3$, $q\geq 3.$
Then, there exists a Hermitian dual-containing optimal cyclic $(r,2)$-LRC over $\mathbb{F}_{q^2}$ with parameters $[q^2-1,\;ur-q+3,\;q-1]_{q^2}.$ Consequently, there exists an optimal pure quantum $(r,2)$-LRC over $\mathbb{F}_q$ with parameters $[[\,q^2-1,\;u(r-1)-2q+6, q-1\,]]_q.$
\end{corollary}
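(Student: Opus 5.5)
The plan is to obtain this as the special case $\delta=2$, $\ell=q-2$ of Theorem~\ref{thm:q^2-1 basic construction}. The choice of $\ell$ is read off from the target distance, since $d=\ell+1=q-1$ forces $\ell=q-2$. I will check each hypothesis of that theorem in turn and then substitute into its parameter formulas.

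\textbf{Hypotheses.}
\begin{itemize}
\item The divisibility $n\mid(q^2-1)$ holds trivially, because $n=q^2-1$.
\item The standing assumptions of the section are satisfied. We have $\gcd(n,q)=1$, since $q^2-1$ is coprime to $q$. We also need $u\geq 2$, which I take as part of the setting; it is automatic whenever $r+1$ is a proper divisor of $q^2-1$.
\item The condition $\ell\ge \delta-1=1$ becomes $q-2\ge 1$, which is exactly the assumption $q\ge 3$.
\item The condition $(q-1)(\delta-1)+\ell<r$ becomes $(q-1)+(q-2)=2q-3<r$, which is the stated hypothesis on $r$.
\item The last condition $(q+1)\ell<n$ reads
\[
(q+1)(q-2)=q^2-q-2<q^2-1 .
\]
This holds for every $q\ge 1$.
\end{itemize}

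\textbf{Parameters.} Substituting $\delta=2$ and $\ell=q-2$ into the classical parameters $[u(r+\delta-1),\,ur-\ell+(\delta-1),\,\ell+1]_{q^2}$ gives the dimension $ur-(q-2)+1=ur-q+3$ and the distance $q-1$. The length is $u(r+1)=q^2-1$. These match the stated classical code.

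For the quantum code, the dimension $u(r-\delta+1)-2\ell+2(\delta-1)$ becomes
\[
u(r-1)-2(q-2)+2=u(r-1)-2q+6,
\]
and the distance is again $q-1$. Hermitian dual containment, optimality and purity are all inherited directly from Theorem~\ref{thm:q^2-1 basic construction}.

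There is no real obstacle here; the only step requiring care is choosing $\ell=q-2$ and confirming that $\ell\ge1$ is exactly what $q\ge3$ provides.
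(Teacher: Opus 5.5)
Your proposal is correct and follows the same route as the paper, which states this corollary without a separate proof as the specialization $\delta=2$, $\ell=q-2$ of Theorem~\ref{thm:q^2-1 basic construction}. Your hypothesis checks ($q\ge 3$ gives $\ell\ge\delta-1$, $r>2q-3$ gives $(q-1)(\delta-1)+\ell<r$, and $(q+1)(q-2)<q^2-1$ holds automatically) and your substitutions into the classical and quantum parameter formulas are all right.
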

\begin{theorem}\label{thm: q^2-1 one zero extention}
Let $n=u(r+\delta-1)$  such that
$r+\delta-1\mid(q^2-1)$. Assume $q(\delta-1)<r,
$
and put $\zeta=\gcd(u,q^2-1).$ If $\zeta\nmid(q+1)$, then there exists a Hermitian dual-containing optimal cyclic $(r,\delta)$-LRC over $\F_{q^2}$ with parameters
$
[u(r+\delta-1),\,ur-1,\,\delta+1]_{q^2}.
$
Moreover, there exists an optimal pure quantum cyclic $(r,\delta)$-LRC over $\F_q$ with
parameters
$
\bigl[\!\bigl[u(r+\delta-1),\,u(r-\delta+1)-2,\,
\delta+1\bigr]\!\bigr]_q.
$
\end{theorem}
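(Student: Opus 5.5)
The plan is to follow Case 1 of Theorem~\ref{thm:one zero extension}, now for $q^2$-cyclotomic cosets and the Hermitian criterion $Z\cap(-qZ)=\varnothing$. Put $x=u/\zeta$ and $D=\{x(r+\delta-1)\}$, and let $C$ be the cyclic code over $\F_{q^2}$ with defining set $Z=L\cup D$.

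\textbf{Setup.} Since $\zeta\nmid(q+1)$, we have $\zeta\ge 2$, so $1\le x\le u-1$.

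\begin{itemize}
\item $D$ is a singleton $q^2$-cyclotomic coset. Indeed,
$(q^2-1)x(r+\delta-1)=\frac{q^2-1}{\zeta}\,u(r+\delta-1)\equiv 0\pmod n$, because $\zeta\mid(q^2-1)$.
\item By Lemma~\ref{Lhermitiandual}, $L$ is a union of $q^2$-cyclotomic cosets. Hence $Z$ is a complete defining set.
\item Proposition~\ref{(r,delta)locality} then gives $(r,\delta)$-locality.
\end{itemize}

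\textbf{Hermitian dual containment.} By Lemma~\ref{lem:hermitian-cross-symmetry}, it suffices to check three intersections.

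\begin{itemize}
\item $L\cap(-qL)=\varnothing$ follows from Lemma~\ref{Lhermitiandual}, since $q(\delta-1)<r$.
\item Suppose $m(r+\delta-1)+i\equiv -qx(r+\delta-1)\pmod n$ with $1\le i\le \delta-1$. Reducing modulo $r+\delta-1$ gives $i\equiv 0$, which is impossible. So $L\cap(-qD)=\varnothing$.
\item $D\cap(-qD)\neq\varnothing$ would mean $(q+1)x(r+\delta-1)\equiv 0\pmod{u(r+\delta-1)}$. That is $u\mid (q+1)u/\zeta$, i.e.\ $\zeta\mid(q+1)$. This contradicts the hypothesis, so $D\cap(-qD)=\varnothing$.
\end{itemize}

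I expect this last step to be the only genuinely new point: the hypothesis $\zeta\nmid(q+1)$ is exactly what the self-intersection of $D$ requires. Once these three checks hold, $C^{\perp_H}\subseteq C$.

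\textbf{Parameters and optimality.}
\begin{itemize}
\item $D\cap L=\varnothing$, since the residue $0$ does not occur in $L$. Hence $|Z|=u(\delta-1)+1$ and $\dim C=ur-1$.
\item $Z$ contains the $\delta$ consecutive integers $x(r+\delta-1),x(r+\delta-1)+1,\dots,x(r+\delta-1)+\delta-1$. The BCH bound gives $d(C)\ge\delta+1$.
\item The Singleton-type bound \eqref{singleton_lrc} with $k=ur-1$ gives $d(C)\le\delta+1$. So $d(C)=\delta+1$ and $C$ is optimal.
\end{itemize}

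\textbf{Purity.} The extra zero $x(r+\delta-1)$ lies only at the right end of the gap with $m=x-1$.

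\begin{itemize}
\item Because $u\ge 2$, there is some $0\le m\le u-1$ with $m+1\ne x$.
\item For that $m$, the $r$ consecutive integers $m(r+\delta-1)+\delta,\dots,(m+1)(r+\delta-1)$ all lie in $Z^c$.
\item Lemma~\ref{dual distance-complement} gives $d(C^{\perp_H})\ge r+1$.
\item Since $q\ge 2$ and $\delta\ge 2$, we have $r>q(\delta-1)\ge\delta$. Hence $d(C^{\perp_H})\ge r+1>\delta+1=d(C)$.
\end{itemize}

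\textbf{Conclusion.} Theorem~\ref{classical-quantum-locality} and the Hermitian construction yield a pure quantum $(r,\delta)$-LRC with parameters $[[u(r+\delta-1),\,u(r-\delta+1)-2,\,\delta+1]]_q$. Because the classical code is optimal and the quantum code is pure, it attains \eqref{singleton_qlrc} with equality.
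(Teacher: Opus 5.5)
Your proposal is correct and follows essentially the same route as the paper's proof. Both use $D=\{x(r+\delta-1)\}$ with $x=u/\zeta$ and the same three intersection checks, with $\zeta\nmid(q+1)$ handling exactly $D\cap(-qD)$, and both obtain purity from a gap of $r$ consecutive integers in $Z^c$. The only difference is that you re-derive the dimension, BCH and Singleton computations directly, where the paper cites Case 1 of Theorem~\ref{thm:one zero extension}, and you state explicitly that $\zeta\ge 2$ forces $1\le x\le u-1$.
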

\begin{proof}
Set $x=u/\zeta$, and define
$D=\{x(r+\delta-1)\}$.  Since  $\zeta\mid(q^2-1)$,
$
(q^2-1)x(r+\delta-1)\equiv0\pmod n,
$ 
so $D$ is a singleton $q^2$-cyclotomic coset modulo $n$. Hence,
$Z=L\cup D$ is a defining set of a cyclic code $C$ over
$\mathbb F_{q^2}$. Now, we will prove that $C$ is  Hermitian dual-containing.
Suppose first that $L\cap(-qD)\neq\varnothing$. Then, for some
$0\leq m\leq u-1$ and $1\leq i\leq\delta-1$,
$$
m(r+\delta-1)+i
\equiv-qx(r+\delta-1)\pmod{u(r+\delta-1)}.
$$
Reducing this congruence modulo $r+\delta-1$ gives
$$
i\equiv0\pmod{r+\delta-1},
$$
which is impossible since $1\leq i\leq\delta-1<r+\delta-1$. Therefore,
$L\cap(-qD)=\varnothing$. Equivalently,
$D\cap(-qL)=\varnothing$. Consider $D\cap(-qD)$, if this intersection is
nonempty, then
$$
(q+1)x(r+\delta-1)\equiv0\pmod{u(r+\delta-1)}.
$$
Since $x=u/\zeta$, this is equivalent to $\zeta\mid(q+1),$
which contradicts the assumption $\zeta\nmid(q+1)$. Thus,
$D\cap(-qD)=\varnothing$. Combining the above observations with
$L\cap(-qL)=\varnothing$, since $q(\delta-1)<r$ we obtain
$
Z\cap(-qZ)=\varnothing.
$
Therefore, $C^{\perp_H}\subseteq C$. Since $Z$ is the same defining set as that considered
in Case $1$ of Theorem \ref{thm:one zero extension}, the code $C$ is an optimal $(r,\delta)$-LRC with parameters $[u(r+\delta-1),ur-1,\delta+1]_{q^2}$. Moreover, $Z^c$ contains a set of $r$ consecutive integers. Hence, by Lemma $\ref{dual distance-complement}$, we have $d(C^{\perp_H})\ge r+1.$ Since \(q(\delta-1)<r\), in particular \(r>\delta\), and therefore $
d(C^{\perp_H})\ge r+1>\delta+1=d(C).
$
Thus, the resulting quantum code $\mathcal{Q}(C)$ from Hermitian construction is pure and has minimum distance
$d_{\mathcal Q}=\delta+1.$ Consequently, by Theorem \ref{classical-quantum-locality}, $\mathcal{Q}(C)$ is a pure quantum cyclic \((r,\delta)\)-LRC with parameters
$$
\bigl[\!\bigl[
u(r+\delta-1),\,
u(r-\delta+1)-2,\,
\delta+1
\bigr]\!\bigr]_q
$$
and is optimal w.r.t the quantum Singleton-type bound
\eqref{singleton_qlrc}.

\end{proof}
When $\delta=2$, the condition $L\cap(-qL)=\varnothing$ is ensured by
$(r+1)\nmid(q+1),$ which is weaker than the corresponding condition obtained by specializing Theorem~\ref{thm: q^2-1 one zero extention} to $\delta=2$. This observation yields the following corollary.

\begin{corollary}\label{cor:q^2-1 one zero extention for delta=2}
Let $n=u(r+1)$ be a positive integer such that $(r+1)\mid(q^2-1)$ and
$r+1\nmid(q+1)$. Suppose further that
$\gcd(u,q^2-1)\nmid(q+1)$. Then there exists a Hermitian dual-containing
optimal cyclic $(r,2)$-LRC with parameters
$[u(r+1),ur-1,3]_{q^2}$. Moreover,  there exists an optimal pure quantum cyclic
$(r,2)$-LRC with parameters $\left[\!\left[
u(r+1),u(r-1)-2,\ 3
\right]\!\right]_q.
$
\end{corollary}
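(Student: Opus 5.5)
We follow the proof of Theorem~\ref{thm: q^2-1 one zero extention} essentially verbatim. The one change is the verification of $L\cap(-qL)=\varnothing$, where we replace Lemma~\ref{Lhermitiandual} by the exact criterion of Remark~\ref{rem:delta-two-L}.

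\textbf{Construction and locality.} Set $\zeta=\gcd(u,q^2-1)$, $x=u/\zeta$, and $D=\{x(r+1)\}$. Put $Z=L\cup D$, where $L=\{m(r+1)+1:0\le m\le u-1\}$. Since $(r+1)\mid(q^2-1)$, the argument of Lemma~\ref{Lhermitiandual} shows that $L$ is a union of $q^2$-cyclotomic cosets. Since $\zeta\mid(q^2-1)$, $D$ is a singleton coset. Hence $Z$ is the complete defining set of a cyclic code $C$ over $\mathbb F_{q^2}$, and $C$ has $(r,2)$-locality by Proposition~\ref{(r,delta)locality}.

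\textbf{Hermitian dual containment.} We check $Z\cap(-qZ)=\varnothing$ via three intersections, using Lemma~\ref{lem:hermitian-cross-symmetry} to drop the fourth.
\begin{itemize}
\item $L\cap(-qL)$: suppose $m_1(r+1)+1\equiv -q(m_2(r+1)+1)\pmod n$. Reducing modulo $r+1$ gives $1+q\equiv0\pmod{r+1}$, i.e.\ $(r+1)\mid(q+1)$, which is excluded by hypothesis. This is the only place the argument departs from Theorem~\ref{thm: q^2-1 one zero extention}. The stronger requirement $r>q$ of Lemma~\ref{Lhermitiandual} is not needed, because for $\delta=2$ the residue pair $(i,j)$ is forced to be $(1,1)$.
\item $L\cap(-qD)$: reducing modulo $r+1$ gives $1\equiv0\pmod{r+1}$, which is impossible.
\item $D\cap(-qD)$: this is equivalent to $(q+1)x(r+1)\equiv0\pmod{u(r+1)}$, i.e.\ $\zeta\mid(q+1)$, which is excluded.
\end{itemize}

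\textbf{Parameters and purity.} Since $D\cap L=\varnothing$, we have $|Z|=u+1$, so $\dim C=ur-1$. The defining set $Z$ contains the two consecutive elements $x(r+1)$ and $x(r+1)+1$, so the BCH bound gives $d(C)\ge3$. The bound \eqref{singleton_lrc} gives $d(C)\le 3$, so $d(C)=3$ and $C$ is optimal.

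For purity we need $u\ge2$, so that some block $m$ with $m+1\neq x$ is untouched by $D$. In that block the $r$ consecutive integers $m(r+1)+2,\dots,(m+1)(r+1)$ lie in $Z^c$. Lemma~\ref{dual distance-complement} then gives $d(C^{\perp_H})\ge r+1$.

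\textbf{Main obstacle.} Purity requires $r+1>3$, i.e.\ $r\ge3$, and this is not stated explicitly in the corollary. We obtain it as follows.
\begin{itemize}
\item Suppose $r\le 2$. If $r=1$, then $r+1=2$ divides $q+1$ for odd $q$, contradicting the hypothesis. If $q$ is even and $r=1$, then $r+1=2\nmid q^2-1$, so the hypothesis $(r+1)\mid(q^2-1)$ already fails.
\item If $r=2$, then $3\mid(q^2-1)$ and $3\nmid(q+1)$ force $3\mid(q-1)$.
\end{itemize}
So $r=2$ is not immediately excluded, and in that case we need a separate estimate of $d(C^{\perp_H})$. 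We would first try using the standing assumption $r\ge2$ of this section together with a direct check. Failing that, we would add the mild requirement $r\ge3$, which is implicit in the corollary.

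\textbf{Conclusion.} Once $d(C^{\perp_H})>d(C)=3$ is established, Theorem~\ref{classical-quantum-locality} and the Hermitian construction yield a pure quantum code with parameters $[[u(r+1),u(r-1)-2,3]]_q$. This code attains \eqref{singleton_qlrc}.
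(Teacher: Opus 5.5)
Your route is the paper's own. The paper gives no separate proof of this corollary; it obtains it by rerunning the proof of Theorem~\ref{thm: q^2-1 one zero extention} with $\delta=2$ and replacing Lemma~\ref{Lhermitiandual} by the exact criterion of Remark~\ref{rem:delta-two-L}, exactly as you do. Your checks are all correct: the three intersections, $\dim C=ur-1$, $d(C)=3$, and the gap argument giving $d(C^{\perp_H})\ge r+1$.

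The obstacle you isolate at $r=2$ is real, and a direct check cannot repair it, because the conclusion is false there. This is a defect of the statement and of the paper's one-line justification. The theorem's purity step used $r>\delta$, which came from $q(\delta-1)<r$, and that hypothesis has been dropped. For a counterexample, take any $q\equiv 1\pmod 3$, $r=2$, $u=3$ (for instance $q=4$). All hypotheses of the corollary hold, with $x=1$ and $Z=\{1,3,4,7\}\subseteq\mathbb Z_9$. One computes $Z(C^{\perp_H})=-qZ^c=Z\cup\{0\}$.
\begin{itemize}
\item Let $\omega=\alpha^3$, a primitive cube root of unity lying in $\mathbb F_q$. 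Then $1+\omega x^3+\omega^2x^6\in C^{\perp_H}$, so $d(C^{\perp_H})=3=d(C)$.
\item Worse, vanishing at $\alpha^1,\alpha^4,\alpha^7$ says that the values $c_s\alpha^s$ of a codeword sum to zero within each residue class of its support modulo $3$. Hence every weight-$3$ word of $C$ is supported on a coset $s+3\mathbb Z_9$.
\item The zero at $\alpha^3$ then forces the coordinate sum of such a word to vanish. So it also vanishes at $\alpha^0$ and lies in $C^{\perp_H}$.
\end{itemize}
Thus $d_{\mathcal Q}\ge 4>d(C)$, and $\mathcal Q(C)$ is impure. So $r\ge 3$ is not ``implicit in the corollary''; it is a missing hypothesis. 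Once it is added, your proof is complete.
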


\begin{example}
Let $q=5$, $u=4$, $r=22$, and $\delta=3$. Then
$r+\delta-1=24$ and $n=u(r+\delta-1)=96$. Since
$q^2-1=24$, we have $r+\delta-1\mid(q^2-1)$. Moreover,
$q(\delta-1)=10<22=r$. Also,
$\gcd(u,q^2-1)=\gcd(4,24)=4$ and $q+1=6$, so that
$\gcd(u,q^2-1)\nmid(q+1)$. Thus, all the hypotheses of the theorem are satisfied. In this case, $\eta=\gcd(u,q^2-1)=4$ and $x=u/\eta=1$. The locality set is
$$
L=\bigcup_{i=1}^{2}\{24m+i:0\leq m\leq3\}
=\{1,2,25,26,49,50,73,74\}\pmod{96}.
$$
Furthermore, $D=\{x(r+\delta-1)\}=\{24\}$. Hence, the  complete defining set of a cyclic code $C$ is
$Z=L\cup D$, that is,
$
Z=\{1,2,24,25,26,49,50,73,74\}\pmod{96}.
$
Therefore, $|Z|=9$ and
$\dim(C)=n-|Z|=96-9=87$. From bound \eqref{singleton_lrc}, we have $d(C)\leq 9+1-2(\lceil\frac{87}{22}\rceil-1)=4$ and by BCH bound $d(C)\geq 4$.  Also, $$-qZ=-5Z=\{14,19,38,43,62,67,72,86,91\}\pmod{96}.$$
Observe that $Z\cap-qZ=\varnothing$. It follows  that $C$ is an optimal Hermitian
dual-containing cyclic $(22,3)$-LRC with parameters
$[96,87,4]_{25}$.  Using MAGMA
calculator \cite{magma}, we verified that $d(C^{\perp_H})> d(C)$. By Hermitian construction the resultant quantum cyclic code $\mathcal{Q}(C)$ is a pure distance optimal $(r=22,\delta =3)$-LRC with parameters $[[96,78,4]]_5$.
\end{example}
 \begin{theorem}\label{thm:q^2-1 two zero extention}
Let $n=u(r+\delta-1)$ be a positive integer such that $(r+\delta-1)\mid(q^2-1)$, $q(\delta-1)<r$, and define $\zeta=\gcd(u,q^2-1)$. Assume that
$\zeta\nmid(q+1),$ and $\gcd(u,r+\delta-1)\mid\delta.$ Choose integers $a,b$ satisfying  $au+b(r+\delta-1)=\delta,$ and suppose further that $(r+\delta-1)\nmid(q+1)a.$ Then, there exists a Hermitian dual-containing optimal cyclic
$(r,\delta)$-LRC over $\mathbb{F}_{q^2}$ with
parameters
$[u(r+\delta-1),ur-2,\delta+2]_{q^2}$. Moreover, there exists an optimal pure quantum cyclic
$(r,\delta)$-LRC over $\mathbb{F}_q$ with parameters
$\bigl[\!\bigl[u(r+\delta-1),\,u(r-\delta+1)-4,\,
\delta+2\bigr]\!\bigr]_q$.
\end{theorem}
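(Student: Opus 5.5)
The plan is to mimic Case 1 of Theorem~\ref{thm:two-zero extension}, with Euclidean checks replaced by Hermitian ones, reusing Theorem~\ref{thm: q^2-1 one zero extention} for $L\cup D_1$. Set $x=u/\zeta$, $D_1=\{x(r+\delta-1)\}$ and $D_2=\{au\}$, where $au+b(r+\delta-1)=\delta$. Let $C$ be the cyclic code over $\mathbb{F}_{q^2}$ with defining set $Z=L\cup D_1\cup D_2$.

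\textbf{Cyclotomic closure and locality.} Since $(r+\delta-1)\mid(q^2-1)$, we have $(q^2-1)au\equiv0\pmod n$, so $D_2$ is a singleton $q^2$-cyclotomic coset. By the proof of Theorem~\ref{thm: q^2-1 one zero extention}, $D_1$ is also a singleton coset. By Lemma~\ref{Lhermitiandual}, $L$ is a union of cosets. Hence $Z$ is a complete defining set, and Proposition~\ref{(r,delta)locality} gives $(r,\delta)$-locality.

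\textbf{Hermitian dual containment.} Theorem~\ref{thm: q^2-1 one zero extention} already gives $(L\cup D_1)\cap(-q(L\cup D_1))=\varnothing$, using $q(\delta-1)<r$ and $\zeta\nmid(q+1)$. By Lemma~\ref{lem:hermitian-cross-symmetry}, only three new intersections remain, each reduced modulo $r+\delta-1$ using $au\equiv\delta$.
\begin{itemize}
\item $L\cap(-qD_2)$ reduces to $i+q\delta\equiv0$ with $1\le i\le\delta-1$. The bound $q(\delta-1)<r$ gives only $i+q\delta\le q\delta+\delta-1<r+q+\delta-1$, which is not quite below $r+\delta-1$. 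So I will instead check $D_2\cap(-qL)$, which reduces to $\delta+qi\equiv0$. Here $0<\delta+qi\le\delta+q(\delta-1)<r+\delta$, so $\delta+qi\equiv0$ would force $\delta+qi=r+\delta-1$, i.e.\ $qi=r-1$. I will have to rule this out separately, e.g.\ by noting $qi\le q(\delta-1)<r$ and handling the boundary case $q(\delta-1)=r-1$ with care. \emph{This boundary check is the step I expect to be the main obstacle}; if the congruence argument fails there, I would use that $-q$ permutes residues and that $qi\not\equiv -\delta$ follows from the strict hypothesis.
\item $D_1\cap(-qD_2)$ reduces to $q\delta\equiv0\pmod{r+\delta-1}$. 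This is the main point where the hypothesis $(r+\delta-1)\nmid(q+1)a$ should enter; the condition stated in the theorem is meant to absorb it. Failing that, I would argue via $\gcd(q,r+\delta-1)=1$, which holds since $(r+\delta-1)\mid(q^2-1)$, giving $\delta\not\equiv0$.
\item $D_2\cap(-qD_2)$ means $(q+1)au\equiv0\pmod{u(r+\delta-1)}$, i.e.\ $(r+\delta-1)\mid(q+1)a$. This is excluded exactly by hypothesis.
\end{itemize}

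\textbf{Parameters and purity.} These are identical to Case 1 of Theorem~\ref{thm:two-zero extension}. $D_1$ and $D_2$ are distinct and disjoint from $L$, since their residues are $0$ and $\delta$, so $|Z|=u(\delta-1)+2$ and $\dim C=ur-2$. The residues $0,1,\dots,\delta-1,\delta$ form an arithmetic progression of length $\delta+1$, so Lemma~\ref{distance} gives $d(C)\ge\delta+2$. The bound \eqref{singleton_lrc} gives equality. For purity, $\zeta\nmid(q+1)$ forces $\zeta>2$, hence $u\ge3$. The two extra zeros touch at most two gaps, so some gap of $r$ consecutive integers lies in $Z^c$. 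Lemma~\ref{dual distance-complement} then gives $d(C^{\perp_H})\ge r+1>\delta+2$, using $r>q(\delta-1)\ge\delta+1$ (assuming $q\ge2$; the boundary case $\delta=2$, $q=2$ needs a quick separate check). Finally, Theorem~\ref{classical-quantum-locality} and the Hermitian construction give the pure optimal $[[u(r+\delta-1),u(r-\delta+1)-4,\delta+2]]_q$ code.
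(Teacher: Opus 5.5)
There is a genuine gap, in the $L$--$D_2$ cross-intersection, which is the step you flagged yourself. Your overall route otherwise matches the paper's proof: the same $D_1=\{x(r+\delta-1)\}$ and $D_2=\{au\}$, reuse of the one-zero Hermitian theorem for $L\cup D_1$, and the same parameter and purity arguments.

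Passing to $D_2\cap(-qL)$ is legitimate by Lemma~\ref{lem:hermitian-cross-symmetry}. Your bound $0<\delta+qi\le r+\delta-1$ correctly forces $\delta+qi=r+\delta-1$, i.e.\ $qi=r-1$. Neither fallback closes this case. The strict hypothesis $q(\delta-1)<r$ allows $r=q(\delta-1)+1$, and then $i=\delta-1$ gives $qi=r-1$ exactly. So no argument based only on sizes of residues, or on $-q$ permuting residues, can exclude it; you must use the arithmetic hypothesis $(r+\delta-1)\mid(q^2-1)$. The paper does this in the equivalent form $i+q\delta\equiv0$. It bounds $0<i+q\delta<2(r+\delta-1)$ to get $r+\delta-1=q\delta+i$, writes $q^2-1=t(q\delta+i)$, and derives $q\le q(q-t\delta)=ti+1<t\delta+1\le q$, a contradiction.

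In your formulation the repair is short. From $qi=r-1$ and $qi\le q(\delta-1)\le r-1$ you get $i=\delta-1$ and $N:=r+\delta-1=(q+1)(\delta-1)+1$. Square the congruence $q(\delta-1)\equiv-\delta\pmod N$ and use $q^2\equiv1\pmod N$. This gives $(\delta-1)^2\equiv\delta^2$, i.e.\ $N\mid 2\delta-1$. That is impossible, since $N\ge 3\delta-2>2\delta-1>0$.

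Two smaller remarks. First, the hypothesis $(r+\delta-1)\nmid(q+1)a$ plays no role in $D_1\cap(-qD_2)$; it is needed only for $D_2\cap(-qD_2)$. Your fallback via $\gcd(q,r+\delta-1)=1$ is the correct argument for $D_1\cap(-qD_2)$. Second, the case $q=\delta=2$ that you postpone is vacuous: $r+1\mid3$ forces $r\le2$, contradicting $r>q(\delta-1)=2$.
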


\begin{proof}
Let
 $L=\bigcup_{i=1}^{\delta-1}
\{m(r+\delta-1)+i:0\leq m\leq u-1\}
$, $D_1=\{x(r+\delta-1)\}$ and $D_2=\{au\}$,  where
$x=u/\zeta$ and $a,b$ are integers satisfying
$au+b(r+\delta-1)=\delta$. Since $r+\delta-1|(q^2-1)$, the set $Z=L\cup D_1\cup D_2$ is a complete defining set of a cyclic code $C$ over $\mathbb{F}_{q^2}$. 
First, we show that $C$ is Hermitian dual-containing. As shown in Theorem \ref{thm: q^2-1 one zero extention}, we have \[
(L\cup D_1)\cap\bigl(-q(L\cup D_1)\bigr)=\varnothing.
\] Now, it suffices to consider the intersections involving $D_2$. Suppose that $L\cap(-qD_2)\neq\varnothing.$ Then, for some $0\leq m\leq u-1$ and $1\leq i\leq \delta-1$  we have $m(r+\delta-1)+i\equiv -qau \pmod{n}.$ Reducing  modulo $(r+\delta-1)$, we obtain
\[
i+q\delta\equiv0\pmod {r+\delta-1}.
\]
Under the assumption $r>q(\delta-1)$, we have $r+\delta-1>(q+1)(\delta-1)>q.$ Moreover, one has $$0<i+q\delta\leq(\delta-1)+q\delta= (q+1)(\delta-1)+q<2(r+\delta-1).$$
Thus, the preceding congruence would imply $r+\delta-1=q\delta+i,$ for some $1\leq i\leq \delta-1.$
This is impossible because $r+\delta-1\mid(q^2-1)$. Indeed, if $q^2-1=t(q\delta+i),$ then $t\delta<q$. Rearranging the equality gives
$$
q(q-t\delta)=ti+1<t\delta+1\leq q,
$$
which is impossible. Therefore, $L\cap(-qD_2)=\varnothing.$ By Lemma \ref{lem:hermitian-cross-symmetry}, we have $D_2\cap (-qL)=\varnothing.$ Also, it is easy to see that $D_1\cap(-qD_2)=\varnothing$ and equivalently, $D_2\cap(-qD_1)=\varnothing.$
It remains to consider $D_2\cap(-qD_2).$ Suppose 
$$au\equiv-qau\pmod n,$$ then $(q+1)au\equiv0\pmod{u(r+\delta-1)}.$  Since $r+\delta-1\nmid (q+1)a$, by assumption.
Therefore,
$D_2\cap(-qD_2)=\varnothing.
$
Combining the above observations, we obtain
$Z\cap(-qZ)=\varnothing.
$ Thus, $C$ is Hermitian dual-containing. As $Z$ coincides with the defining set considered in Case $1$ of Theorem \ref{thm:two-zero extension}, the resulting code $C$ is an optimal $(r,\delta)$-LRC with parameters $\bigl[u(r+\delta-1),ur-2,\delta+2\bigr]_{q^2}.
$

Moreover, as explained in Case 1 of Theorem \ref{thm:two-zero extension}, $Z^c$ contains $r$ consecutive integers. Hence, by Lemma \ref{dual distance-complement}, $d(C^{\perp_H})\geq r+1$. The hypothesis $q(\delta-1)<r$, together with $\zeta\nmid (q+1)$, implies $r>\delta+1$. Therefore, $d(C^{\perp_H})\geq r+1>\delta+2=d(C)$. Thus, the resulting quantum code $\mathcal{Q}(C)$ from Hermitian construction
is pure and has minimum distance $d_{\mathcal{Q}} =\delta+2.$ Consequently, by Theorem \ref{classical-quantum-locality}, $\mathcal{Q}(C)$ is a pure quantum
cyclic $(r, \delta)$-LRC over $\mathbb{F}_q$ with parameters
$$
\bigl[\!\bigl[
u(r+\delta-1),
u(r-\delta+1)-4,
\ \delta+2
\bigr]\!\bigr]_q.
$$
Also, it is optimal w.r.t. quantum Singleton-type bound \eqref{singleton_qlrc}.
\end{proof}
\begin{example}
Let $q=7,\ \delta=3,\ r=22,\ u=3.$ Then,  $r+\delta-1=24$ and $q^2-1=48.$ Consequently, $r+\delta-1\mid q^2-1.$ We verify the remaining hypotheses of the theorem. First,
$q(\delta-1)=7(3-1)=14<22=r.$ 
Moreover,  $\gcd(u,q^2-1)=\gcd(3,48)=3\nmid 8(=q+1),
$ and  $\gcd(u,r+\delta-1)=\gcd(3,24)=3\mid \delta.$  Choose $a=1$ and $b=0$, we have $3\cdot 1+24\cdot 0=3.$ Finally, $r+\delta-1=24\nmid (q+1)a=8\cdot 1.$ Thus, all the assumptions of the theorem are satisfied.     The locality set is
$$
L=\bigcup_{i=1}^{2}\{24m+i:0\leq m\leq2\}=\{1,2,25,26,49,50\}\pmod{72}.
$$
Since $x=\frac{u}{\zeta}=\frac{3}{3}=1$,  $D_1=\{x(r+\delta-1)\}=\{24\}.$
The second additional defining set is $D_2=\{au\}=\{3\}.$
Therefore, the complete defining set is 
$
Z=L\cup D_1\cup D_2=\{1,2,3,24,25,26,49,50\}\pmod{72}.
$
Notice that $Z$ is a union of $q^2$-cyclotomic cosets modulo $72$.
Hence,
$$-qZ=-7Z
=\{10,17,34,41,48,51,58,65\}\pmod{72}.
$$
Observe that $Z\cap(-qZ)=\varnothing.$ Using Magma \cite{magma}, we verify that $d(C)=5$ and $d(C^{\perp_H})=23$.
Hence, $C$ is an optimal cyclic $(22,3)$-LRC
with parameters $[72,64,5]_{7^2}.$
The Hermitian construction consequently produces a quantum code over
$\mathbb{F}_7$ of dimension $2k-n=2\cdot64-72=56.$ Furthermore,
$
d\left(C^{\perp_{\mathrm H}}\right)=23>d(C)=5.$ Therefore, the pure quantum cyclic $(22,3)$-LRC with parameters
$[[72,56,5]]_7$ is optimal with respect to the Singleton-type bound \eqref{singleton_qlrc}.
\end{example}

The corollary is obtained from Theorem \ref{thm:q^2-1 two zero extention} by taking $\delta=2$, except that the conditions are slightly relaxed in this special case.  

\begin{corollary}
 Let $n=u(r+1)$ such that  $(r+1)\mid(q^2-1)$,  $r>2$, $u\geq 3$, and $\zeta=\gcd(u,q^2-1).$  Assume that $(r+1)\nmid2(q+1), 
\ \zeta\nmid(q+1),$ and $\gcd(u,r+1)\mid 2.$
Then, there exists a Hermitian dual-containing optimal cyclic
$(r,2)$-locally recoverable code over $\mathbb{F}_{q^2}$ with
parameters
$[u(r+1),ur-2,4]_{q^2}$. Moreover, there exists an optimal pure quantum cyclic
$(r,2)$-LRC over $\mathbb{F}_q$ with parameters
$\bigl[\!\bigl[u(r+1),\,u(r-1)-4,\,
4\bigr]\!\bigr]_q$.
\end{corollary}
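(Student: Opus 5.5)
We follow the proof of Theorem~\ref{thm:q^2-1 two zero extention} with $\delta=2$. The point is that the hypothesis $q(\delta-1)<r$, i.e.\ $r>q$, is used there only to obtain $L\cap(-qL)=\varnothing$ and $L\cap(-qD_2)=\varnothing$. Here both will instead follow from $(r+1)\nmid 2(q+1)$ and $r>2$.

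\textbf{Construction.} Take $L=\{m(r+1)+1:0\le m\le u-1\}$. Let $x=u/\zeta$ and $D_1=\{x(r+1)\}$. Since $\gcd(u,r+1)\mid 2$, choose integers $a,b$ with $au+b(r+1)=2$ and put $D_2=\{au\}$. Set $Z=L\cup D_1\cup D_2$.
\begin{itemize}
\item $L$ is $q^2$-closed by Lemma~\ref{Lhermitiandual}.
\item $D_1$ is a singleton $q^2$-coset, exactly as in Theorem~\ref{thm: q^2-1 one zero extention}.
\item $D_2$ is a singleton $q^2$-coset because $(r+1)\mid(q^2-1)$ gives $(q^2-1)au\equiv0\pmod{u(r+1)}$.
\end{itemize}
The residues of $L$, $D_1$, $D_2$ modulo $r+1$ are $1$, $0$, $2$. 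These are distinct since $r+1\ge 4$, so $|Z|=u+2$ and $\dim C=ur-2$. Locality follows from Proposition~\ref{(r,delta)locality}.

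\textbf{Hermitian dual containment.} By Lemma~\ref{lem:hermitian-cross-symmetry}, only six intersections need checking. Each is reduced modulo $r+1$.
\begin{itemize}
\item $L\cap(-qL)$ would force $(r+1)\mid(q+1)$, which is excluded by $(r+1)\nmid2(q+1)$ (cf.\ Remark~\ref{rem:delta-two-L}).
\item $L\cap(-qD_1)$ would force $1\equiv0\pmod{r+1}$, which is absurd.
\item $D_1\cap(-qD_1)$ is excluded by $\zeta\nmid(q+1)$, as in Theorem~\ref{thm: q^2-1 one zero extention}.
\item $D_1\cap(-qD_2)$ would force $(r+1)\mid 2q$. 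Since $\gcd(q,r+1)=1$, this gives $(r+1)\mid2$, which is impossible.
\item $D_2\cap(-qD_2)$ means $(q+1)au\equiv0\pmod{u(r+1)}$, hence $(r+1)\mid(q+1)a$. But $(q+1)au\equiv 2(q+1)\pmod{r+1}$, so this would give $(r+1)\mid 2(q+1)$, contradicting the hypothesis.
\item $L\cap(-qD_2)$ would force $(r+1)\mid(2q+1)$. Together with $(r+1)\mid(q^2-1)$ and the identity $(2q+1)(2q-1)-4(q^2-1)=3$, this gives $(r+1)\mid3$, i.e.\ $r=2$. That contradicts $r>2$.
\end{itemize}
I expect this last case to be the main obstacle. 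It is precisely where the general theorem needed $r>q$, and the small gcd identity is what replaces that hypothesis.

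\textbf{Distances and optimality.} $Z$ contains $x(r+1)$, $x(r+1)+1$ and $au$, whose residues $0,1,2$ form an arithmetic progression with difference $1$. Lemma~\ref{distance} with $d'=4\le2\delta$ gives $d(C)\ge4$. The Singleton bound \eqref{singleton_lrc} gives $d(C)\le4$, so $d(C)=4$ and $C$ is an optimal $[u(r+1),ur-2,4]_{q^2}$ code.

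For purity, $D_1$ and $D_2$ meet at most two of the $u\ge3$ blocks $\{m(r+1)+2,\dots,(m+1)(r+1)\}$. Hence some block of $r$ consecutive integers lies in $Z^c$, and Lemma~\ref{dual distance-complement} gives $d(C^{\perp_H})\ge r+1$. We need $r+1>4$:
\begin{itemize}
\item $r=3$ is impossible. If $q$ is odd then $4\mid2(q+1)$, contradicting $(r+1)\nmid2(q+1)$. If $q$ is even then $q^2-1$ is odd, so $4\nmid q^2-1$.
\item Hence $r\ge4$ and $d(C^{\perp_H})\ge r+1>4=d(C)$.
\end{itemize}
The Hermitian construction and Theorem~\ref{classical-quantum-locality} then give the pure quantum $(r,2)$-LRC $[[u(r+1),u(r-1)-4,4]]_q$. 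Its optimality with respect to \eqref{singleton_qlrc} follows from the optimality of $C$ together with purity, as in the preceding theorems.
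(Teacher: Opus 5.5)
Your proposal is correct and follows the paper's proof essentially step for step: you use the same sets $L$, $D_1=\{x(r+1)\}$, $D_2=\{au\}$, the same residue reductions, and the same appeal to the $\delta=2$ case of Theorem~\ref{thm:q^2-1 two zero extention} for distance and optimality. Your check of $L\cap(-qD_2)$ via $(r+1)\mid(2q+1)$ is the mirror image, under Lemma~\ref{lem:hermitian-cross-symmetry}, of the paper's check of $D_2\cap(-qL)$ via $q+2\equiv 0$, and both end in $(r+1)\mid 3$. Your explicit argument that $r=3$ is impossible (so that $d(C^{\perp_H})\ge r+1>4$) is a welcome addition. The paper simply inherits the purity step from the theorem, where $r>\delta+1$ was derived from the hypothesis $r>q$, and that hypothesis is not available in the corollary.
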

\begin{proof}
Use the construction in Theorem~\ref{thm:q^2-1 two zero extention}, with
$$
L=\{m(r+1)+1:0\leq m\leq u-1\},\quad 
D_1=\{x(r+1)\},\quad  D_2=\{au\},$$ where
$x=u/\zeta$, and $a,b$ are integers satisfying
$au+b(r+1)=2$.
The condition $(r+1)\nmid2(q+1)$ implies
$L\cap(-qL)=\varnothing$.  We next show that
$D_2\cap(-qL)=\varnothing$. Suppose on the contrary, then for some $0\leq m\leq u-1$, we have $$au\equiv-q(m(r+1)+1)\pmod{u(r+1)}.$$ Reduction modulo $r+1$ gives $q+2\equiv0\pmod{r+1}$, as $au\equiv 2\pmod{r+1}$. This implies  $q^2\equiv 4\pmod{r+1}$, hence together with $q^2\equiv1\pmod{r+1}$,  we get $(r+1)\mid3$,
contradicting $r>2$.  Hence, $D_2\cap(-qL)=\varnothing$ and by Lemma \ref{lem:hermitian-cross-symmetry}, $L\cap(-qD_2)=\varnothing$. Further, $D_2\cap (-qD_2)=\varnothing$ follows by assumption $(r+1)\nmid2(q+1)$.  All intersections involving $D_1$ are
empty by $\zeta\nmid(q+1)$ and the residue separation used in the
preceding Theorem \ref{thm:q^2-1 two zero extention}.  Hence $Z\cap(-qZ)=\varnothing$, and  further conclusion
follows from Theorem \ref{thm:q^2-1 two zero extention} by substituting $\delta=2$.
\end{proof}
\begin{theorem}\label{thm:q^2-1 general construction}
Let $n=u(r+\delta-1)$ be a positive integer such that
$(r+\delta-1)\mid(q^2-1)$ and $\gcd(u,r+\delta-1)=1.
$ Let $d$ be an integer satisfying $\delta+2\leq d\leq2\delta.$ If $r+\delta-1>(d-1)(q+1),$ then there exists a Hermitian dual-containing optimal cyclic
$(r,\delta)$-LRC over $\mathbb F_{q^2}$ with parameters
$\bigl[u(r+\delta-1),\,ur-d+\delta,\,d\bigr]_{q^2}.
$ Moreover, there exists an optimal pure quantum cyclic $(r,\delta)$-LRC
over $\mathbb F_q$ with parameters
$\bigl[\!\bigl[
u(r+\delta-1),\,
u(r-\delta+1)-2(d-\delta),\,
d
\bigr]\!\bigr]_q.
$\end{theorem}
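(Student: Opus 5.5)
We mirror the proof of Theorem~\ref{thm:general construction}, reusing its defining set and changing only the dual-containment check to the Hermitian condition $Z\cap(-qZ)=\varnothing$. Since $\gcd(u,r+\delta-1)=1$, we choose integers $a,b$ with $au+b(r+\delta-1)=1$. We set
$$D=\{auj:\delta\le j\le d-1\},\qquad Z=L\cup D,$$
with $L$ as in \eqref{eq:standard-hermitian-L}.

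First we check that $Z$ is a complete defining set over $\mathbb F_{q^2}$. Because $(r+\delta-1)\mid(q^2-1)$, we have $(q^2-1)auj\equiv0\pmod{u(r+\delta-1)}$, so each element of $D$ is a singleton $q^2$-cyclotomic coset. By Lemma~\ref{Lhermitiandual}, $L$ is a union of $q^2$-cyclotomic cosets. Proposition~\ref{(r,delta)locality} then gives $(r,\delta)$-locality. The defining set is literally the one of Theorem~\ref{thm:general construction}, so the count $|Z|=u(\delta-1)+(d-\delta)$ carries over unchanged. Note that $D\cap L=\varnothing$ because the residues are $j\ge\delta$, and the elements of $D$ are distinct because their residues are distinct. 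The lower bound $d(C)\ge d$ also carries over, via Lemma~\ref{distance} applied to the residues $1,\dots,d-1$. Together with \eqref{singleton_lrc}, this gives $[u(r+\delta-1),ur-d+\delta,d]_{q^2}$ and optimality.

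The main step is Hermitian dual containment. Every element of $Z$ has residue modulo $r+\delta-1$ in $\{1,\dots,d-1\}$: elements of $L$ have residues $1,\dots,\delta-1$, and $auj\equiv j$. So any coincidence $z_1\equiv -qz_2\pmod n$ reduces modulo $r+\delta-1$ to $s+qt\equiv0\pmod{r+\delta-1}$ with $1\le s,t\le d-1$. This single congruence handles all four intersections $L\cap(-qL)$, $L\cap(-qD)$, $D\cap(-qL)$ and $D\cap(-qD)$ at once. Since
$$0<s+qt\le(d-1)(q+1)<r+\delta-1$$
by hypothesis, the congruence is impossible. Lemma~\ref{lem:hermitian-cross-symmetry} is not even needed. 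Hence $Z\cap(-qZ)=\varnothing$ and $C^{\perp_H}\subseteq C$. The only care required is confirming that the hypothesis already implies $r>\delta-1$, which is immediate since $(d-1)(q+1)\ge 2(\delta-1)$.

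For purity, $Z$ avoids the residues $d,\dots,r+\delta-1$ in every block. So $Z^c$ contains $r+\delta-d$ consecutive integers, and Lemma~\ref{dual distance-complement} gives $d(C^{\perp_H})\ge r+\delta-d+1$. The hypothesis gives $r+\delta-1>(d-1)(q+1)\ge 3(d-1)$, hence $r+\delta-d+1>2d-1>d$. Therefore $d(C^{\perp_H})>d(C)$, and the Hermitian stabilizer code is pure with $d_{\mathcal Q}=d$. Theorem~\ref{classical-quantum-locality} yields the quantum $(r,\delta)$-locality, with dimension $2(ur-d+\delta)-n=u(r-\delta+1)-2(d-\delta)$. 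Since the classical code is optimal and the quantum code is pure, it attains \eqref{singleton_qlrc} with equality.
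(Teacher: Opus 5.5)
Your proof is correct and follows essentially the paper's argument. It uses the same defining set $Z=L\cup D$ with $D=\{auj:\delta\le j\le d-1\}$, reduces the containment check modulo $r+\delta-1$, inherits the parameters and optimality from Theorem~\ref{thm:general construction}, and gets purity from the $r+\delta-d$ consecutive non-zeros via Lemma~\ref{dual distance-complement}. The one difference is a streamlining: you handle all four cross-intersections with a single bound, $s+qt\le(d-1)(q+1)<r+\delta-1$ for residues $s,t\in\{1,\dots,d-1\}$, whereas the paper treats $L\cap(-qL)$ via Lemma~\ref{Lhermitiandual} and checks the remaining intersections separately.
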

\begin{proof}
Let
 $L=\bigcup_{i=1}^{\delta-1}
\{m(r+\delta-1)+i:0\leq m\leq u-1\}
$ and $D=\{auj: \delta\leq j\leq d-1\}$,  where
 $a,b$ are integers satisfying
$au+b(r+\delta-1)=1$. Since $(r+\delta-1)|(q^2-1)$, the set $Z=L\cup D$ is a complete defining set of a cyclic code $C$ over $\mathbb{F}_{q^2}.$ Also, $L\subset Z$, by Proposition \ref{(r,delta)locality}, $C$ is an $(r,\delta)$-LRC.  We next show that $C$ is Hermitian dual-containing. Observe that $L\cap(-qL)=\varnothing$, since $r>(d-1)(q+1)-(\delta-1)$ and $\delta+2\leq d\leq 2\delta$ implies $r>q(\delta-1)$.  Suppose that $D\cap(-qL)\neq\varnothing.$ Then there exist $m$, with $0\leq m\leq u-1$, and $i$, with $1\leq i\leq\delta-1$, for some $\delta\leq j\leq d-1$ such that   
$$auj\equiv-q(m(r+\delta-1)+i)\pmod{u(r+\delta-1)}.$$
This implies $j+qi\equiv 0 \pmod{r+\delta-1}$. However, $0<j+qi\leq (d-1)+q(\delta-1)< r+\delta-1$, where the last inequality follows from $r+\delta-1>(d-1)(q+1)$. Therefore, $D\cap(-qL)=\varnothing.$ Equivalently, $L\cap(-qD)=\varnothing.$  Moreover, $D\cap(-qD)=\varnothing$, because $r+\delta-1>(d-1)(q+1)$. Therefore, $Z\cap(-qZ)=\varnothing$.   Moreover, $Z$ coincides with the defining set considered in Theorem \ref{thm:general construction}, $C$ is a Hermitian dual-containing optimal $(r,\delta)$-LRC with parameters $[u(r+\delta-1),ur-d+\delta,d]$.

Now, we will prove the purity of the resulting quantum code. As explained in  Theorem \ref{thm:general construction},  $Z^c$ contains $r+\delta-d$ consecutive integers. By Lemma \ref{dual distance-complement}, $d(C^{\perp_{H}})\geq r+\delta-d+1$. Further the assumption $r+\delta-1>(d-1)(q+1)>2(d-1)$ implies that $d(C^{\perp_{H}})>d=d(C)$. Hence, the Hermitian construction yields a pure quantum cyclic code $\mathcal{Q}(C)$. 
Moreover, by Theorem \ref{classical-quantum-locality}, there exists pure quantum cyclic $(r,\delta)$-LRC over $\mathbb{F}_q$ with parameters
$\bigl[\!\bigl[u(r+\delta-1),\,u(r-\delta+1)-2(d-\delta),\, d\bigr]\!\bigr]_q$.
Also, it is optimal w.r.t Singleton-type bound \eqref{singleton_qlrc}
\end{proof}

\begin{remark} 
In the construction of Theorem~\ref{thm:q^2-1 general construction} the assumption
$r+\delta-1>(d-1)(q+1)
$ can be weakened when
$\zeta=\gcd(u,q^2-1)\nmid(q+1).$
Indeed, setting $Z=L\cup D_1\cup D_2$
\[
D_1=\{x(r+\delta-1)\}
,\ 
D_2=\{auj:\delta\leq j\leq d-2\},
\]
where $x=\frac{u}{\zeta}$ , yields the same classical and quantum
parameters under the weaker condition
\[
r+\delta-1>(d-2)(q+1).
\]
The proof follows from the same defining set argument. Thus, the two-set construction gives a slightly larger parameter
range, whereas Theorem~\ref{thm:q^2-1 general construction} provides
a simpler uniform formulation.
\end{remark}

\begin{example}
Let
$q=16,\ u=3,\ r=82,\ \delta=4,
\ \text{and}\ d=6.$ Then $r+\delta-1=82+4-1=85
$ and $85\mid 255=16^2-1.$ Moreover, $\gcd(u,r+\delta-1)=\gcd(3,85)=1.$ Since $
\zeta=\gcd(u,q^2-1)=\gcd(3,255)=3$ and $3\nmid17=q+1,$ 
the required divisibility condition is satisfied. Furthermore,
$\delta+2=6\leq d\leq8=2\delta,$ and $r=82>(d-2)(q+1)-(\delta-1)=65.$ Therefore, all the hypotheses of the above remark are satisfied. Since $n=u(r+\delta-1)=3\cdot85=255.$
The locality set is
$$L=\bigcup_{i=1}^{3}\{85m+i:0\leq m\leq2\}=\{1,2,3,86,87,88,171,172,173\}\pmod{255}.$$
 Since $x=\frac{u}{\zeta}=1,$ we have $D_1=\{x(r+\delta-1)\}=\{85\}.$ To determine $D_2$, choose integers $a$ and $b$ satisfying $au+b(r+\delta-1)=1.$ Since $57\cdot3-2\cdot85=1,
$ we may take $a=57,\ b=-2.$ Now,
$$
D_2=\{auj:\delta\leq j\leq d-2\}=\{57\cdot3\cdot4\}=\{174\}\pmod{255}.$$
Consequently, the complete defining set is $ Z=L\cup D_1\cup D_2=\{1,2,3,85,86,87,88,171,172,173,174\}\pmod{255}.$
In particular, $|Z|=11$; hence, $\dim(C)=255-|Z|=244.$ Further, we have $$-16Z=\{21,37,53,69,122,138,154,170,207,223,239\}\pmod{255}.
$$
Clearly, $Z\cap(-16Z)=\varnothing.$ Using Magma \cite{magma}, we have verified $d(C)=6$ and $d(C^{\perp_H})>d(C)$. Therefore, there exists a Hermitian dual-containing optimal cyclic $(82,4)$-LRC over $\mathbb{F}_{256}$ with parameters
$
[255,244,6]_{256}.
$
Moreover,  there exists a pure quantum cyclic $(82,4)$-LRC over
$\mathbb{F}_{16}$ with parameters
$\bigl[\!\bigl[255,233,d_{\mathcal Q}\geq6\bigr]\!\bigr]_{16}
$, which is optimal with respect to the bound~\eqref{singleton_qlrc}.
\end{example}

All results of this subsection are summarized in Table \ref{tab:hermitian_q2minus1}. This  table reflects the analogous progression in the Hermitian $q^2-1$
 case. 
\begin{table*}[ht]
\centering
\caption{Hermitian dual-containing optimal cyclic $(r,\delta)$-LRCs for
 $r+\delta-1\mid(q^2-1)$ and the corresponding  optimal pure quantum $(r,\delta)$-LRCs.
}
\label{tab:hermitian_q2minus1}
\renewcommand{\arraystretch}{1.35}
\setlength{\tabcolsep}{2pt}
\scriptsize
\begin{tabular}{|c|p{3.5cm}|p{3.5cm}|p{3.5cm}|p{4.6cm}|}
\hline
\textbf{Result}
&
\textbf{Conditions}
&
\textbf{ Distance Defining set}
&
\textbf{Optimal Classical LRC}
&
\textbf{Optimal Quantum LRC}
\\
\hline

Thm.\ref{thm:q^2-1 basic construction}
&
$\ \ n\mid(q^2-1)$,

$\ (q-1)(\delta-1)+\ell<r$,

$\ (q+1)\ell<n$
&
$D=\{1,2,\ldots,\ell\}$

where $\ell>\delta-1$
&
$\bigl[n,\,
ur-\ell+\delta-1,\,
\ell+1\bigr]_{q^2}$
&
$\bigl[\!\bigl[n,\,
u(r-\delta+1)-2\ell+2(\delta-1),\ell+1
\bigr]\!\bigr]_q$
\\
\hline

Thm.\ref{thm: q^2-1 one zero extention}
&
$\ (r+\delta-1)\mid(q^2-1)$,

$\ q(\delta-1)<r$,

$\ \zeta\nmid(q+1)$
&

$D=\{\dfrac{u}{\zeta}(r+\delta-1)\}$,

where $\zeta=\gcd(u,q^2-1)$
&
$\bigl[n,\,ur-1,\,\delta+1\bigr]_{q^2}$
&
$\bigl[\!\bigl[n,\,
u(r-\delta+1)-2,\ \delta+1
\bigr]\!\bigr]_q$
\\
\hline

Thm.\ref{thm:q^2-1 two zero extention}
&
$\ (r+\delta-1)\mid(q^2-1)$,

$\ q(\delta-1)<r$,

$\ \zeta\nmid(q+1)$,

$\ \gcd(u,r+\delta-1)\mid\delta$,

$\ r+\delta-1\nmid(q+1)a$&

$D_1=\{\dfrac{u}{\zeta}(r+\delta-1)\}$,

$D_2=\{au\}$,

where  $\zeta=\gcd(u,q^2-1),$

$au\equiv \delta\pmod{r+\delta-1}$
&
$\bigl[n,\,ur-2,\,\delta+2\bigr]_{q^2}$
&
$\bigl[\!\bigl[n,\,
u(r-\delta+1)-4, \delta+2
\bigr]\!\bigr]_q$
\\
\hline

Thm.\ref{thm:q^2-1 general construction}
&
$\ (r+\delta-1)\mid(q^2-1)$,

$\ \gcd(u,r+\delta-1)=1$,

$\ \delta+2\leq d\leq2\delta$,

$\ (d-1)(q+1)-(\delta-1)<r$
&

$D=\{auj:\delta\leq j\leq d-1\}$,

$au\equiv 1\pmod{r+\delta-1}$
&
$\bigl[n,\,ur-d+\delta,\,d\bigr]_{q^2}$
&
$\bigl[\!\bigl[n,\,
u(r-\delta+1)-2(d-\delta), d
\bigr]\!\bigr]_q$
\\
\hline
\end{tabular}
\end{table*}
\subsection{Hermitian dual-containing optimal cyclic $(r,\delta)$-LRCs when $(r+\delta-1)|(q^2+1)$}\label{subsec:q2-plus-one}
Throughout this subsection, $n=u(r+\delta-1)$ and $\delta\geq3$ are odd integers, where $u\geq 3,\ r\geq 3.$   Define
\begin{equation}
L
=
\bigcup_{\substack{1\leq j\leq\delta-2\\j\ {\rm odd}}}
\{m(r+\delta-1)\pm j:0\leq m\leq u-1\}.
\label{eq:symmetric-hermitian-L}
\end{equation}
The residues modulo $(r+\delta-1)$ are $
\pm1,\pm3,\ldots,\pm\delta-4,\pm\delta-2
$. These are in the form of an arithmetic progression of length $\delta-1$ and common difference
$2$.  Since $n$ is odd, Proposition~\ref{(r,delta)locality} shows
that $L$ is a valid locality set  with locality $(r,\delta).$

In this subsection, we construct four families of Hermitian dual-containing cyclic
$(r,\delta)$-LRCs in the case
$r+\delta-1\mid(q^2+1),
$ and consequently obtain corresponding families of  optimal pure quantum $(r,\delta)$-LRCs over $\mathbb F_q$.
This case is structurally different from the Euclidean setting. In particular, when $r+\delta-1\mid(q+1),$  the locality defining set necessarily exhibits a symmetric residue structure, so that opposite residues occur simultaneously. Consequently, $L\cap(-L)\neq\varnothing,$
which prevents the corresponding cyclic code from being Euclidean dual-containing. In contrast, Hermitian dual containment is governed by the condition
$Z\cap(-qZ)=\varnothing,$ and this additional multiplication by $q$ makes it possible to construct Hermitian
dual-containing cyclic $(r,\delta)$-LRCs in this parameter regime.
The following lemma will be used throughout the constructions below.
\begin{lemma}\label{q2+1 Lcyclotomic coset}
 Let $n=u(r+\delta-1)$ and $\delta$ be  odd integers and assume that $(r+\delta-1)\mid(q^2+1)$. Define
\[
L=\bigcup_{j=1}^{\frac{\delta-1}{2}}\{m(r+\delta-1)\pm\ell_j:0\le m\le u-1\},\]
where $0<\ell_j<(r+\delta-1).$ Then, $L$ is a union of $q^2$-cyclotomic cosets modulo $n$. 
\end{lemma}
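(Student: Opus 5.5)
The plan is to mimic the first part of Lemma~\ref{Lhermitiandual}. I will show that the set $L$ is closed under multiplication by $q^2$ modulo $n$. Once that is done, $L$ is automatically a union of $q^2$-cyclotomic cosets, because the coset $C_x=\{xq^{2j}\}$ of any $x\in L$ is then contained in $L$.

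The key arithmetic input is $q^2\equiv-1\pmod{r+\delta-1}$, which follows from $(r+\delta-1)\mid(q^2+1)$. Write $N=r+\delta-1$ and take a typical element $x=mN+\epsilon\ell_j\in L$ with $\epsilon\in\{\pm1\}$. Then
\[
q^2x = q^2mN+\epsilon q^2\ell_j .
\]
Modulo $N$ this is congruent to $-\epsilon\ell_j$. Hence $q^2x\bmod n$, viewed as an integer in $[0,n)$, has the form $m'N-\epsilon\ell_j$ for some integer $m'$.

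To make this precise, write $q^2=tN-1$ with $t=(q^2+1)/N$. Then
\[
q^2x=(q^2m+\epsilon t\ell_j)N-\epsilon\ell_j .
\]
Since $n=uN$, reducing modulo $n$ only changes the multiple of $N$ by a multiple of $uN$. So $q^2x\equiv m'N-\epsilon\ell_j\pmod n$, where $m'\equiv q^2m+\epsilon t\ell_j\pmod u$, and we may take $0\le m'\le u-1$.

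This says exactly that $q^2x$ lies in the block $\{mN\mp\ell_j\}$ of $L$ with the opposite sign. Because $L$ contains both signs $\pm\ell_j$ for every $j$, it follows that $q^2x\in L$. The hypothesis $0<\ell_j<N$ guarantees that the representation $mN\pm\ell_j$ is unambiguous: the residue of an element determines which $\pm\ell_j$ it comes from, and elements are never confused with multiples of $N$. Oddness is not genuinely needed for closure; it only ensures that $(\delta-1)/2$ is an integer.

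There is essentially no obstacle. The only point requiring care is the bookkeeping that the new block index $m'$ can be reduced to the range $[0,u-1]$ while the residue modulo $N$ stays $-\epsilon\ell_j$. This holds because $N\mid n$, so reduction modulo $n$ preserves residues modulo $N$. I would state this explicitly and conclude that $q^2L\subseteq L$, hence that $L$ is a union of $q^2$-cyclotomic cosets modulo $n$.
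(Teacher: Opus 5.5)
Your proposal is correct and follows essentially the same route as the paper. Both arguments use $q^2\equiv-1\pmod{r+\delta-1}$ to show that $q^2\bigl(m(r+\delta-1)\pm\ell_j\bigr)$ is congruent modulo $n$ to $m'(r+\delta-1)\mp\ell_j$ for some $0\le m'\le u-1$, so that $q^2L\subseteq L$; your version is slightly more careful than the paper's, since you treat both signs explicitly and use $q^2=t(r+\delta-1)-1$ together with $(r+\delta-1)\mid n$ to justify the choice of $m'$.
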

\begin{proof}
Since $(r+\delta-1)\mid(q^2+1)$, we have $q^2\equiv -1 \pmod {r+\delta-1}.$ Then
\begin{align*}
 q^2(m(r+\delta-1)+\ell_j)&=(-1+a(r+\delta-1))(m(r+\delta-1)+\ell_j)\\
 &\equiv -\ell_j\pmod{(r+\delta-1)}\\
 &\equiv m'(r+\delta-1)-\ell_j\pmod{u(r+\delta-1)}
 \end{align*}
 for some $0\leq m'\leq u-1$. Hence, $L$ is a union of $q^2$-cyclotomic cosets $n$. 
\end{proof}
\begin{theorem}\label{thm:q2+1 basic construction}
Let $n=u(r+\delta-1)$ and $\delta\geq 3$ be  odd integers such that $n\mid(q^2+1)$.  Suppose that for some odd integer $\ell>\delta-2$, we have $q(\delta-2)+\ell<r+\delta-1 \text{ and } (q+1)\ell<n.$ Then, there exists a Hermitian dual-containing optimal cyclic $(r,\delta)$-LRC with parameters $\bigl[u(r+\delta-1),\, ur-\ell+(\delta-2),\, \ell+2\bigr]_{q^2}.$ Moreover, there exists an optimal pure quantum cyclic $(r,\delta)$-LRC with parameters $\bigl[\!\bigl[
    u(r+\delta-1),\,
    u(r-\delta+1)-2\ell+2(\delta-2),\,
    \ell+2
    \bigr]\!\bigr]_q.$  
\end{theorem}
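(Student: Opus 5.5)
The plan is to mirror Theorem~\ref{thm:q^2-1 basic construction}, adapted to the symmetric locality set \eqref{eq:symmetric-hermitian-L}. Since $n\mid(q^2+1)$, we have $q^2\equiv-1\pmod n$, so every $q^2$-cyclotomic coset modulo $n$ is $\{i,-i\}$. The extra zeros must therefore be chosen as a symmetric set. I would take
\[
D=\{\pm1,\pm3,\ldots,\pm\ell\}\pmod n ,\qquad Z=L\cup D .
\]
Listed in order, this is $-\ell,-\ell+2,\ldots,-1,1,\ldots,\ell$, an arithmetic progression of length $\ell+1$ with common difference $2$. Since $n$ is odd, $\gcd(2,n)=1$. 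Both $D$ and $L$ (by Lemma~\ref{q2+1 Lcyclotomic coset}) are unions of $q^2$-cyclotomic cosets, so $Z$ is a complete defining set, and $L\subseteq Z$ gives $(r,\delta)$-locality.

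For the parameters, $D$ contains the $m=0$ part of $L$, namely $\pm1,\ldots,\pm(\delta-2)$, which is $\delta-1$ elements. Hence
\[
|Z|=u(\delta-1)+(\ell+1)-(\delta-1),\qquad \dim C=ur-\ell+(\delta-2).
\]
The BCH bound with difference $2$ gives $d(C)\ge\ell+2$. Substituting the dimension into \eqref{singleton_lrc}, using $\lceil k/r\rceil=u$ since $\ell-\delta+2<r$, gives $d(C)\le\ell+2$. So $d(C)=\ell+2$ and $C$ is optimal.

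For Hermitian dual containment I would check $Z\cap(-qZ)=\varnothing$ via the pieces $L\cap(-qL)$, $D\cap(-qL)$, $D\cap(-qD)$; the remaining cross term follows from Lemma~\ref{lem:hermitian-cross-symmetry}. The key observation is parity. An odd $n$ dividing $q^2+1$ forces $q$ even. Moreover $q=2$ is impossible, because $n\ge 3\cdot 5>5$, so $q\ge4$. Every element of $Z$ has an odd signed residue $a$, so any relation $a+qb\equiv0$ involves the odd integer $a+qb\neq0$.

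For $L\cap(-qL)$ and $D\cap(-qL)$, I would reduce modulo $r+\delta-1$. This gives $0<|a+qb|\le\ell+q(\delta-2)<r+\delta-1$, a contradiction. For $D\cap(-qD)$, I would work modulo $n$ directly, where $0<|a+qb|\le(q+1)\ell<n$.

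The main obstacle is purity, which needs $d(C^{\perp_H})>\ell+2$. Since $\ell<r+\delta-1$, the set $D$ only touches the blocks around $0$. For a block with $1\le m\le u-2$ (available because $u\ge3$), the integers strictly between $m(r+\delta-1)+(\delta-2)$ and $(m+1)(r+\delta-1)-(\delta-2)$ lie in $Z^c$. That is $r-\delta+2$ consecutive integers, so Lemma~\ref{dual distance-complement} gives $d(C^{\perp_H})\ge r-\delta+3$.

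It then remains to show $r-\delta+3>\ell+2$. From $q\ge4$ and $q(\delta-2)+\ell<r+\delta-1$ we get $r>3\delta-7+\ell\ge\ell+\delta-1$ for $\delta\ge3$. This yields $d(C^{\perp_H})>d(C)$, so the Hermitian stabilizer code is pure with $d_{\mathcal Q}=\ell+2$.

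Finally, Theorem~\ref{classical-quantum-locality} transfers locality to the quantum code, whose parameters are $[[n,2k-n,\ell+2]]_q$. It meets \eqref{singleton_qlrc} with equality.
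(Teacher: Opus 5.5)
Your overall plan coincides with the paper's proof. You use the same $D=\{\pm1,\pm3,\ldots,\pm\ell\}$, the same count $|Z|=(u-1)(\delta-1)+\ell+1$, the same BCH-with-difference-$2$ and Singleton argument, and the same gap of $r-\delta+2$ consecutive non-zeros for purity. The gap is in the step you call the key observation. It is false that an odd $n$ dividing $q^2+1$ forces $q$ to be even. For odd $q$ one has $q^2+1\equiv 2\pmod 4$, and this can have large odd divisors. The paper's own example is a valid instance of the theorem with odd $q$: $q=17$, $n=145$, and $17^2+1=290=2\cdot 145$. When $q$ is odd, $a$ and $qb$ are both odd, so $a+qb$ is even and parity says nothing. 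Your three intersection checks then only reach $a+qb=0$, not a contradiction. The parity argument is valid only for even $q$.

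The missing ingredient is a size comparison, which is how the paper closes all three intersections for every $q$. The hypothesis $(q+1)\ell<n\le q^2+1$ forces $\ell<q$, since $\ell\ge q$ would give $(q+1)\ell\ge q^2+q>q^2+1$. Hence, for the relevant signed residues, $|qb|\ge q>\ell\ge|a|$ whenever $b\neq 0$, so $a+qb\neq 0$.

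The same fact also fixes your derivation of $q\ge 4$, which currently rests on the false parity claim. Since $\ell$ and $\delta$ are odd with $\ell>\delta-2$, you get $q>\ell\ge\delta\ge 3$; alternatively, $q^2+1\ge n\ge 15$. With these two substitutions the rest of your argument goes through, including the purity inequality $r>3\delta-7+\ell\ge \ell+\delta-1$.
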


\begin{proof}
Let $D=\{\pm1,\pm3,\ldots,\pm\ell\},$
where $\ell>\delta-2$.
Since $n\mid(q^2+1)$, 
$L$ and $D$ are union of $q^2$-cyclotomic coset modulo $n$. In particular, $Z=L\cup D$ defines a cyclic code $C$ over $\mathbb{F}_{q^2}$ and by Proposition \ref{(r,delta)locality}, $C $ has $(r,\delta)$ locality. Now, we prove that $Z\cap(-qZ)=\varnothing$ by considering the four possible intersections
\[
L\cap(-qL),\qquad
L\cap(-qD),\qquad
D\cap(-qL),\qquad
D\cap(-qD).
\]
First, every element of $L$ is congruent modulo $r+\delta-1$ to one of $\pm1,\pm3,\ldots,\pm(\delta-2).$ Suppose that $L\cap(-qL)\neq\varnothing$. Then there exist odd integers $a,b$ satisfying $1\leq |a|,|b|\leq\delta-2$ such that
\[
a\equiv-qb\pmod{r+\delta-1}.
\]
On the other hand, $|qb+a|
\leq q|b|+|a|
\leq q(\delta-2)+(\delta-2)
=(q+1)(\delta-2).$ Since $q(\delta-2)+\ell<r+\delta-1$ and $\ell>\delta-2$,
we have
\[
(q+1)(\delta-2)<q(\delta-2)+\ell<r+\delta-1;
\]
 hence, $|qb+a|<r+\delta-1$, implying $qb+a=0$. However, $\ell<q$, which follows from
$(q+1)\ell<n$ and $n|(q^2+1)$, we have  $q>\ell>\delta-2\geq |a|$ and $b\neq0$, so
$|qb|>|a|$, hence $qb+a\neq 0$. Thus
$L\cap(-qL)=\varnothing.$  Next, suppose that $D\cap(-qL)\neq\varnothing$. Then there exist an element
$d\in D$ and an odd integer $b$ with $1\leq |b|\leq\delta-2$ such that $d\equiv-qb\pmod{r+\delta-1}.$
Since $|d|\leq\ell$, we have
\[
|qb+d|
\leq q(\delta-2)+\ell<r+\delta-1,
\] and consequently $qb+d=0$, which is impossible since 
$|qb|\geq q>\ell\geq|d|$. Hence
$D\cap(-qL)=\emptyset.$ Similarly, $L\cap(-qD)=\varnothing$. Finally, suppose that $D\cap(-qD)\neq\varnothing.$ Then, there exist $x,y\in D$ such that
\[
x\equiv-qy\pmod n.
\]
Since $|x|,|y|\leq\ell$, we have $|qy+x|\leq q\ell+\ell=(q+1)\ell<n.$ Further $q>\ell$ and $y\neq0$, we have $|qy|\geq q>\ell\geq|x|,$ hence $qy+x\neq0$. Therefore $D\cap(-qD)=\varnothing.$  Consequently, $Z\cap(-qZ)=\varnothing.$ Hence, $C$ is a Hermitian dual-containing code. We now determine the parameters of $C$. Since
$D$ already contains the elements $\pm1,\ldots,\pm\delta-2$ corresponding to
the $m=0$ part of $L$, we have
$|Z|=(u-1)(\delta-1)+(\ell+1)$. Therefore,
\[
\dim(C)
=n-|Z|
=u(r+\delta-1)-(u-1)(\delta-1)-(\ell+1)
=ur-\ell+(\delta-2).
\]
Furthermore, $Z$ contains $\pm1,\pm3,\ldots,\pm\ell$ consecutive integers with common difference $2$. Since $n$ is odd, the BCH bound gives
$d(C)\geq\ell+2$. On the other hand, applying
bound \eqref{singleton_lrc} yields $d(C)\leq\ell+2$. Thus,
$d(C)=\ell+2$. Consequently, $C$ is a Hermitian dual-containing optimal cyclic
$(r,\delta)$-LRC with parameters
\[
\bigl[u(r+\delta-1),\,ur-\ell+(\delta-2),\,\ell+2\bigr]_{q^2}.
\]
We finally establish the purity of the resulting quantum code. Consider the gap between the two consecutive locality blocks corresponding to \(m=1\) and \(m=2\), and define
$$
A=\bigl\{
(r+\delta-1)+(\delta-1),\,
(r+\delta-1)+\delta,\,
\ldots,\,
2(r+\delta-1)-(\delta-1)
\bigr\}.
$$
Clearly, none of the elements of \(A\) belongs to the locality set \(L\). Moreover, the additional defining set \(D\) does not intersect \(A\). Hence, \(A\subseteq Z^c\). Since \(A\) consists of \(r-\delta+2\) consecutive integers, Lemma~\ref{dual distance-complement} gives $d\bigl(C^{\perp_H}\bigr)\geq r-\delta+3.
$ Furthermore, since \((q+1)\ell<n\leq q^2+1\), we obtain \(\ell<q\). Since both \(\ell\) and \(\delta\) are odd and \(\ell>\delta-2\), it follows that \(\ell\geq\delta\). Thus, \(q>\ell\geq\delta\), and consequently \(q\geq\delta+1\). Therefore, for $\delta\geq 3$
$$
q(\delta-2)\geq(\delta+1)(\delta-2)\geq 2\delta-2.
$$
Combining this with \(q(\delta-2)+\ell<r+\delta-1\), we obtain $r>\ell+\delta-1$. Hence $$d\bigl(C^{\perp_H}\bigr)
\geq r-\delta+3
>\ell+2=d(C).
$$
Thus, the quantum code \(\mathcal{Q}(C)\) obtained from the Hermitian construction is pure and has minimum distance
\(d_{\mathcal Q}=\ell+2\). Finally, by Theorem~\ref{classical-quantum-locality}, \(\mathcal{Q}(C)\) is a quantum \((r,\delta)\)-LRC over \(\mathbb{F}_q\) with parameters
$
\left[\!\left[
u(r+\delta-1),\,
u(r-\delta+1)-2\ell+2(\delta-2),\,
\ell+2
\right]\!\right]_q.
$ 
Moreover, it attains the quantum Singleton-type bound~\eqref{singleton_qlrc}, and hence is optimal.
\end{proof}
\begin{example} 

Let $q=17$, $u=5$, $r=27$, $\delta=3$, and $\ell=7$. Then
$r+\delta-1=29$ and $n=u(r+\delta-1)=145$. Both $n$ and $\delta$ are odd, and $145\mid(17^2+1)$, i.e., $n|(q^2+1)$ Moreover, $\ell=7>\delta-2=1$, and $q(\delta-2)+7=17+7=24<29=r+\delta-1$, $(q+1)\ell=18\cdot7=126<145=n.$
Thus, all the conditions of Theorem \ref{thm:q2+1 basic construction} are satisfied. The locality set is $$
L=\{29m\pm1:0\leq m\leq 4\}
=\{1,28,30,57,59,86,88,115,117,144\}\pmod{145}.
$$
The distance set is $D=\{\pm1,\pm3,\pm5,\pm7\}
=\{1,3,5,7,138,140,142,144\} \pmod{145}.$
Consequently, the complete defining set is
$
Z=L\cup D
=\{1,3,5,7,28,30,57,59,86,88,115,117,138,140,142,144\}\pmod{145}.
$
Furthermore, $-17Z=\{12,17,26,41,46,51,60,70,75,85,94,99,104,119,128,133\}\pmod{145}.
$
Hence, $Z\cap(-17Z)=\varnothing.$ Using Magma \cite{magma}, we find that $d(C)=9$ and $d(C^{\perp_H})=28$. Therefore, there exists a Hermitian dual-containing optimal cyclic $(27,3)$-LRC over $\mathbb{F}_{289}$ with parameters $[145,129,9]_{289}.$ Since  $d(C^{\perp_H})>d(C)$, the corresponding pure quantum cyclic $(27,3)$-LRC with parameters
$[[145,113,9]]_{17}$ is optimal.
\end{example}
\begin{theorem}\label{thm:q2+1 two zero extention}
Let $n=u(r+\delta-1)$ be an odd integer and $(r+\delta-1)\mid (q^2+1)$, where $\delta\geq3$ is an odd integer. Suppose that
$$
\gcd(u,r+\delta-1)\mid\delta,\ \ 
q>\delta, \text{ and }\  r>q(\delta-2)+1.
$$
Then there exists a Hermitian dual-containing optimal cyclic $(r,\delta)$-LRC over $\mathbb F_{q^2}$ with parameters
$
\bigl[u(r+\delta-1),ur-2,\delta+2\bigr]_{q^2}.
$
Moreover, there exists a  pure optimal quantum cyclic $(r,\delta)$-LRC over $\mathbb F_q$ with parameters
$
\bigl[\bigl[
u(r+\delta-1),
u(r-\delta+1)-4,\delta+2
\bigr]\bigr]_q.
$
\end{theorem}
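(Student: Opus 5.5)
We take the defining set $Z=L\cup D$, where $L$ is the symmetric locality set \eqref{eq:symmetric-hermitian-L} and $D=\{au,-au\}$. Here $a,b$ are integers with $au+b(r+\delta-1)=\delta$; they exist because $\gcd(u,r+\delta-1)\mid\delta$. Then $au\equiv\delta$ and $-au\equiv-\delta\pmod{r+\delta-1}$.

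\textbf{Cyclotomic closure.} Because $(r+\delta-1)\mid(q^2+1)$ we have $u(r+\delta-1)\mid(q^2+1)au$, so $q^2au\equiv -au\pmod n$. Hence $D$ is a single $q^2$-cyclotomic coset. It has size $2$, since $n$ is odd and $au\not\equiv0$. By Lemma~\ref{q2+1 Lcyclotomic coset}, $L$ is also a union of $q^2$-cyclotomic cosets. Since $D$ has residues $\pm\delta$, which do not occur in $L$, we get $D\cap L=\varnothing$. Thus $|Z|=u(\delta-1)+2$, $\dim C=ur-2$, and $C$ has $(r,\delta)$-locality by Proposition~\ref{(r,delta)locality}.

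\textbf{Minimum distance.} The residues of $Z$ modulo $r+\delta-1$ include $-\delta,-(\delta-2),\dots,\delta-2,\delta$. This is an arithmetic progression of length $\delta+1$ with difference $2$, and $\gcd(2,r+\delta-1)=1$ because $n$ is odd. Since $d'=\delta+2\le2\delta$, Lemma~\ref{distance} gives $d(C)\ge\delta+2$. The bound \eqref{singleton_lrc} with $k=ur-2$ gives $d(C)\le\delta+2$, so $d(C)=\delta+2$.

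\textbf{Hermitian dual containment.} Write $N=r+\delta-1$. By Lemma~\ref{lem:hermitian-cross-symmetry} it suffices to check $L\cap(-qL)$, $D\cap(-qL)$ and $D\cap(-qD)$, each after reduction modulo $N$.
\begin{itemize}
\item For $L\cap(-qL)$: a common element would give $a'+qb'\equiv0\pmod N$ with odd $|a'|,|b'|\le\delta-2$. Then $|a'+qb'|\le(q+1)(\delta-2)<N$ by $r>q(\delta-2)+1$, and $a'+qb'\neq0$ since $q>\delta>|a'|$.
\item For $D\cap(-qL)$: we would get $\pm\delta+qb'\equiv0\pmod N$. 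Here $|\pm\delta+qb'|\le\delta+q(\delta-2)<N$, again by $r>q(\delta-2)+1$, and the value is nonzero because $q>\delta$.
\item For $D\cap(-qD)$: the crude size bound is too weak, since $(q+1)\delta$ may exceed $N$. I would use $q^2\equiv-1$ instead. Suppose $N\mid\delta(1\pm q)$. Multiplying by $(1\mp q)$ gives $N\mid\delta(1-q^2)\equiv 2\delta\pmod N$. Since $N$ is odd, $N\mid\delta$, which contradicts $N>\delta$.
\end{itemize}
So $Z\cap(-qZ)=\varnothing$ and $C^{\perp_H}\subseteq C$.

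\textbf{Purity.} Between consecutive locality blocks $m$ and $m+1$, the integers $mN+\delta-1,\dots,(m+1)N-(\delta-1)$ avoid $L$. This is a run of $r-\delta+2$ consecutive integers. The two elements of $D$ lie in at most two such gaps, and $u\ge3$ provides a third gap that lies entirely in $Z^c$. Lemma~\ref{dual distance-complement} then gives $d(C^{\perp_H})\ge r-\delta+3$. From $q\ge\delta+1$ we get $r>(\delta+1)(\delta-2)+1\ge2\delta-1$, hence $r-\delta+3>\delta+2$ and $d(C^{\perp_H})>d(C)$.

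\textbf{Conclusion.} Purity, optimality and quantum locality then follow from Theorem~\ref{classical-quantum-locality} and the bound \eqref{singleton_qlrc}, exactly as in the earlier theorems. I expect the main obstacle to be the $D\cap(-qD)$ check, which is handled by the $q^2\equiv-1$ argument above.
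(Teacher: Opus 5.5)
Your proposal is correct and follows the paper's proof essentially step for step. It uses the same defining set $L\cup\{au,-au\}$, the same residue-size arguments for $L\cap(-qL)$ and $D\cap(-qL)$, Lemma~\ref{distance} together with the Singleton-type bound for the distance, and the same free-gap argument (using $u\ge 3$) for purity. Your treatment of $D\cap(-qD)$, multiplying by $1\mp q$ to get $N\mid 2\delta$, is only a rephrasing of the paper's observation that $\gcd(r+\delta-1,q\pm1)=1$ when $r+\delta-1$ is odd and divides $q^2+1$.
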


\begin{proof}
Since $\gcd(u,(r+\delta-1))\mid\delta$, there exist integers $a,b$ satisfying
$
au+b(r+\delta-1)=\delta.
$
Consider the cyclic code $C$ over $\mathbb F_{q^2}$ with defining set $Z=L\cup D,$ where
\begin{align*}
L=
\bigcup_{\substack{j=1\\ j\ {\rm odd}}}^{\delta-2}
\{m(r+\delta-1)\pm j:0\leq m\leq u-1\},\ \ \ \
D=\{au,-au\}.
\end{align*}
We first show that $Z$ is a union of $q^2$-cyclotomic cosets modulo $n$. By Lemma~\ref{q2+1 Lcyclotomic coset}, $L$ is a union of $q^2$-cyclotomic cosets modulo $n$. Moreover, since $(r+\delta-1)\mid q^2+1$, we have
$$
(q^2+1)au
\equiv0\pmod{u(r+\delta-1)}.
$$
Hence $q^2au\equiv-au\pmod n$ and, similarly, $q^2(-au)\equiv au\pmod n.
$ Thus $\{au,-au\}$ is a $q^2$-cyclotomic coset modulo $n$. Consequently, $Z$ is a union of $q^2$-cyclotomic cosets; hence, a complete defining set of a cyclic code $C$ over $\mathbb F_{q^2}$. Since $L\subseteq Z$, Proposition~\ref{(r,delta)locality} implies that $C$ is a cyclic $(r,\delta)$-LRC. We next show that $C$ is Hermitian dual-containing.  Since $q(\delta-2)+1<r$, we can show that $L\cap(-qL)=\varnothing$, as shown in Theorem \ref{thm:q2+1 basic construction}. We now consider the intersection between $D$ and $-qL$.  Suppose that
$D\cap(-qL)\neq\varnothing.$ Then, for some odd integer $j$ with $1\leq|j|\leq\delta-2$ and some $\varepsilon\in\{1,-1\}$,
$$
\varepsilon\delta\equiv-qj\pmod{r+\delta-1}.
$$
However, $
|qj+\varepsilon\delta|\leq q(\delta-2)+\delta<r+\delta-1,$ where the last inequality follows from
$r>q(\delta-2)+1$. Therefore, $qj+\varepsilon\delta=0.$ This is impossible because $|qj|\geq q>\delta.$ Consequently, $D\cap(-qL)=\varnothing.$ Similarly, $L\cap(-qD)=\varnothing.$ It remains to show that $D\cap(-qD)=\varnothing.$ Suppose on contrary, $au\equiv \pm qau\pmod{n}$. Reducing modulo $r+\delta-1$ yields
$$
\delta\equiv\pm q\delta\pmod {r+\delta-1},
$$
and hence $(r+\delta-1)\mid\delta(q\pm1).$ Since $r+\delta-1$ is odd and $r+\delta-1\mid (q^2+1)$, we have
$$
\gcd(r+\delta-1,q-1)=\gcd(r+\delta-1,q+1)=1.
$$
Indeed, any common divisor of $r+\delta-1$ and $q\pm1$ divides both $q^2+1$ and $q^2-1$, and therefore divides $2$; since $r+\delta-1$ is odd, such a divisor must be $1$. It follows that $r+\delta-1\mid\delta$, which is impossible because $0<\delta<r+\delta-1.$
Therefore,
$
D\cap(-qD)=\varnothing.
$
Combining the above intersections gives $Z\cap(-qZ)=\varnothing.$ Hence $C$ is Hermitian dual-containing. We now determine the parameters of $C$. The sets $L$ and $D$ are disjoint, thus $|Z|=u(\delta-1)+2,
$ and hence
$$
\dim(C)
=n-|Z|=u(r+\delta-1)-u(\delta-1)-2\
=ur-2.$$ Furthermore, modulo $r+\delta-1$, the defining set $Z$ contains the arithmetic progression
$$
-\delta,-(\delta-2),\ldots,-3,-1,1,3,\ldots,\delta-2,\delta,
$$
which consists of $\delta+1$ terms with common difference $2$. Since $(r+\delta-1)$ is odd, $\gcd(2,r+\delta-1)=1.$ Therefore, by Lemma~\ref{distance}, we obtain $d(C)\geq\delta+2.$ On the other hand, by the Singleton-type bound for $(r,\delta)$-LRCs, $d(C)\leq \delta+2.$
Hence $d(C)=\delta+2.$ Therefore, $C$ is an optimal Hermitian dual-containing cyclic $(r,\delta)$-LRC with parameters $
\bigl[u(r+\delta-1),ur-2,\delta+2\bigr]_{q^2}.
$
We finally establish the purity of the resulting quantum code. Since between two consecutive locality blocks, the set
$$
A_m=\{m(r+\delta-1)+(\delta-1),\,m(r+\delta-1)+\delta,\ldots,(m+1)(r+\delta-1)-(\delta-1)\}
$$
contains \(r-\delta+2\) consecutive integers which do not belong to
\(L\). Since \(D=\{au,-au\}\) contains only two elements and \(u\geq3\),
there exists at least one gap \(A_m\) such that \(A_m\cap D=\varnothing\).
Consequently, $A_m\subseteq Z^c.$ Therefore, by Lemma~\ref{dual distance-complement}, $
d\bigl(C^{\perp_H}\bigr)\geq r-\delta+3.
$ It remains to compare this bound with \(d(C)=\delta+2\). Since
\(q>\delta\), we have \(q\geq\delta+1\), and hence by assumption
$$
r>q(\delta-2)+1
\geq(\delta+1)(\delta-2)+1
\geq2\delta-1
$$
for every odd \(\delta\geq3\). Consequently,
$
d\bigl(C^{\perp_H}\bigr)
\geq r-\delta+3
>\delta+2=d(C).
$
Hence, the quantum code \(\mathcal{Q}(C)\) obtained from the Hermitian
construction is pure and has minimum distance
$
d_{\mathcal Q}=\delta+2.
$
Therefore, by Theorem~\ref{classical-quantum-locality},
\(\mathcal{Q}(C)\) is a pure quantum cyclic \((r,\delta)\)-LRC over
\(\mathbb F_q\) with parameters
$$
\left[\!\left[
u(r+\delta-1),\,
u(r-\delta+1)-4,\,
\delta+2
\right]\!\right]_q.
$$
Since it attains the quantum Singleton-type bound
\eqref{singleton_qlrc}, the resulting quantum code is optimal.
\end{proof}
\begin{example}
Let $q=11,\ \delta=5,\ r=57,\ u=3.$ Then, $r+\delta-1=57+5-1=61$ and $q^2+1=11^2+1=122.$ Consequently, $r+\delta-1=61\mid122=q^2+1.$ Moreover, $n=u(r+\delta-1)=3\cdot61=183$ is odd, and $\gcd(u,r+\delta-1)=\gcd(3,61)=1,$ i.e., $\gcd(u,r+\delta-1)\mid5=\delta.$ Furthermore,
$q=11>5=\delta$ and $r=57>11(5-2)+1=34.$ Thus, all the hypotheses of the theorem are satisfied.
The locality set is
$$
L=\bigcup_{m=0}^{2}\{61m\pm1,61m\pm3\}=\{1,3,58,60,62,64,119,121,123,125,180,182\}\pmod{183}.
$$
Choose $a=22,\ b=-1,$ so that $au+b(r+\delta-1)=22\cdot3-61=5=\delta.$ The additional defining set is
$
D=\{\pm au\}=\{\pm66\}=\{66,117\}\pmod{183}.
$
Therefore, the complete defining set is
$$
Z=L\cup D=\{
1,3,58,60,62,64,66,117,119,121,123,125,180,182
\}\pmod{183}.
$$
Moreover, $-qZ=-11Z=\{6,11,28,33,50,72,89,94,111,133,150,155,172,177\}.$ Hence, $Z\cap(-qZ)=\varnothing.$
Since $|Z|=14$, the dimension of $C$ is $k=n-|Z|=183-14=169=ur-2$. From Magma \cite{magma} computations, $d(C)=7$ and $d(C^{\perp_H})=58.$ Hence, $C$ is a Hermitian dual-containing optimal cyclic $(57,5)$-LRC  with
parameters $[183,169,7]_{121}.$
Moreover, $d\left(C^{\perp_{\mathrm H}}\right)>d(C),$ the resulting quantum code by Hermitian construction is pure and has minimum distance $7$. Hence, there exists a pure  optimal quantum cyclic $(57,5)$-LRC with parameters $[[183,155,7]]_{11}.$
\end{example}
\begin{theorem}\label{thm:q2+1 2delta construction}
Let $n=u(r+\delta-1)$ be an odd integer such that $(r+\delta-1)\mid(q^2+1)$. Suppose that
$\gcd(u,r+\delta-1)=1$, $\gcd(u,q^2-1)\nmid q+1$. If  $r>q(\delta -1)$, and $q>\delta$, then there exists a Hermitian dual-containing optimal cyclic
$(r,\delta)$-LRC with parameters
$\bigl[u(r+\delta-1),\,ur-\delta,\,2\delta\bigr]_{q^2}.$
Moreover, there exists an optimal pure quantum cyclic $(r,\delta)$-LRC with parameters
$\bigl[\!\bigl[u(r+\delta-1),\,
u(r-\delta+1)-2\delta,\, 2\delta\bigr]\!\bigr]_q.$
\end{theorem}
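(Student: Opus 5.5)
The plan is to add $\delta$ zeros to the symmetric locality set $L$ of \eqref{eq:symmetric-hermitian-L}, so that $|Z|=u(\delta-1)+\delta$ and hence $\dim(C)=ur-\delta$. The residues of $Z$ modulo $r+\delta-1$ should then fill the full interval $\{-(\delta-1),\ldots,-1,0,1,\ldots,\delta-1\}$, which consists of $2\delta-1$ consecutive residues. Since $L$ already supplies the odd residues $\pm1,\pm3,\ldots,\pm(\delta-2)$, we must add the even nonzero residues and the residue $0$.

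Since $\gcd(u,r+\delta-1)=1$, choose $a,b$ with $au+b(r+\delta-1)=1$, and let $\zeta=\gcd(u,q^2-1)$ and $x=u/\zeta$. Put
\[
D_1=\{x(r+\delta-1)\},\qquad D_2=\{\pm auj:\ j \text{ even},\ 2\le j\le \delta-1\},\qquad Z=L\cup D_1\cup D_2 .
\]
$D_1$ is a singleton $q^2$-cyclotomic coset, exactly as in Theorem~\ref{thm: q^2-1 one zero extention}. Since $(q^2+1)au\equiv0\pmod n$, each pair $\{auj,-auj\}$ is a $q^2$-cyclotomic coset, as in Theorem~\ref{thm:q2+1 two zero extention}. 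By Lemma~\ref{q2+1 Lcyclotomic coset}, $L$ is a union of $q^2$-cyclotomic cosets. Hence $Z$ is a complete defining set, and Proposition~\ref{(r,delta)locality} gives $(r,\delta)$-locality.

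The sets $L$, $D_1$, $D_2$ are pairwise disjoint because their residues differ, and the elements of $D_2$ are distinct modulo $n$ because their residues are distinct. This gives $|Z|=u(\delta-1)+1+(\delta-1)$, so $\dim(C)=ur-\delta$.

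For the minimum distance, the residues $-(\delta-1),\ldots,\delta-1$ form an arithmetic progression of $2\delta-1$ terms with difference $1$. Lemma~\ref{distance}, applied with $d'=2\delta$, gives $d(C)\ge2\delta$. The bound \eqref{singleton_lrc} gives $d(C)\le 2\delta$, so $d(C)=2\delta$.

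Hermitian dual containment is the main step. By Lemma~\ref{lem:hermitian-cross-symmetry}, it suffices to check one of each pair of cross-intersections.

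\emph{Nonzero residues.} Suppose an element of $Z$ with nonzero residue $i$ lies in $-q$ times an element with nonzero residue $j$, where $1\le|i|,|j|\le\delta-1$. Reduction gives $i+qj\equiv0\pmod{r+\delta-1}$. Since $|i+qj|\le(q+1)(\delta-1)<r+\delta-1$ by $r>q(\delta-1)$, we get $i+qj=0$. This is impossible because $|qj|\ge q>\delta>|i|$. This one argument handles all intersections among $L$ and $D_2$ simultaneously.

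\emph{Residue $0$ against nonzero residues.} A relation $0\equiv -qj$ with $0<|j|<r+\delta-1$ is impossible, since $\gcd(q,r+\delta-1)=1$ (this follows from $(r+\delta-1)\mid q^2+1$).

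\emph{$D_1$ against itself.} $D_1\cap(-qD_1)\neq\varnothing$ would force $(q+1)x(r+\delta-1)\equiv0\pmod n$, i.e.\ $\zeta\mid q+1$, which contradicts the hypothesis.

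Hence $Z\cap(-qZ)=\varnothing$ and $C^{\perp_H}\subseteq C$.

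For purity, every element of $Z$ has residue in $[-(\delta-1),\delta-1]$. So for each block $m$, the integers $m(r+\delta-1)+\delta,\ldots,(m+1)(r+\delta-1)-\delta$ lie in $Z^c$; these are $r-\delta$ consecutive integers. Lemma~\ref{dual distance-complement} then gives $d(C^{\perp_H})\ge r-\delta+1$. Since $q>\delta$ gives $q\ge\delta+1$, we have $r>q(\delta-1)\ge(\delta+1)(\delta-1)\ge3\delta-1$ for $\delta\ge3$. Therefore $d(C^{\perp_H})>2\delta=d(C)$, and the quantum code is pure.

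Theorem~\ref{classical-quantum-locality} and the Hermitian construction then yield an optimal pure quantum $(r,\delta)$-LRC with parameters $[[u(r+\delta-1),\,u(r-\delta+1)-2\delta,\,2\delta]]_q$, which attains \eqref{singleton_qlrc}. The step I expect to need the most care is the single residue-size argument for dual containment, together with confirming that the borderline case $\delta=3$ still satisfies the strict inequality $r-\delta+1>2\delta$.
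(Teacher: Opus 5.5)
Your proposal is correct and follows essentially the same route as the paper: the same sets $D_1=\{x(r+\delta-1)\}$ and $D_2$ (your $\{\pm auj: j\ \text{even},\ 2\le j\le\delta-1\}$ is exactly the paper's $\{\pm 2auj: 1\le j\le(\delta-1)/2\}$), the same residue-size argument for $Z\cap(-qZ)=\varnothing$, Lemma~\ref{distance} together with the Singleton-type bound to get $d=2\delta$, and the same run of $r-\delta$ consecutive elements of $Z^c$ for purity. The differences are cosmetic. You merge all the nonzero-residue checks into the single inequality $(q+1)(\delta-1)<r+\delta-1$, and you handle the residue-$0$ case via $\gcd(q,r+\delta-1)=1$, whereas the paper uses the trivial direction together with Lemma~\ref{lem:hermitian-cross-symmetry}.
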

\begin{proof}
Let
$$
 D_1=\{x(r+\delta-1)\},\quad
 D_2=\{\pm 2auj: 1\leq j\leq \frac{\delta-1}{2}\},   
$$ where
$x=\frac{u}{\zeta}$, $\zeta=\gcd(u,q^2-1)$, and $a,b$ are integers satisfying
$au+b(r+\delta-1)=1$. First, we show that $Z=L\cup D_1\cup D_2$ is a union of $q^2$-cyclotomic cosets modulo $n$.
By Lemma \ref{q2+1 Lcyclotomic coset}, $L$ is union of $q^2$-cyclotomic cosets modulo $n$. Next, since $\zeta\mid q^2-1$,
\[
(q^2-1)\frac{u}{\zeta}(r+\delta-1)\equiv0\pmod{u(r+\delta-1)}.
\]
Thus $q^2x(r+\delta-1)\equiv x(r+\delta-1)\pmod n,$ and consequently $D_1=\{x(r+\delta-1)\}$ is a singleton $q^2$-cyclotomic coset. Since $(r+\delta-1)|(q^2+1)$, for $1\leq j\leq(\delta-1)/2$, we obtain
$$
(q^2+1)(2auj)\equiv 0\pmod{u(r+\delta-1)}$$
and hence $q^2(2auj)\equiv-2auj\pmod n.$ Similarly, $q^2(-2auj)\equiv2auj\pmod n.$ Therefore $\{2auj,-2auj\}$ is a $q^2$-cyclotomic coset modulo $n$ for $1\leq j\leq(\delta-1)/2$. Hence, $D_2=\{\pm2au,\pm4au,\ldots,\pm(\delta-1)au\}
$ is a union of $q^2$-cyclotomic cosets. Thus, $Z$ is a  complete defining
set of a cyclic code $C$ over $\mathbb F_{q^2}$ and $L\subset Z$ by Proposition \ref{(r,delta)locality}, $C$ is a cyclic $(r,\delta)$ LRC over $\mathbb{F}_{q^2}.$

We now prove that the corresponding cyclic code $C$ is Hermitian dual-containing.  As shown in Theorem \ref{thm:q2+1 basic construction},
$L\cap(-qL)=\varnothing$ under the assumptions
$q>\delta$ and $(q+1)(\delta-2)<r+\delta-1$. Modulo $r+\delta-1$, the elements of $L$, $D_1$, and $D_2$ have residues
\[
\{\pm1,\pm3,\ldots,\pm(\delta-2)\},\qquad
\{0\},\qquad
\{\pm2,\pm4,\ldots,\pm(\delta-1)\},
\]
respectively. Since multiplication by $-q$ preserves the zero residue modulo $r+\delta-1$, while $L$ and $D_2$ contain only nonzero residues, we immediately obtain
$L\cap(-qD_1)=\varnothing$ and
$D_1\cap(-qD_2)=\varnothing$.

Next, we consider the intersection $D_1\cap(-qD_1)$. Observe that  under the assumption $\gcd(u,q^2-1)\nmid q+1,$ we have
$D_1\cap(-qD_1)=\varnothing.$ 
It remains to consider the intersections involving $L$ and $D_2$. Suppose, to the contrary, that
$D_2\cap(-qL)\neq\varnothing$. Then there exist an even integer $i$ and an odd integer $j$ such that
$2\leq |i|\leq\delta-1$ and $1\leq |j|\leq\delta-2$, satisfying
\[
i\equiv-qj\pmod{r+\delta-1}.
\]
On the other hand, using
$q(\delta-1)<r$, we obtain
$|qj+i|
\leq q(\delta-2)+(\delta-1)
<r+\delta-1.
$ Hence, necessarily $qj+i=0$. This is impossible, since $|qj|\geq q>\delta> |i|.$ Therefore, $D_2\cap(-qL)=\varnothing$. Finally, suppose that
$D_2\cap(-qD_2)\neq\varnothing$. Then, there exist
$i,j\in\{\pm2,\pm4,\ldots,\pm(\delta-1)\}$ such that
\[
j\equiv-qi\pmod{r+\delta-1}.
\]
However, $|qi+j|
\leq q(\delta-1)+(\delta-1) <r+\delta-1.$
 Hence, necessarily $qi+j=0$, which is impossible because
$|qi|>q>\delta>|j|.$ 
Thus,
$D_2\cap(-qD_2)=\varnothing$. Combining the above observations, we conclude that $Z\cap(-qZ)=\varnothing.$ 
Hence, $C$ is Hermitian dual-containing. 
We next determine the parameters of $C$. The sets $L$, $D_1$, and $D_2$ are pairwise disjoint. Therefore, $|Z|=u(\delta-1)+1+(\delta-1)=u(\delta-1)+\delta.$ It follows that
\[
\dim(C)
=n-|Z|
=u(r+\delta-1)-u(\delta-1)-\delta
=ur-\delta.
\]
Furthermore modulo $r+\delta-1$, $Z$ contains the set
$$
\{0,\pm1,\pm2,\pm3,\ldots,\pm(\delta-1)\}
\pmod{r+\delta-1}.
$$
Hence, by Lemma \ref{distance}, $d(C)\geq2\delta.$ On the other hand, the Singleton-type bound  \eqref{singleton_lrc} gives $d(C)\leq2\delta.$ Thus, $C$ is an optimal Hermitian dual-containing cyclic $(r,\delta)$-LRC with parameters
\[
[u(r+\delta-1),\,ur-\delta,\,2\delta]_{q^2}.
\]
We finally establish the purity of the resulting quantum code. From the above description of the residues of $Z$,  for any \(0\leq m\leq u-1\), the set
$$
A_m=
\{m(r+\delta-1)+\delta,\,m(r+\delta-1)+\delta+1,\ldots,(m+1)(r+\delta-1)-\delta\}
$$
is contained in \(Z^c\). The set \(A_m\) consists of
$r-\delta$ consecutive integers. Therefore, by Lemma~\ref{dual distance-complement}, $d\bigl(C^{\perp_H}\bigr)\geq r-\delta+1.$
It remains to compare this lower bound with \(d(C)=2\delta\). Since \(q>\delta\), we have \(q\geq\delta+1\). Thus, for $\delta\geq 3$
$$
r>q(\delta-1)\geq(\delta+1)(\delta-1)\geq3\delta-1.
$$
Consequently, $
d\bigl(C^{\perp_H}\bigr)
\geq r-\delta+1
>2\delta=d(C).
$
Therefore, the quantum code \(\mathcal{Q}(C)\) obtained from the Hermitian construction is pure and has minimum distance
$
d_{\mathcal Q}=2\delta.
$
By Theorem~\ref{classical-quantum-locality}, \(\mathcal{Q}(C)\) is a pure quantum cyclic \((r,\delta)\)-LRC over \(\mathbb F_q\) with parameters
$$
\left[\!\left[
u(r+\delta-1),\,
u(r-\delta+1)-2\delta,\,
2\delta
\right]\!\right]_q.
$$
Since it attains the quantum Singleton-type bound~\eqref{singleton_qlrc}, the resulting quantum code is optimal.
\end{proof}
\begin{example}
Let $q=4$, $u=3$, $r=15$, and $\delta=3$. Then
$r+\delta-1=17$ and $n=51$. Notice that $n$ and $\delta$ are
odd, and $r+\delta-1=17\mid(4^2+1),\ 
\gcd(u,r+\delta-1)=\gcd(3,17)=1,\ \gcd(u,q^2-1)=\gcd(3,15)=3$ i.e., $3\nmid 5=q+1.$ Furthermore, $q=4>\delta=3$, and $r=15>4(3-1)=8.$ The locality set is $$L=\{17m\pm 1: 0\leq m\leq 2\}=\{1,16,18,33,35,50\}\subseteq\mathbb{Z}_{51}$$
Let $x=\frac{u}{\zeta}=1$, set $D_1=\{x(r+\delta-1)\}=\{17\}$. Choose $a=6$ and $b=-1$, so $au+b(r+\delta-1)=6\cdot3+(-1)\cdot17=1$. Then $$D_2=\{\pm 2auj:1\leq j\leq 1\}=\{\pm 36\}=\{15,36\}\subseteq\mathbb{Z}_{51}.$$
Therefore, the complete defining set is $Z=L\cup D_1\cup D_2=\{1,15,16,17,18,33,35,36,50\}\subseteq\mathbb{Z}_{51}$. Since $q=4$, we have $-4Z=\{4,9,13,21,30,34,38,42,47\}\pmod{51}.$ Hence, $Z\cap(-4Z)=\varnothing.$  Using Magma \cite{magma}, we verify that $d(C)=6$ and $d(C^{\perp_H})=16$. Consequently, there exists a Hermitian
dual-containing optimal cyclic $(15,3)$-LRC over $\mathbb{F}_{16}$
with parameters
$
[51,42,6]_{16}.
$
Since $d(C^{\perp_H})=16>d(C)$, the corresponding pure quantum cyclic $(15,3)$-LRC with parameters
$
[[51,33,6]]_{4}
$ is optimal.
\end{example}
The following corollary provides a family of optimal $(r,\delta)$-LRCs with minimum distance $2\delta-2$ under weaker assumptions than those required in Theorem~\ref{thm:q2+1 2delta construction}. \begin{corollary}\label{cor:q2+1 2delta-2 construction}
Let $n=u(r+\delta-1)$ be an odd integer such that $(r+\delta-1)\mid(q^2+1)$. Suppose that
$\gcd(u,r+\delta-1)=1$, $\gcd(u,q^2-1)\nmid q+1$. If $q>(\delta-2)
\text{ and } r>q(\delta-2)-1$ then there exists a Hermitian dual-containing optimal cyclic $(r,\delta)$-LRC over $\mathbb F_{q^2}$ with parameters
$\bigl[u(r+\delta-1),ur-\delta+2,2\delta-2\bigr]_{q^2}.
$
Moreover, there exists an optimal pure quantum cyclic $(r,\delta)$-LRC over $\mathbb F_q$ with parameters
$
\bigl[\bigl[u(r+\delta-1),
u(r-\delta+1)-2(\delta-2),
2\delta-2
\bigr]\bigr]_q.
$
\end{corollary}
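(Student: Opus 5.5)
The plan is to repeat the construction of Theorem~\ref{thm:q2+1 2delta construction} with one fewer pair of zeros in $D_2$. Let $\zeta=\gcd(u,q^2-1)$ and $x=u/\zeta$. Choose integers $a,b$ with $au+b(r+\delta-1)=1$, and set
$$D_1=\{x(r+\delta-1)\},\qquad D_2=\Bigl\{\pm 2auj:1\le j\le \tfrac{\delta-3}{2}\Bigr\},$$
where $D_2=\varnothing$ when $\delta=3$. Take $Z=L\cup D_1\cup D_2$, with $L$ as in \eqref{eq:symmetric-hermitian-L}. Cyclotomic closure goes exactly as in that theorem. Lemma~\ref{q2+1 Lcyclotomic coset} handles $L$. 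The relation $\zeta\mid q^2-1$ makes $D_1$ a singleton $q^2$-coset. The relation $(r+\delta-1)\mid q^2+1$ makes each $\{\pm 2auj\}$ a $q^2$-coset. Proposition~\ref{(r,delta)locality} then gives $(r,\delta)$-locality.

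\textbf{Residues and dual containment.} Modulo $r+\delta-1$ the residues are:
\begin{itemize}
\item $\{\pm1,\pm3,\dots,\pm(\delta-2)\}$ for $L$;
\item $\{0\}$ for $D_1$;
\item $\{\pm2,\dots,\pm(\delta-3)\}$ for $D_2$.
\end{itemize}
For Hermitian dual containment, I check $Z\cap(-qZ)=\varnothing$ one cross term at a time, using Lemma~\ref{lem:hermitian-cross-symmetry} to halve the work.
\begin{itemize}
\item \emph{Terms involving $D_1$.} The zero residue separates $D_1$ from $L$ and $D_2$. The intersection $D_1\cap(-qD_1)$ is empty because $\zeta\nmid q+1$.
\item \emph{Terms among $L$ and $D_2$.} A collision forces $qj+i\equiv0\pmod{r+\delta-1}$ with $|j|\le\delta-2$ and $|i|\le\delta-2$ (or with both at most $\delta-3$). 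In every case $|qj+i|\le (q+1)(\delta-2)<r+\delta-1$; the strict inequality is exactly the hypothesis $r>q(\delta-2)-1$. So $qj+i=0$ as integers. This is impossible, since $|qj|\ge q>\delta-2\ge|i|$.
\end{itemize}

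\textbf{Parameters.} The three sets are pairwise disjoint, so $|Z|=u(\delta-1)+1+(\delta-3)$, which gives $\dim C=ur-\delta+2$. Modulo $r+\delta-1$, $Z$ contains the $2\delta-3$ consecutive residues $0,\pm1,\dots,\pm(\delta-2)$. Lemma~\ref{distance} with $d'=2\delta-2\le 2\delta$ then gives $d(C)\ge 2\delta-2$. For the matching upper bound, use $\lceil k/r\rceil=u$ (valid since $\delta-2<r$) in \eqref{singleton_lrc}:
$$d\le u(\delta-1)+\delta-1-(u-1)(\delta-1)=2\delta-2.$$
This yields classical optimality. The quantum dimension is $2k-n=u(r-\delta+1)-2(\delta-2)$.

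\textbf{Purity, the step I expect to be the main obstacle.} Take a block $m$ with $m+1\neq x$; one exists because $u\ge3$. The integers with residues $\delta-1,\dots,r$ in that block avoid $Z$, giving $r-\delta+2$ consecutive elements of $Z^c$. Lemma~\ref{dual distance-complement} then gives $d(C^{\perp_H})\ge r-\delta+3$. We need this to exceed $2\delta-2$, i.e.\ $r>3\delta-5$.
\begin{itemize}
\item For $\delta\ge5$ this follows from $q\ge\delta-1$ and $r>q(\delta-2)-1\ge(\delta-1)(\delta-2)-1\ge3\delta-5$.
\item For $\delta=3$ we need $r\ge5$. The hypotheses, together with the standing assumption $r\ge3$, allow $r=3$ (so $r+\delta-1=5$), and this case needs separate care. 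There I would first try to exclude it by parity: $r+\delta-1$ must be odd and divide $q^2+1$. Failing that, I would bound $d(C^{\perp_H})$ directly using a longer arithmetic progression in $Z^c$ with difference coprime to $n$, as in the remark after Lemma~\ref{dual distance-complement}.
\end{itemize}
Once purity holds, Theorem~\ref{classical-quantum-locality} and \eqref{singleton_qlrc} give the optimal pure quantum code.
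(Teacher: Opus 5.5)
Your construction, the dual-containment check, the dimension and distance computations, and the purity argument for $\delta\ge5$ coincide with the paper's proof. The gap is purity for $\delta=3$. You assert that the hypotheses permit $r=3$ and leave that case unresolved, but the assertion is false. Neither fallback would help either: parity cannot exclude it, since $r+\delta-1=5$ is odd and divides $q^2+1$ for $q=2,3$.

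The missing observation is that the hypothesis $\gcd(u,q^2-1)\nmid(q+1)$ excludes small $q$. For $\delta=3$ the condition $r>q(\delta-2)-1$ reads $r>q-1$, so $r=3$ would force $q\in\{2,3\}$.
\begin{itemize}
\item If $q=2$, then $q^2-1=3=q+1$, so $\gcd(u,q^2-1)\mid(q+1)$ automatically.
\item If $q=3$, then $u$ is odd (because $n$ is), so $\gcd(u,8)=1\mid 4$.
\end{itemize}
Both contradict the hypothesis. In both cases $\zeta\mid(q+1)$, which is exactly the condition under which $D_1\cap(-qD_1)\neq\varnothing$. So in that configuration your code would not even be Hermitian dual-containing, and a direct bound on $d(C^{\perp_H})$ is beside the point.

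Hence $q\ge4$, so $r>q-1\ge3$. Since $r+2=r+\delta-1$ divides the odd integer $n$, $r$ is odd, and therefore $r\ge5$. Your bound then gives $d(C^{\perp_H})\ge r-\delta+3=r\ge5>4=2\delta-2$. This completes purity and, with Theorem~\ref{classical-quantum-locality}, the optimality of the quantum code. This is precisely how the paper handles $\delta=3$.

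Incidentally, you do not need to avoid the block with $m+1=x$. Residue $0$ lies outside $\{\delta-1,\ldots,r\}$, so every such block of $r-\delta+2$ consecutive integers is contained in $Z^c$.
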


\begin{proof}
The result follows by modifying the construction in the  Theorem \ref{thm:q2+1 2delta construction} and taking
$$
D_2=\{\pm2auj:1\leq j\leq\frac{\delta-3}{2}\}.
$$
The same argument shows that $Z=L\cup D_1\cup D_2$ is a union of $q^2$-cyclotomic cosets modulo $n$ and that $Z\cap(-qZ)=\varnothing$ under the assumptions
$q>\delta-2$ and $r>q(\delta-2)-1$. Hence, the resulting cyclic code is Hermitian dual-containing. 
Moreover, $Z$ contains elements whose residues modulo $r+\delta-1$ are
$
\{0,\pm1,\pm2,\ldots,\pm(\delta-2)\}.
$
Thus, $Z$ contains $(2\delta-3)$ consecutive integers modulo $(r+\delta-1)$. Therefore, by Lemma \ref{distance}, we obtain
$
d(C)\geq 2\delta-2.
$  The Singleton-type bound for $(r,\delta)$-LRCs yields
$d(C)\leq2\delta-2$; hence, $d(C)=2\delta-2.
$ Therefore, $C$ is a Hermitian dual-containing optimal cyclic $(r,\delta)$-LRC with parameters $\bigl[u(r+\delta-1),ur-\delta+2,2\delta-2\bigr]_{q^2}.$ 
Observe that, none of the residues
$
\delta-1,\delta,\ldots,(r+\delta-1)-(\delta-1)
$
occur in \(Z\). Therefore, for every \(0\leq m\leq u-1\),
$$
A_m=
\{m(r+\delta-1)+\delta-1,\,m(r+\delta-1)+\delta,\ldots,(m+1)(r+\delta-1)-(\delta-1)\}
\subseteq Z^c.
$$
The set \(A_m\) consists of $r-\delta+2$
consecutive integers. Hence, by Lemma~\ref{dual distance-complement}, $d(C^{\perp_H})\geq r-\delta+3.
$  Since \(q>\delta-2\), we have \(q\geq\delta-1\). For \(\delta\geq5\), 
$$
r>q(\delta-2)-1
\geq(\delta-1)(\delta-2)-1\geq3(\delta-1)-1>3(\delta-1)-2=3\delta-5.
$$
For \(\delta=3\), the hypothesis \(q>\delta-2\) gives \(q>1\). The cases \(q=2\) and \(q=3\) are incompatible with \(\gcd(u,q^2-1)\nmid(q+1)\) together with the fact that \(u\) is odd. Hence \(q\geq4\). Thus, for $\delta=3$
$$
r>q(\delta-2)-1\geq4(3-2)-1=3.
$$
Since \(r+\delta-1=r+2\) is odd, $r$ is odd and thus \(r\geq5\). Consequently, in all admissible cases,
$
r>3\delta-5.
$
It follows that $
d(C^{\perp_H})
\geq r-\delta+3
>2\delta-2=d(C).
$
Consequently, by Theorem~\ref{classical-quantum-locality},
\(\mathcal Q(C)\) is an optimal pure quantum cyclic \((r,\delta)\)-LRC over
\(\mathbb F_q\) with parameters
$
\left[\!\left[
u(r+\delta-1),\,
u(r-\delta+1)-2(\delta-2),\,
2\delta-2
\right]\!\right]_q.
$
\end{proof}
\begin{example}
Let $q=7$, $u=3$, $r=21$, and $\delta=5$. Then
$r+\delta-1=25$ and $n=75$. Both $n$ and $\delta$ are odd, and
$25\mid(7^2+1)$. Moreover,
\[
\gcd(u,r+\delta-1)=\gcd(3,25)=1,
\qquad
\zeta=\gcd(u,q^2-1)=\gcd(3,48)=3\nmid8(=q+1).
\]
The remaining inequalities are $r=21>q(\delta-2)-1=7(5-2)-1=20,\ q=7>\delta-2=3.$
For $\delta=5$, the locality set is
$$L=\{25m\pm1:0\leq m\leq2\}\cup\{25m\pm3:0\leq m\leq2\} =\{1,3,22,24,26,28,47,49,51,53,72,74\}\subseteq\mathbb{Z}_{75}.$$
Let $x=\frac{u}{\zeta}=1.$ Therefore, $D_1=\{x(r+\delta-1)\}=\{25\}.$ Choose $a=-8$ and $b=1$, so that $au+b(r+\delta-1)=(-8)\cdot3+1\cdot25=1.$ Since $(\delta-3)/2=1$, the second additional defining set is
\[
D_2
=
\{\pm2auj:1\leq j\leq1\}
=
\{\pm2(-8)(3)\}
=
\{\pm48\}
=
\{27,48\}
\pmod {75}.
\]
 Consequently, the complete defining set is
$
Z=\{1,3,22,24,25,26,27,28,47,48,49,51,53,72,74\}\pmod{75}.
$ 
Since $q=7$, we compute $-qZ=-7Z$ modulo $75$. More precisely,
\[
-7Z=\{4,7,18,21,29,32,36,39,43,46,50,54,57,68,71\}\pmod{75}
\]
Clearly, $Z\cap(-7Z)=\varnothing.$ 
Thus, the cyclic code $C$ over $\mathbb{F}_{q^2}$ with defining set $Z$ is Hermitian
dual-containing. Using Magma \cite{magma}, we check $d(C)=8$ and $d(C^{\perp_H})\geq22>d(C)$. Therefore, there exists a Hermitian dual-containing optimal cyclic $(21,5)$-LRC
over $\mathbb{F}_{49}$ with parameters
$
[75,60,8]_{49}.
$
Moreover, $d(C^{\perp_H})>d(C)$, the corresponding pure quantum cyclic $(21,5)$-LRC with parameters
$
[[75,45,8]]_{7}
$ is optimal.
\end{example}
This  subsection concludes with Table \ref{tab:hermitian_q2plus1} summarizing the hypotheses and parameters of all the resulting families.
\begin{table*}[ht]
\centering
\caption{Hermitian dual-containing optimal cyclic $(r,\delta)$-LRCs associated
with the $q^2+1$ construction and the corresponding  optimal  pure quantum $(r,\delta)$-LRCs. Throughout the table, $n=u(r+\delta-1)$ and $\delta$ are odd integers and $Z=L\cup D$, where $L=\bigcup_{\substack{1\leq j\leq\delta-2\\ j\ {\rm odd}}}
\{m(r+\delta-1)\pm j:0\leq m\leq u-1\}$}
\label{tab:hermitian_q2plus1}
\renewcommand{\arraystretch}{1.35}
\setlength{\tabcolsep}{4pt}
\scriptsize
\begin{tabular}{|c|p{3.2cm}|p{4cm}|p{3.4cm}|p{4.5cm}|}
\hline
\textbf{Result}
&
\textbf{Conditions}
&
\textbf{Distance Defining set}
&
\textbf{Optimal Classical LRC}
&
\textbf{Optimal Quantum LRC}
\\
\hline

Thm.\ref{thm:q2+1 basic construction}
&

$n\mid(q^2+1)$,

$q(\delta-2)+\ell<r+\delta-1$,

$(q+1)\ell<n$
&

$D=\{\pm1,\pm3,\ldots,\pm\ell\}$ 

where $\ell>\delta-2$ odd,
&
$\bigl[n,\,
ur-\ell+\delta-2,\,
\ell+2\bigr]_{q^2}$
&
$\bigl[\!\bigl[n,\,
u(r-\delta+1)-2\ell+2(\delta-2),\ell+2
\bigr]\!\bigr]_q$
\\
\hline

Thm.\ref{thm:q2+1 two zero extention}
&

$(r+\delta-1)\mid(q^2+1)$,

$\gcd(u,r+\delta-1)\mid\delta$,

$\delta<q$,

$q(\delta-2)+1<r$
&

$D=\{au,-au\}$,

$au\equiv \delta\pmod{r+\delta-1}$
&
$\bigl[n,\,ur-2,\,\delta+2\bigr]_{q^2}$
&
$\bigl[\!\bigl[n,\,
u(r-\delta+1)-4,\delta+2
\bigr]\!\bigr]_q$
\\
\hline

Thm.\ref{thm:q2+1 2delta construction}

&

$(r+\delta-1)\mid(q^2+1)$,

$\gcd(u,r+\delta-1)=1$,

$\gcd(u,q^2-1)\nmid(q+1)$,

$\delta<q$,

$q(\delta-1)<r$
&

$D_1=\{\frac{u}{\zeta}(r+\delta-1)\}$,

$D_2=
\{\pm2auj:
1\leq j\leq(\delta-1)/2\}$,

$\zeta=\gcd(u,q^2-1)$,

$au\equiv 1\pmod{r+\delta-1}$
&
$\bigl[n,\,ur-\delta,\,2\delta\bigr]_{q^2}$
&
$\bigl[\!\bigl[n,\,
u(r-\delta+1)-2\delta,\ 2\delta
\bigr]\!\bigr]_q$
\\
\hline

Cor.\ref{cor:q2+1 2delta-2 construction}
&
$(r+\delta-1)\mid(q^2+1)$,

$\gcd(u,r+\delta-1)=1$,

$\gcd(u,q^2-1)\nmid(q+1)$,

$\delta-2<q$,

$q(\delta-2)-1<r$
&

$D_1=\{\frac{u}{\zeta}(r+\delta-1)\}$,

$D_2=\{\pm2auj:1\leq j\leq(\delta-3)/2\}$

$au\equiv 1\pmod{r+\delta-1}$
&
$\bigl[n,\,
ur-\delta+2,\,
2\delta-2\bigr]_{q^2}$
&
$\bigl[\!\bigl[n,\,
u(r-\delta+1)-2(\delta-2), 2\delta-2
\bigr]\!\bigr]_q$
\\
\hline
\end{tabular}
\end{table*}

\section{Comparison with Closely Related Constructions}\label{sec:comparison}

In this section, we will compare our constructions  with several closely related recent constructions of quantum LRCs. Luo et al. \cite{LuoEtAl2025} constructed
optimal quantum LRCs from CSS codes and, in particular, obtained two cyclic
families for the case $\delta=2$.  Rajpurohit et al. \cite{RajpurohitBhaintwal2026} subsequently constructed quantum
$(r,\delta)$-LRCs from Euclidean dual-containing cyclic codes. Galindo et al. \cite{GalindoHernandoMatsumoto2026BCH} used BCH and
homothetic-BCH codes to obtain optimal pure quantum $(r,\delta)$-LRCs, including
families of unbounded length and families associated with divisors of
$q^2+1$.  Matrix-product constructions
provide further optimal pure families with flexible parameters, but those
codes are generally not cyclic
\cite{CaoZhou2026,GalindoEtAl2026MPC}. The main distinctions are summarized in Table \ref{tab:comparison-previous}.

\begin{table}[htbp]
\centering
\small
\caption{Comparison with closely related constructions of quantum locally recoverable codes.}
\label{tab:comparison-previous}
\begin{tabular}{p{0.19\textwidth}p{0.23\textwidth}p{0.49\textwidth}}
\hline
Reference
& Construction framework
& Comparison with the present work \\
\hline

Luo et al.~\cite{LuoEtAl2025}
& CSS-based cyclic constructions, mainly for $\delta=2$
& For $\delta=2$, our Theorem \ref{thm:basic construction} and Theorem \ref{thm:one zero extension} have the same parameters but removes the restriction $u+2\ell<r+2$ and $u+2<r$, $\gcd(u,q-1)|\frac{2(q-1)}{r+1}$, respectively. Our results also cover arbitrary $\delta\geq2$ and include Hermitian constructions.  \\

\hline

Rajpurohit et al.~\cite{RajpurohitBhaintwal2026}
& CSS constructions with bounded and unbounded length families
& Their unbounded length family has minimum distance $\delta$.
In contrast, our constructions provide optimal pure quantum
$(r,\delta)$-LRCs with minimum distances ranging from $\delta+1$ up to
$2\delta$, depending on the construction. \\

\hline

Galindo et al.~\cite{GalindoHernandoMatsumoto2026BCH}
& BCH and homothetic-BCH constructions using Euclidean and Hermitian duality
& For comparable choices of $n$, $r$, and $\delta$, their explicit optimal
pure families have quantum minimum distance $\delta$.
Our constructions yield additional distance values beyond $\delta$, including
families with $\delta+1$, $\delta+2$, and more generally
$\delta+2\leq d\leq2\delta$.
In the $q^2+1$ case, we also obtain families with
$d=2\delta-2$ and $d=2\delta$. \\

\hline

Matrix-product constructions
\cite{GalindoEtAl2026MPC,CaoZhou2026}
& Noncyclic matrix-product codes based on Euclidean or Hermitian duality
& There is no general parameter domination in either direction. Our contribution is an explicit cyclic defining set realization, whereas the matrix-product papers establish purity for broader noncyclic families.  \\

\hline
\end{tabular}
\end{table}

When
$\delta=2$, our consecutive zero construction Theorem \ref{thm:basic construction} gives
$
\bigl[\!\bigl[u(r+1),u(r-1)-2(\ell-1),\ell+1\bigr]\!\bigr]_q
$
under $u(r+1)\mid(q-1)$ and $2\leq\ell<r$. These are exactly the parameters
of the cyclic family in \cite[Theorem~14]{LuoEtAl2025}, but their construction
requires $u+2\ell<r+2$. Similarly, our one-zero extension construction i.e., Theorem \ref{thm:one zero extension} for $\delta=2$
gives
$
\bigl[\!\bigl[u(r+1),u(r-1)-2,3\bigr]\!\bigr]_q
$
without the conditions $u+2<r$ and
$\gcd(u,q-1)\mid 2(q-1)/(r+1)$ imposed in
\cite[Theorem~15]{LuoEtAl2025}. Thus, even where the parameter formulas
coincide, the present results \textit{enlarge} the admissible parameter region.

The comparison with \cite{RajpurohitBhaintwal2026} is complementary rather than one-sided. Their first family has global distance $r+1$, whereas our
general Euclidean construction Theorem \ref{thm:basic construction} reaches the prescribed distance at most $r$, but optimal with respect to Singleton-type bound \eqref{singleton_qlrc}. On the other hand, their optimal unbounded length
$(r,\delta)$-LRC family has distance $\delta$, whereas our unbounded length constructions Theorem \ref{thm:one zero extension}, \ref{thm:two-zero extension}, \ref{thm:general construction} give distances strictly larger than $\delta$ while
retaining cyclicity and Singleton optimality.

The closest comparison in the $q^2+1$ regime is with
\cite[Theorem 30, 34A]{GalindoHernandoMatsumoto2026BCH}. For the same length
$n$ and locality $(r,\delta)$, the BCH construction gives $
\bigl[\!\bigl[n,k,\delta\bigr]\!\bigr]_q,
$
whereas the present signed-residue constructions in Subsection \ref{subsec:q2-plus-one} give higher-distance points
of the form $\bigl[\!\bigl[n,k',d\bigr]\!\bigr]_q,\ d>\delta.
$
For example, with $q=4$, $u=3$, $r=15$, and $\delta=3$, the BCH
family gives $\bigl[\!\bigl[51,39,3\bigr]\!\bigr]_4$, while our construction by Theorem \ref{thm:q2+1 2delta construction}
gives $\bigl[\!\bigl[51,33,6\bigr]\!\bigr]_4$. These are not strict parameter
improvements because the increase in distance is accompanied by the dimension
loss required by the Singleton-type bound. Rather, they are new optimal points
on the rate--distance trade-off for the same length and local recovery
parameters.

The main novelty of the present work can therefore be stated as follows. First, we develop a cyclic defining set framework for both Euclidean
and Hermitian dual containment and for general $(r,\delta)$, whereas the
cyclic CSS families in \cite{LuoEtAl2025} are restricted to $\delta=2$. Second, it provides multiple
distance extensions and not just the baseline distance $\delta$ for unbounded length code. Third, the signed-residue construction for
$r+\delta-1\mid(q^2+1)$ yields explicit Hermitian dual-containing cyclic families with distances up to $2\delta$. These claims distinguish the present
results from the BCH, homothetic-BCH, and Euclidean dual-containing cyclic families considered in
the related literature.

\section{Conclusion}\label{sec:conclusion}
We developed a defining set framework constructing dual-containing cyclic
$(r,\delta)$-LRCs. A periodic set of zeros guarantees $(r,\delta)$-locality, while carefully chosen additional zeros simultaneously
control the minimum distance and preserve dual containment.

For the Euclidean case, the construction yields dual-containing optimal cyclic
$(r,\delta)$-LRCs over $\mathbb{F}_q$,  including a family with minimum distance $\ell+1$ and further families whose minimum distances range from $\delta+1$ up to $2\delta$. For the Hermitian case with
$r+\delta-1\mid(q^2-1)$, analogous families are obtained over
$\mathbb{F}_{q^2}$. The case $r+\delta-1\mid(q^2+1)$ requires a different
geometry of defining zeros: for odd $n$ and odd $\delta$, symmetric pairs of
odd and even residues are used to exploit the congruence
$q^2\equiv-1\pmod{r+\delta-1}$. This produces further optimal families with
minimum distances $\ell+2,\ \delta+2,\ 2\delta-2,\ 2\delta$. Through the CSS and Hermitian constructions, the classical codes give pure quantum
cyclic $(r,\delta)$-LRCs over $\mathbb{F}_q$ attaining the quantum Singleton-type bound.

Several interesting questions remain open. The divisibility and coprimality hypotheses imposed in our constructions could also be relaxed by allowing
larger cyclotomic cosets or by passing to constacyclic and quasi-cyclic
defining set constructions. In the case $r+\delta-1|(q^2+1)$, the construction problem is not yet fully resolved, particularly when the code length $n$ is even or when $\delta$ is even. Another important direction is to investigate impure quantum $(r,\delta)$-LRCs and determine whether such codes can attain or exceed bounds derived under purity assumptions.

  \section*{Acknowledgments}
A. Agrawal gratefully acknowledges financial support from the
Anusandhan National Research Foundation (ANRF), Government of India, under the National Post Doctoral Fellowship (N-PDF) scheme,
Sanction order No. \textnormal{PDF/2025/000925}. S. S. Garani is supported by an ANRF grant CRG/2022/03469 as part of this work.
  
  \section*{Statements and Declarations}
  \textbf{Author’s Contribution}: All authors contributed equally to this work.\\
  \textbf{Conflict of interest:} The authors have no conflicts of interest in this paper.
 
\footnotesize
\bibliographystyle{abbrv}
\bibliography{references}

\end{document}